\documentclass[12pt]{article}
\usepackage[usenames,dvipsnames,svgnames,table]{xcolor} 
\usepackage{kotex}
\usepackage{graphicx}
\usepackage{amssymb, amsmath, mathtools}
\usepackage{verbatim}
\usepackage[round]{natbib}
\usepackage[colorlinks=true,allcolors=black]{hyperref}
\usepackage[affil-it]{authblk}
\usepackage{mathrsfs}
\usepackage{here}
\usepackage{makecell}
\usepackage{tikz}
\usepackage[shortlabels]{enumitem}
\usepackage{yfonts}
\usepackage{comment}
\usepackage{longtable}
\usepackage{bm, color, amsthm}
\usetikzlibrary{circuits.logic.US,circuits.logic.IEC,fit}
\usepackage{multirow}
\usepackage{float} 

\usepackage{bibunits}
\defaultbibliographystyle{abbrvnat} 
\defaultbibliography{refs}

\usepackage[noabbrev,capitalise]{cleveref}
\creflabelformat{equation}{#2\textup{#1}#3}

\usepackage{color, colortbl}
\usepackage{booktabs}
\usepackage{geometry}
\usepackage{array}
\usepackage{caption}
\usepackage{subcaption}
\usepackage{dsfont}
\definecolor{Yellow}{rgb}{1,1,0.6}
\usepackage{setspace}
\usepackage{aliascnt}

\usepackage{algorithm}
\usepackage{algpseudocode}
\usepackage{adjustbox}
\usepackage{lineno} 

\algrenewcommand\algorithmicrequire{\textbf{Input:}}
\algrenewcommand\algorithmicensure{\textbf{Output:}}

\newcommand{\jc}[1]{\textcolor{violet}{#1}}

\newcommand{\jeff}[1]{\textcolor{magenta}{#1}}

\hypersetup{
 colorlinks,
 citecolor=Blue,
 linkcolor=Blue,
 urlcolor=Blue}

\newtheorem{theorem}{Theorem}[section]
\newtheorem{lemma}[theorem]{Lemma}
\newtheorem{proposition}[theorem]{Proposition}
\newtheorem{corollary}[theorem]{Corollary}

\newaliascnt{condition}{theorem}
\newtheorem{condition}[condition]{Condition}
\aliascntresetthe{condition}
\crefname{condition}{Condition}{Conditions}
\Crefname{condition}{Condition}{Conditions}

\theoremstyle{definition}
\newtheorem{definition}{Definition}

\DeclareMathOperator*{\argmin}{arg\,min}
\newcommand{\E}{\mathbb{E}}

\newcommand{\V}{{\mathrm{Var}}}
\newcommand{\C}{{\mathrm{Cov}}}

\newcommand{\muhat}{{\hat{\mu}}}
\newcommand{\diag}{\mbox{diag}}

\newcommand{\bea}{\begin{eqnarray*}}
	\newcommand{\eea}{\end{eqnarray*}}
\newcommand{\bean}{\begin{eqnarray}}
	\newcommand{\eean}{\end{eqnarray}}
\newcommand{\benu}{\begin{enumerate}}
	\newcommand{\eenu}{\end{enumerate}}

\newcommand{\bbR}{\mathbb{R}}
\newcommand{\bbN}{\mathbb{N}}

\newcommand{\cF}{\mathcal{F}}

\newcommand{\cM}{\mathcal{M}}
\newcommand{\cN}{\mathcal{N}}

\newcommand{\cX}{\mathcal{X}}

\providecommand{\keywords}[1]{\textbf{\textit{Keywords---}} #1}

\newcommand*\oline[1]{%
  \,\vbox{%
    \hrule height 0.5pt
    \kern0.25ex
    \hbox{%
      \kern-0.1em
      \ifmmode#1\else\ensuremath{#1}\fi
      \kern0em
    }
  }
}

\newcommand{\branch}[4]{
\left\{
	\begin{array}{ll}
		#1  & \mbox{if } #2 \\
		#3 & \mbox{if } #4
	\end{array}
\right.
}

\title{Robust model selection using likelihood as data}

\author[1]{Jongwoo Choi\footnote{Corresponding author, E-mail: jongwoo.choi@uconn.edu}}
\author[1]{Neil A. Spencer}
\author[2]{Jeffrey W. Miller}
\affil[1]{Department of Statistics, University of Connecticut}
\affil[2]{Department of Biostatistics, Harvard T.H. Chan School of Public Health}

\date{} 

\begin{document}
\begin{bibunit}

\maketitle

\begin{abstract}
Model selection is a central task in statistics, but standard methods are not robust in misspecified settings where the true data-generating process (DGP) is not in the set of candidate models.
The key limitation is that existing methods---including information criteria and Bayesian posteriors---do not quantify uncertainty about how well each candidate model approximates the true DGP.
In this paper, we introduce a novel approach to model selection based on modeling the likelihood values themselves.
Specifically, given $K$ candidate models and $n$ observations, we view the $n\times K$ matrix of negative log-likelihood values as a random data matrix and observe that the expectation of each row is equal to the vector of Kullback--Leibler divergences between the $K$ models and the true DGP, up to an additive constant.  We use a multivariate normal model to estimate and quantify uncertainty in this expectation, providing calibrated inferences for robust model selection under misspecification.
The procedure is easy to compute, interpretable, and comes with theoretical guarantees, including consistency.
\end{abstract}


\keywords{
Bayesian inference; Likelihood as Data; Model misspecification; Robustness
}
\newpage

\section{Introduction}\label{sec:intro}

A good statistical model should be flexible enough to capture the essential structure of the data, yet simple enough to interpret, communicate, and use in practice. Model selection helps balance these goals by choosing among competing models of varying complexity and explanatory power \citep{claeskens2008model}. In most applications, the true data-generating process (DGP) is unknown and all of the candidate models are misspecified \citep{bernardo2000bayesian}. 
In such cases, it is desirable to find a model that is as close as possible to the true DGP while remaining sufficiently simple to interpret and use.

Selection criteria such as the Akaike Information Criterion (AIC; \citealp{akaike1974new}) and Bayesian Information Criterion (BIC; \citealp{schwarz1978estimating}) are based on the log-likelihood, which can be viewed as estimating the relative closeness of each model to the true DGP.  However, they do not quantify uncertainty in these estimates or in the choice of model.  Meanwhile, Bayesian posteriors quantify uncertainty in the choice of model \citep{wasserman2000bayesian}, but posterior probabilities do not measure how far each model is from the truth; a model may receive a high probability simply because the alternatives are worse.   

When none of the candidate models are correct, these information criteria and Bayesian posteriors asymptotically concentrate on the model with the smallest Kullback--Leibler (KL) divergence from the true DGP, even if there is a much simpler model that is nearly as good \citep{berk1966limiting}.  Further, these methods tend to be highly unstable when more than one model attains the minimal KL divergence, even approximately \citep{huggins2023reproducible}.  Due to these issues, the standard leading methods do not provide well-calibrated inferences about the utility of each model in a way that is robust to misspecification.

\begin{figure}
    \centering
    \includegraphics[width=0.99\linewidth]{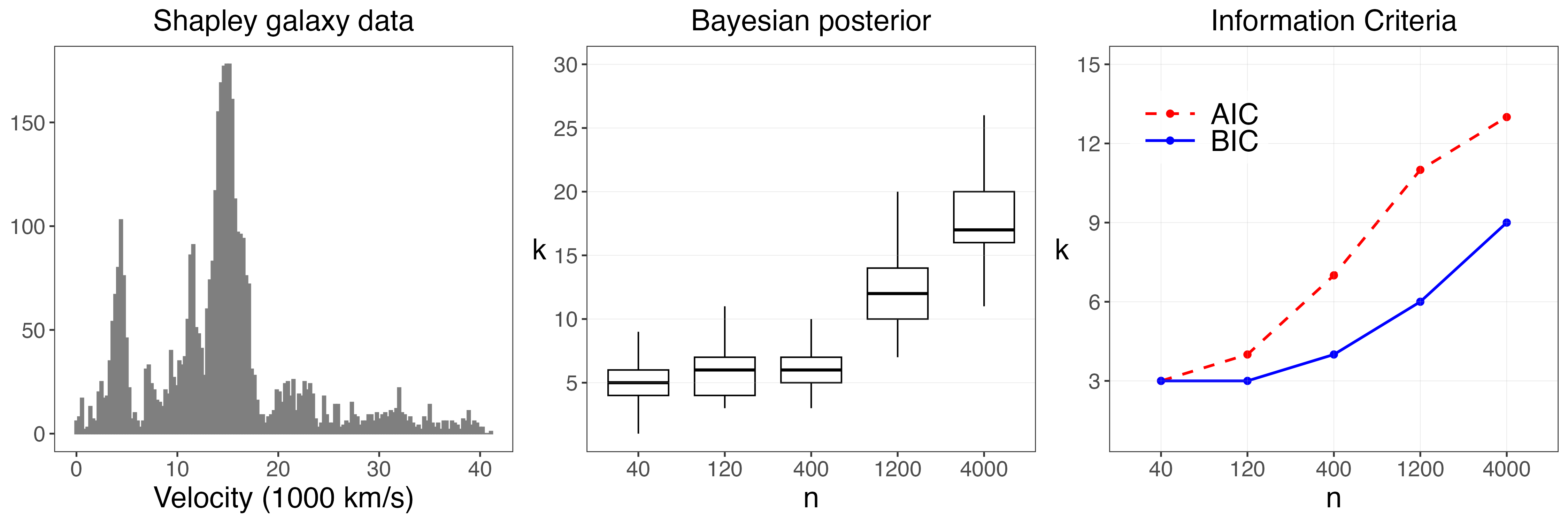}
    \captionsetup{font=small}
    \caption{Limitations of existing model selection methods. 
    \emph{(Left)} Histogram of the Shapley galaxy velocities ($\times 10^3$ km/s). We trim the top $0.5\%$ of data points to remove the extremely long right tail extending beyond $45{,}000$ km/s.
    \emph{(Middle)} Boxplots summarizing draws from the Bayesian posterior on $k$ for a Gaussian mixture model with $k$ components, for sample sizes $n \in \{40, 120, 400, 1200, 4000\}$. As $n$ grows, the posterior shifts toward larger $k$, a typical phenomenon under misspecification.
    \emph{(Right)} Values of $k$ selected by AIC (red) and BIC (blue) for each $n$. 
    These results illustrate lack of robustness and inability to quantify uncertainty in how well a $k$-component Gaussian mixture can approximate the true DGP.}
    \label{fig:intro_example}
\end{figure}

To illustrate these limitations, consider the problem of choosing the number of components $k$ in a finite mixture of Gaussians. 
Figure \ref{fig:intro_example} (left) shows a histogram of radial velocity measurements of $4215$ galaxies in the Shapley supercluster, a large concentration of gravitationally interacting galaxies \citep{drinkwater2004large}.  
We apply AIC, BIC, and a Bayesian model with a prior on the number of components to increasing subsets of this data set.
In Figure \ref{fig:intro_example} (middle), we see that the posterior distribution concentrates on larger and larger values of $k$ as the sample size $n$ increases, and does not provide any insight into how close each model is to the true DGP. Similarly,  Figure \ref{fig:intro_example} (right) shows that AIC and BIC select increasing values of $k$ as $n$ grows, and furthermore, they do not provide any uncertainty quantification in the choice of $k$.
See \cref{supp:mix_details} for details.



In this paper, we introduce a new framework for model selection that overcomes these issues by jointly quantifying uncertainty in the expected log-likelihoods of the candidate models.
More precisely, let $Z_{i k}$ denote the negative log-likelihood for model $k$ on observation $i$, assuming for the moment that there are no unknown parameters.
Then $\mu_k = E(Z_{i k})$ equals the KL divergence between the true DGP and model $k$, plus the (unknown) entropy of the true DGP.
We propose viewing $Z_{i k}$ as data---hence the term ``Likelihood as Data''---and using a Bayesian multivariate normal model to compute the posterior distribution of $\mu = (\mu_1,\ldots,\mu_K)$.
This makes it straightforward to perform Bayesian inference for a range of questions, including not only which model is closest to the true DGP, but how much closer a given model is than another, and which is the simplest model within a specified tolerance of being closest to the true DGP.


A key feature of our approach is that it is fully robust to misspecification of the candidate models since it is based on inferring the $\mu_k$'s, which are well defined and interpretable without requiring any assumptions of model correctness. 
We prove that the method is consistent and remains stable under ties -- that is, when multiple models are the same distance from the true DGP.
We propose a procedure for finding the simplest class of models within a specified tolerance of being closest to the true DGP, and choosing the best-fitting model (or models) within this class.
In a range of simulations and real-data applications, we find that the proposed approach yields reliable and informative inferences for model selection in the presence of misspecification.

The paper is organized as follows. \cref{sec:method} introduces our proposed methodology, and \cref{sec:computation} provides algorithms for computing the method. In \cref{sec:theory}, we provide theoretical results. \cref{sec:examples} contains empirical demonstrations on Gaussian mixture models, sparse multivariate normal models, microbial ecology models, and population structure admixture models in genetics. \cref{sec:conclusion} concludes with a brief discussion.

\section{Methodology}\label{sec:method}

In this section, we introduce the \emph{Likelihood as Data} framework (\cref{sec:method:lad}), describe our robust model selection criterion (\cref{sec:method:robust}), discuss interpretation of the selection threshold (\cref{sec:method:delta}), extend the method to parametrized models (\cref{sec:method:param}), and provide an adjustment for overfitting bias in parametrized models (\cref{sec:method:param:overfitting}).

Suppose $X_1,\ldots, X_n \in \cX$ are independent and identically distributed (i.i.d.)\ observations from some unknown distribution $P_0$, where $\cX$ denotes the sample space. Assume $P_0$ admits a density $f_0$ with respect to a sigma-finite measure $\lambda$, for example, Lebesgue measure for continuous observations or counting measure for discrete observations. 

Consider a finite collection of candidate models $P_1, \ldots, P_K$ with associated densities $f_1, \ldots, f_K$ (also with respect to $\lambda$), respectively. To clearly explain the main idea of our proposed method, we assume for now that there are no unknown parameters to be estimated in any of the candidate models;  however, in \cref{sec:method:param}, we extend the method to handle unknown parameters.  
We do not assume $P_0$ coincides with one of the candidate models, that is, all $K$ models may be misspecified.

\subsection{Likelihood as Data (LaD) framework}
\label{sec:method:lad}

Let $Z_{i k} := -\log(f_k(X_i))$ be the negative log-likelihood value for each observation $i\in\{1,\ldots,n\}$ and each model $k\in\{1,\ldots,K\}$, and collect these values into a matrix $Z = [Z_{i k}]\in\mathbb{R}^{n\times K}$. 
The key idea is to treat each vector $Z_i := (Z_{i 1},\dots,Z_{i K})^\texttt{T} \in \mathbb{R}^K$ as a data point, and analyze the resulting data set, $Z_1,\ldots,Z_n$. In this way, we regard the likelihood values as data in their own right, which is the foundation of the LaD approach.

As with any data set, it is natural to estimate the mean $\mu^0 := \mathbb{E}(Z_1) \in\mathbb{R}^K$ and quantify uncertainty in it. 
In the LaD setting, $\mu^0$ has a special interpretation since
\[
    \mu^0_k = \mathbb{E}(Z_{i k}) = \mathbb{E}(-\log(f_k(X_i))) = D(f_0 \| f_k) + H(f_0)
\]
where $H(f_0) := -\int_{\cX} f_0 (x) \log f_0(x) \lambda (dx)$ is the entropy of the true DGP and $D(f_0 \| f_k)$ is the Kullback--Leibler (KL) divergence between the true DGP and model $k$,
\begin{equation}\label{eq:kl}
    D(f_0 \| f_k) := \int_\cX f_0(x) \Big(\log \frac{f_0(x)}{f_k(x)}\Big) \lambda(dx).
\end{equation}
Thus, $\mu^0 = (\mu^0_1,\ldots,\mu^0_K)^\mathtt{T}$ is a vector containing the KL divergences between the true DGP and the $K$ models, up to a common additive constant $H(f_0)$.

Estimation and inference for $\mu^0$ are straightforward.
Let $\oline{Z}_n := \frac{1}{n}\sum_{i=1}^n Z_i$ and observe that since $Z_1,\ldots,Z_n$ are i.i.d., we have
\begin{equation}
\label{eq:lln_clt}
\begin{split}
     \oline{Z}_n &\xrightarrow[n\to\infty]{\mathrm{a.s.}} \mu^0, \\
    \sqrt{n}(\oline{Z}_n - \mu^0) &\xrightarrow[n\to\infty]{\mathrm{d}} \mathcal{N}(0, \Sigma^0)
\end{split}
\end{equation}
where $\Sigma^0 = \mathrm{\C}(Z_1) \in \mathbb{R}^{K \times K}$, 
by the strong law of large numbers and the central limit theorem (CLT), assuming $\Sigma^0$ exists and is positive definite (\cref{cond1}).
Thus, one could estimate $\mu^0$ by the sample mean $\oline{Z}_n$ and construct Wald-type confidence sets for $\mu^0$ based on an estimate of $\Sigma^0$ such as the sample covariance matrix.
Instead, however, we use a Bayesian multivariate normal model since it facilitates inference for many questions of interest and tends to be more stable at small sample sizes.

Specifically, we employ a Bayesian model where
\begin{equation}
\begin{split}
\label{eq:bayesian-model}
    Z_1,\ldots,Z_n\mid \mu,\Sigma\; &\overset{\mathrm{iid}}{\sim}\; \mathcal{N}(\mu,\Sigma),\\
    (\mu,\Sigma) &\sim \pi(\mu,\Sigma),
\end{split}
\end{equation}
and $\pi(\mu,\Sigma)$ is a prior on the unknown mean $\mu$ and covariance matrix $\Sigma$; see \cref{sec:method:bayesinf}.
We write $\mu$, $\Sigma$ for the parameters of this Bayesian model, and $\mu^0$, $\Sigma^0$ for the true values.
Uncertainty quantification for $\mu$ enables statistical inference for the relative distances from the true DGP. Note that looking only at marginal uncertainties in each entry $\mu_k$ alone is not particularly useful, for at least two reasons: first, $H(f_0)$ is unknown and can be very difficult to estimate; second, when the log-likelihood values for models $j$ and $k$ are strongly correlated, it can happen that there is very little uncertainty in the difference $\mu_j - \mu_k$ even if there is high uncertainty in $\mu_j$ and $\mu_k$. Hence, multivariate inference for $\mu = (\mu_1,\ldots,\mu_K)^\texttt{T}$ is far more informative than univariate inference for each $\mu_k$.

The LaD approach does not require any assumptions regarding the form of the models or the true DGP, except for the moment conditions needed for the CLT to apply. 
While the Bayesian model does assume the negative log-likelihoods $Z_1,\ldots,Z_n$ are Gaussian, which is unlikely to hold, this discrepancy has minimal impact on inferences for $\mu$ since the posterior only depends on  $Z_1,\ldots,Z_n$ through the sufficient statistics, namely the sample mean and sample covariance of the $Z_i$'s.
Consequently, we find that LaD provides correctly calibrated inferences for model selection, despite the incorrectness of the Gaussian model.


\subsection{Robust model selection using LaD}
\label{sec:method:robust}


Using the LaD framework in \cref{sec:method:lad}, one can perform frequentist or Bayesian inference for various hypotheses, such as whether model $k$ minimizes the KL divergence to the true DGP (that is, $\mu^0_k = \min_{j} \mu^0_{j}$) or whether model $k$ is closer than model $j$ to the true DGP (that is, $\mu^0_k < \mu^0_{j}$).
To perform robust model selection, we aim to identify the simplest model that is ``close enough'' to having minimal KL divergence.
Specifically, we consider any model whose KL divergence is within a tolerance  $\delta > 0$ of the minimum to be sufficiently close.
This enables more parsimonious models to be favored when their fit is roughly comparable to that of more complex models. 

\begin{definition}
\label{def:robust}
Given $\delta > 0$, under the setup above, we say model $k$ is \emph{$\delta$-optimal} if 
\begin{equation}
\label{eq:delta-optimal}
    D(f_0 \| f_k) \leq \min_j D(f_0 \| f_j) + \delta.
\end{equation}
\end{definition}


Note that \cref{eq:delta-optimal} is equivalent to $\mu^0_k \leq \min_j \mu^0_j + \delta$.
While our focus in this paper is on KL divergence, \cref{def:robust} extends naturally to other measures of discrepancy between distributions, such as Wasserstein distance or Bregman divergence. 

\subsubsection{Best minimal-complexity $\delta$-optimal model}

Let $M_\delta(\mu)$ denote the set of $\delta$-optimal models as a function of $\mu$, that is,
\begin{equation}
\label{eq:M}
    M_\delta(\mu) := \Big\{k: \mu_k \leq \min_j \mu_j + \delta\Big\}.
\end{equation}
Suppose $c:\{1,\ldots,K\}\to [0,\infty)$ assigns a complexity $c(k)$ to model $k$, with smaller values corresponding to simpler models.
We refer to a set of models with the same $c(k)$ as a complexity class.
Among $\delta$-optimal models, the minimal complexity attained is
\begin{equation}
\label{eq:min_c}
    c_\delta^*(\mu) := \min \{c(k): k \in M_\delta(\mu) \}.
\end{equation}
Since there may be multiple $\delta$-optimal models that attain this minimal complexity $c_\delta^*(\mu)$, among these we prefer the ones with best fit, that is, with minimal KL from the true DGP.
Thus, we define the \emph{best minimal-complexity $\delta$-optimal model(s)} as 
\begin{equation}
\label{eq:W}
    M^*_\delta(\mu) := \argmin_{k \,:\, c(k) = c_{\delta}^*(\mu)} \mu_k.
\end{equation}

The set $M^*_\delta(\mu)$ contains those models that (i) have the lowest possible complexity  attainable by a $\delta$-optimal model, and (ii) have the smallest KL divergence within that complexity class. Typically, $M^*_\delta(\mu)$ will consist of a single model, but it can contain multiple models if there are ties $\mu_j = \mu_k$. For robust model selection, our goal is to quantify uncertainty about which model or models belong to $M^*_\delta(\mu^0)$ for the true $\mu^0$.




\subsubsection{Reproducible inference for $M^*_\delta(\mu)$}
\label{sec:method:sol}





A natural way to infer the true set $M^*_\delta(\mu^0)$ would be to consider the posterior probability that each model $k$ is in $M^*_\delta(\mu)$, that is, $P(k\in M^*_\delta(\mu) \mid Z_{1:n})$, where $\mu\mid Z_{1:n}$ is distributed according to the posterior under a Bayesian model as in \cref{eq:bayesian-model}.
However, if we use a continuous prior on $\mu$ and the true $M^*_\delta(\mu^0)$ contains multiple models, then $P(k\in M^*_\delta(\mu) \mid Z_{1:n})$ is not an accurate representation of uncertainty. For example, when there is correlation among models, some models that belong to $M^*_\delta(\mu^0)$ may receive arbitrarily small posterior probability; see \cref{sec:theory}.  
One solution would be to use a prior that allows $\mu_j = \mu_k$ with positive probability for any $j,k$, but this would make posterior inference more computationally burdensome and complicate theoretical analysis.  




Instead, we propose a computationally and theoretically attractive alternative based on a smooth selection function that relaxes the hard minimum in \cref{eq:W}. 
For each model $k$ and sample size $n$, we define a within-class soft-selection score,
\begin{align}
\label{eq:softmin}
    r_{n k}(\mu) := \exp\!\big(-\alpha_n (\mu_k - \mu_{\min,c(k)}) \big),
\end{align}
where $\mu_{\min,c(k)} = \min_{j \,:\, c(j) = c(k)} \mu_j$ and $\alpha_n >0$ is a temperature parameter such that $\alpha_n\to\infty$ and $\alpha_n = o(\sqrt{n})$.
In all of the examples in this paper, we use $\alpha_n = n^{0.45}$.
Note that $r_{n k}(\mu)\in (0,1]$, with $r_{n k}(\mu) = 1$ for models $k$ having minimal $\mu_k$ within their complexity class $c(k)$.
We now introduce our proposed model selection score.

\begin{definition}
\label{def:score}
The \emph{smooth LaD criterion (SLC)} score is 
\begin{equation}\label{eq:pkn2}
    w_{\delta}(k \mid Z_{1:n}) = P\big(c_\delta^*(\mu) = c(k) \mid Z_{1:n} \big) \, 
         \E\big( r_{n k}(\mu) \mid Z_{1:n} \big)
\end{equation}
where $\mu\mid Z_{1:n}$ is distributed according to posterior of a model as in \cref{eq:bayesian-model}.
\end{definition}

An estimator of $M_\delta^*(\mu^0)$ can be constructed as $\hat{M}^*_\delta = \{k : w_{\delta}(k \mid Z_{1:n}) > \omega\}$, where $\omega$ is a threshold between $0$ and $1$. 
In \cref{sec:theory}, we justify the SLC score by showing that it is asymptotically consistent, that is, $w_{\delta}(k \mid Z_{1:n})$ concentrates on the true target set $M^*_\delta(\mu^0)$ as $n$ grows. 
The SLC score $w_{\delta}(k \mid Z_{1:n})$ is defined as the product of (i) a between-class selection factor and (ii) a within-class selection factor. The first factor represents the probability that model $k$ attains the minimal complexity among all $\delta$-optimal models. The second factor quantifies model $k$'s relative performance within its complexity class. 
We compute each of the two factors with a Monte Carlo approximation using posterior samples from our Bayesian model; see \cref{alg:proc} for details.

For interpretation, if there is only one model per complexity class, then $r_{n k}(\mu) = 1$ for all $k$, so \cref{eq:pkn2} simplifies to $w_{\delta}(k \mid Z_{1:n}) = P\big(\argmin_{j \in M_\delta(\mu)} c(j) = \{k\} \mid Z_{1:n} \big)$, because $c_\delta^*(\mu) = c(k)$ if and only if $\argmin_{j \in M_\delta(\mu)} c(j) = \{k\}$.
More generally, whenever $|M^*_\delta(\mu^0)| = 1$, the SLC score $w_{\delta}(k \mid Z_{1:n})$ behaves similarly to the posterior probability $P(k \in M^*_\delta(\mu)\mid Z_{1:n})$.
Meanwhile, when $|M^*_\delta(\mu^0)| > 1$, the SLC score $w_{\delta}(k \mid Z_{1:n})$ remains stable since it does not concentrate on a single model.  This is because, under the posterior, $\mu_k - \mu_{\min,c(k)} = O_p(n^{-1/2})$ for the best fitting models $k$ with complexity $c(k)$, and we choose $\alpha_n = o(\sqrt{n})$, which makes $\E\big(r_{n k}(\mu) \mid Z_{1:n} \big) \to 1$ as $n\to\infty$.

\subsection{Interpreting the tolerance $\delta$}
\label{sec:method:delta}

The tolerance $\delta$ makes the selection procedure robust to model misspecification, since it allows one to select models that have a KL divergence within $\delta$ of being minimal.  Smaller values of $\delta$ enforce stricter optimality but may only be satisfied by complex models, whereas larger values of $\delta$ encourage parsimony at the cost of fit.
Rather than choosing one value of $\delta$, we generally recommend computing the SLC score for a range of $\delta$ values, as demonstrated in the examples in \cref{sec:examples}.
Since the units of $\delta$ may not be intuitively clear, in this section we provide a technique for putting $\delta$ on an interpretable scale.

Motivated by an idea presented by \citet{pescador2025adjusting} (Section 4.3), we augment the set of candidate models with  a deliberately misspecified ``noise'' model with density $f_{\mathrm{noise}}$, chosen to represent a baseline that does not capture any meaningful signal in the data. 
Under the noise model, the expected negative log-likelihood is 
\[
    \mu^0_{\mathrm{noise}} := \E (-\log f_{\mathrm{noise}}(X_i)) = D(f_0 \| f_{\mathrm{noise}}) + H(f_0).
\]
This noise model should satisfy: (i) full support over $\cX$, so that $D(f_0 \| f_{\mathrm{noise}}) < \infty$, (ii) stability under estimation, so that $\mu^0_{\mathrm{noise}}$ can be estimated with little uncertainty, and (iii) minimal signal (for example, an intercept-only or null model).
Now, we define
\begin{equation}\label{eq:delta-calibration}
    \tau := \delta / (\mu^0_{\mathrm{noise}} - \mu^0_\mathrm{min})
\end{equation}
where $\mu^0_\mathrm{min} = \min_k \mu^0_k$.
Note that, equivalently, $\tau = \delta / (D_{\mathrm{noise}} - D_{\mathrm{min}})$, where $D_{\mathrm{noise}} := D(f_0 \| f_{\mathrm{noise}})$ and $D_{\mathrm{min}} := \min_k D (f_0 \| f_k )$.
Thus, $\tau$ is a rescaled version of $\delta$ that puts it in interpretable units: Specifically, $\tau$ represents the proportion of the explainable information that we are willing to sacrifice in return for model simplicity.  A value of $\tau>1$ signifies that $\delta$ is excessively large, such that even it tolerates the noise model.

By the definition of the $\delta$-optimal set $M_{\delta}(\mu^0)$ (\cref{eq:M}), for any model $k \in M_{\delta}(\mu^0)$, 
\[
D(f_0 \| f_k) \le D_{\mathrm{min}} + \tau \, (D_{\mathrm{noise}} - D_{\mathrm{min}}).
\]
Rearranging this expression, we obtain
\[
\frac{D_{\mathrm{noise}} - D(f_0 \| f_k)}{D_{\mathrm{noise}} - D_{\mathrm{min}}} \ge 1 - \tau.
\]
This yields a natural interpretation: A model
is $\delta$-optimal if it recovers at least $100(1-\tau)\%$ of the total possible improvement from the noise model to the best available model, in terms of KL divergence.
Thus, the choice of $\delta$ becomes interpretable as a relative tolerance expressed on this information-theoretic scale.

Suitable choices for the noise model depend on the context. For example, for multivariate Gaussian models, we use $f_{\mathrm{noise}} = \cN(0, I)$. For generalized linear models, we use an intercept-only model (no covariates). For mixture models, a suitable noise baseline could be a uniform density over the observed data range or a one-component mixture.

\subsection{Extending to unknown parameters and other loss functions}
\label{sec:method:param}

Up to now, we have assumed that each model consists of a single distribution with known parameters, but in practice, a model will usually have unknown parameters that need to be estimated.  We now show how to handle such models in our framework.  

Suppose model $k$ is a parametric family with densities $\{f_k(x;\, \theta_k): \, \theta_k \in \Theta_k\}$, where $\Theta_k \subseteq \mathbb{R}^{d_k}$ and $d_k$ is the dimension of the parameter vector $\theta_k$ for model $k$. 
Furthermore, we generalize from the negative log-likelihood to an arbitrary loss function $\ell_k$  for model $k$.  As before, suppose $X,X_1,\ldots,X_n\in\mathcal{X}$ i.i.d.\ $\sim P_0$, where $P_0$ has density $f_0$ with respect to $\lambda$.  
For each $k \in \{1, \ldots, K\}$, let $\ell_k: \mathcal{X} \times \Theta_k \to \mathbb{R}$ be measurable, and define 
\begin{align}
\label{eq:thetakstar}
    \theta_k^* \in \argmin_{\theta_k \in \Theta_k} \E\big(\ell_k(X; \theta_k)\big).
\end{align}
Define the random vector $\ell(X; \theta^*) := [\ell_1(X; \theta_1^*), \ldots, \ell_K(X; \theta_K^*) ]^\texttt{T} \in \mathbb{R}^K$, and set 
\begin{equation}
\label{eq:mu_cov}
\begin{split}
    \mu^0 &:= \E\big( \ell(X; \theta^*) \big) \in \mathbb{R}^K, \\ 
    \Sigma^0 &:= \C\big( \ell(X; \theta^*) \big) \in \mathbb{R}^{K \times K}.
\end{split}
\end{equation}
In \cref{eq:thetakstar,eq:mu_cov}, the expectations are with respect to $X\sim P_0$.

\begin{condition}
\label{cond1} 
Assume:
    \begin{itemize}
        \item[\textup{(i)}] $\E\big(\ell_k(X; \theta_k^*)^2\big) < \infty$, that is, $\ell(X; \theta^*)$ has finite second moments, and
        \item[\textup{(ii)}] $\Sigma^0$ is a positive definite matrix.  
    \end{itemize}
\end{condition}


To estimate the parameter vector $\theta_k$ for each model $k$, we use the M-estimator, 
\[
    \hat{\theta}_k := \argmin_{\theta_k \in \Theta_k} \Big( \frac{1}{n} \sum_{i=1}^n \ell_k(X_i; \theta_k) \Big).
\]
In the examples in this article, we focus on the negative log-likelihood loss, in which case the M-estimator is simply the maximum likelihood estimator (MLE), but the method extends to any loss leading to a consistent M-estimator.
Plugging in the estimators $\hat{\theta}_1, \ldots, \hat{\theta}_K$ for models $1,\ldots,K$, respectively, the vector of loss values for $X_i$ becomes
\[
    Z_i(\hat\theta) := \ell(X_i; \hat{\theta}) = [\ell_1(X_i; \hat{\theta}_1), \ldots, \ell_K(X_i; \hat{\theta}_K)]^{\texttt{T}} \in \mathbb{R}^K.
\]
Then, the sample mean and covariance based on $Z_1(\hat\theta),\ldots,Z_n(\hat\theta)\in\mathbb{R}^K$ are
\begin{equation}
\label{eq:sample_mean_covariance}
\begin{split}
    \oline{Z}_n(\hat\theta) &:= \frac{1}{n} \sum_{i=1}^n \ell(X_i; \hat{\theta}),\\
    S_n(\hat\theta) &:= \frac{1}{n} \sum_{i=1}^n \big(\ell(X_i; \hat{\theta}) - \oline{Z}_n(\hat\theta)\big)\big(\ell(X_i; \hat{\theta}) - \oline{Z}_n(\hat\theta)\big)^\texttt{T}.
\end{split}
\end{equation}
Under regularity conditions (\cref{cond2}), including that $\hat{\theta}_k$ is a consistent estimator of $\theta_k^*$ for each $k$, the limits in \cref{eq:lln_clt} also hold for $\oline{Z}_n(\hat\theta)$ (see \cref{supp:theorem:param})
generalizing from the no-parameter case to the setting of estimated parameters.
Thus, frequentist inference for $\mu^0$ could again be used in this more general setting; but as mentioned in \cref{sec:method:lad}, we instead recommend a Bayesian approach (\cref{sec:computation}).

\subsection{Adjustment for overfitting bias}\label{sec:method:param:overfitting}

For large sample sizes $n$, plugging in a consistent estimator $\hat{\theta}_k$ as a proxy for $\theta^*$ leads to negligible error.  However, for smaller sample sizes, $\oline{Z}_n(\hat\theta) = \big(\frac{1}{n}\sum_{i=1}^n \ell_k(X_i; \hat\theta_k)\big)_{k=1}^K$ tends to be downward biased as an estimator of $\mu^0$, because the parameters are fit on the same data that is used to estimate $\mu^0$. To adjust for this overfitting, we propose a simple bias correction based on the asymptotic distribution of the likelihood-ratio test statistic. 

To derive the correction, assume the loss is the negative log-likelihood $\ell_k(x; \theta_k) = -\log f_k(x; \theta_k)$. Then the likelihood-ratio test statistic is
\[
    \Lambda_k = 2 \Big( \sum_{i=1}^n \ell_k(X_i; \theta_k^*) - \sum_{i=1}^n \ell_k(X_i; \hat\theta_k) \Big).
\] 
Rearranging and taking the expectation under $P_0$ yields that
\[
    \mu^0_k = \E \Big( \frac{1}{n} \sum_{i=1}^n \ell_k(X_i; \theta_k^*) \Big) = \E\Big(\frac{1}{n}\sum_{i=1}^n \ell_k(X_i; \hat\theta_k)\Big)  + \frac{1}{2n} \E\big(\Lambda_k \big).
\]
Hence, the bias of $\frac{1}{n}\sum_{i=1}^n \ell_k(X_i; \hat\theta_k)$  as an estimator of $\mu^0_k$ is $-(2 n)^{-1} \E(\Lambda_k)$. 

By Wilks' theorem, if model $k$ were correctly specified, then $\Lambda_k$ would be approximately $\chi^2$-distributed with $d_k = \mathrm{dim}(\Theta_k)$ degrees of freedom,
under regularity conditions.
Since the mean of $\chi^2(d_k)$ is $d_k$, this suggests using $\hat{\mu}_k^{\mathrm{bc}} := \frac{1}{n}\sum_{i=1}^n \ell_k(X_i; \hat\theta_k) + d_k/(2 n)$ as a bias-corrected estimator of $\mu^0_k$.  Since LaD operates on vectors $Z_1,\ldots,Z_n\in\mathbb{R}^K$, we apply this adjustment entrywise to define bias-corrected LaD vectors 
\begin{align}
\label{eq:bias-adjustment}
    Z^{\mathrm{bc}}_i(\hat\theta) := Z_i(\hat\theta) + \frac{d}{2n}
\end{align}
where $d = (d_1,\ldots,d_K)^\texttt{T} \in \mathbb{R}^K$.
In practice, we expect all of the models to be misspecified, so Wilks' theorem does not directly apply; nonetheless, empirically we find that the approximation is sufficiently good that the bias correction in \cref{eq:bias-adjustment} works well.

\section{Computation}
\label{sec:computation}

In this section, we describe our Bayesian model for the LaD framework (\cref{sec:method:bayesinf}) and provide a step-by-step algorithm for our proposed methodology (\cref{sec:method:algorithm}).

\subsection{Bayesian inference in the LaD framework}
\label{sec:method:bayesinf}

Let $Z_1,\ldots,Z_n\in\mathbb{R}^K$ denote the LaD vectors, $Z_i = (Z_{i 1},\ldots,Z_{i K})^\texttt{T}$; these may comprise negative log-likelihood values $Z_{i k} = -\log(f_k(X_i))$ (as in \cref{sec:method:lad}), loss values $Z_{i k} = \ell_k(X_i; \hat{\theta}_k)$ evaluated at estimated parameters $\hat{\theta}_k$ (as in \cref{sec:method:param}), or bias-corrected versions of such values $Z_{i k} = \ell_k(X_i; \hat{\theta}_k) + d_k/(2 n)$ (as in \cref{sec:method:param:overfitting}).
To perform Bayesian inference, we model $Z_1,\ldots,Z_n$ as i.i.d.\ $\sim\mathcal{N}(\mu,\Sigma)$ and place a Normal-Inverse-Wishart (NIW) prior on $(\mu,\Sigma)$. Specifically, the prior density of $(\mu, \Sigma)$ is
\[
\mathrm{NIW}(\mu, \Sigma \mid \mu_0, \lambda_0, \Psi_0, \nu_0) =
\mathcal{N}(\mu \mid \mu_0, \Sigma/\lambda_0) \, \mathrm{InverseWishart}(\Sigma \mid \Psi_0, \nu_0)
\]
where $\mu_0 \in \mathbb{R}^K$ is the prior mean, $\lambda_0 > 0$ is a scaling factor, $\nu_0 > K-1$ is the degrees of freedom, and $\Psi_0$ is the (positive definite) inverse scale matrix.
The posterior distribution of $\mu, \Sigma\mid Z_{1:n}$ is NIW with updated parameters, specifically,
\begin{align}
\label{eq:NIW-posterior}
\mu,\Sigma \mid Z_{1:n} \; \sim \; \mathrm{NIW}(\mu_n,\lambda_n,\Psi_n,\nu_n)
\end{align}
where $\lambda_n = \lambda_0 + n$, $\mu_n = (\lambda_0\mu_0 + n \oline{Z}_n)/\lambda_n$, $\Psi_n = \Psi_0 + nS_n + (\lambda_0 n/\lambda_n) (\oline{Z}_n - \mu_0) (\oline{Z}_n - \mu_0)^\texttt{T}$, and $\nu_n = \nu_0 + n$,
where $\oline{Z}_n = \frac{1}{n} \sum_{i=1}^n Z_i$ and $S_n = \frac{1}{n}\sum_{i=1}^n \big(Z_i - \oline{Z}_n \big) \big(Z_i - \oline{Z}_n \big)^\texttt{T}$. 

To generate a posterior sample of $(\mu', \Sigma') \sim p(\mu,\Sigma\mid Z_{1:n})$, all that needs to be done is to draw $\Sigma' \sim \mathrm{InverseWishart}(\Psi_n, \nu_n)$ and then draw $\mu' \sim \cN(\mu_n, \Sigma' / \lambda_n)$. 
To perform posterior inference, we generate multiple posterior samples and use Monte Carlo approximations.
For the hyperparameter settings, we recommend $\mu_0 = 0$, $\nu_0 = K + 2$, $\lambda_0 = 0.01$, and $\Psi_0 = I_K$ (the $K\times K$ identity matrix), since these yield a weakly informative prior that allows the data to mostly dominate.

\subsection{Algorithm for LaD model selection}
\label{sec:method:algorithm}

\cref{alg:workflow} provides a step-by-step description of the workflow for our proposed method.

\begin{algorithm}
\small
\caption{~~ Workflow for LaD model selection}
\label{alg:workflow}

\textbf{Input: } Observations $X_1,\ldots,X_n$, parametrized models $k=1,\ldots,K$ and their corresponding complexities $c(1),\ldots,c(K) \geq 0$, number of Monte Carlo samples $T > 0$, and tolerance $\delta \geq 0$.

\begin{enumerate}
    \item Compute parameter estimates $\hat\theta_k$ for each model $k$.
    \item Compute the bias-corrected LaD values $Z_{i k} = \ell_k(X_i; \hat\theta_k) + d_k/(2 n)$ for all $i$ and $k$.
    \item Sample $\mu^{(t)}, \Sigma^{(t)}$ from the NIW posterior (\cref{eq:NIW-posterior}) for $t = 1,\ldots,T$.
    \item Compute the SLC scores $\hat{w}_{\delta}(k)$ for each $k$ using \cref{alg:proc}.
    \item (Optional) Compute the rescaled tolerance $\hat{\tau} = \delta / (\hat{\mu}_{\mathrm{noise}} - \hat{\mu}_{\mathrm{min}})$ where $\hat{\mu}_{\mathrm{noise}}$ is an estimate of $\mu^0_{\mathrm{noise}}$ (\cref{sec:method:delta}), $\hat{\mu}_{\mathrm{min}} = \min_k \hat{\mu}_k$, and $\hat{\mu}_k = \frac{1}{n}\sum_{i=1}^n Z_{i k}$.
\end{enumerate}
\textbf{Output: } Return the samples $\mu^{(t)}$, SLC scores $\hat{w}_{\delta}(k)$, and rescaled tolerance $\hat{\tau}$.
\end{algorithm}


Recall that our SLC score $w_{\delta}(k \mid Z_{1:n})$ (\cref{eq:pkn2}) is defined as the product of a between-class selection factor $P\big(c_\delta^*(\mu) = c(k) \mid Z_{1:n} \big)$ and a within-class selection factor $\E\big(r_{n k}(\mu) \mid Z_{1:n} \big)$.
To compute $w_{\delta}(k \mid Z_{1:n})$, we generate posterior draws $\mu^{(1)},\ldots,\mu^{(T)}$ given $Z_{1:n}$ as described in \cref{sec:method:bayesinf}, then compute each of these two factors using a simple Monte Carlo approximation, and take their product.  See \cref{alg:proc} for details.

\begin{algorithm}
\small
\caption{~~ Computing the smooth LaD criterion (SLC) scores}
\label{alg:proc}

\textbf{Input: } Posterior samples $\mu^{(1)},\ldots,\mu^{(T)}\in\mathbb{R}^K$, complexities $c(1), \dots, c(K)\geq 0$, tolerance $\delta \ge 0$, and temperature $\alpha_n > 0$.

\vspace{1em}
\textbf{for $t=1, \dots, T$ do}
\begin{enumerate}
    \item  \textbf{Between-class selection:} 
        Compute the set of $\delta$-optimal models,
        \[
            M_\delta(\mu^{(t)}) = \big\{k: \mu_k^{(t)} \le \min_j \mu_j^{(t)} + \delta \big\}.
        \]
        Then, compute the minimal complexity among $\delta$-optimal models,
        \[
            c_\delta^*(\mu^{(t)}) = \min \big\{ c(k): k \in M_\delta(\mu^{(t)}) \big\}.
        \]
    \item  \textbf{Within-class selection:} 
        For each model $k = 1,\ldots,K$, compute the within-class minimum $\mu_{\min,c(k)}^{(t)} = \min\big\{\mu_j^{(t)}: c(j) = c(k) \big\}$, and compute the soft-selection score
        \[
            r_{n k}(\mu^{(t)}) = \exp\big(-\alpha_n (\mu_k^{(t)} - \mu_{\min,c(k)}^{(t)}) \big).
        \]
\end{enumerate}
\textbf{for $k=1, \dots, K$ do}
        \[
        \hat{p}_{\delta}(k) := \frac{1}{T} \sum_{t=1}^T \mathds{1} \big(c_\delta^*(\mu^{(t)}) = c(k) \big) \quad \text{and} \quad \hat{r}(k) := \frac{1}{T} \sum_{t=1}^T r_{n k}(\mu^{(t)}).
        \]

\textbf{Output: } Return the estimated SLC scores $\hat{w}_{\delta}(k) := \hat{p}_{\delta}(k) \, \hat{r}(k)$ for $k = 1,\ldots,K$.
\end{algorithm}



\section{Theory}\label{sec:theory}

In this section, we prove that our proposed method is asymptotically consistent (\cref{sec:theory:consistency}) and we show that alternative approaches based on a plug-in posterior or using a hard minimum exhibit instability (\cref{sec:theory:instability}).

\subsection{Consistency of the SLC score}
\label{sec:theory:consistency}

Our first result establishes that the SLC score $w_{\delta}(k \mid Z_{1:n})$ concentrates on the true target set $M_\delta^*(\mu^0)$, asymptotically.

\begin{theorem} \label{sec:theory:thm:consistency_score}
Assume Conditions \ref{cond1} and \ref{cond2}, and suppose $\delta \neq \mu^0_k - \mu^0_{\mathrm{min}}$ for all $k = 1,\ldots,K$.
Let $\alpha_n \geq 0$ such that $\alpha_n \to \infty$ with $\alpha_n = o(\sqrt{n})$.
Define $w_{\delta}(k \mid Z_{1:n})$ as in \cref{eq:pkn2} based on the NIW posterior in \cref{sec:method:bayesinf}.
Then, for all $k \in \{1,\dots,K\}$,
$$ w_{\delta}(k \mid Z_{1:n}) \xrightarrow[n \to \infty]{\mathrm{p}} \mathds{1}\big(k \in M^*_\delta(\mu^0)\big).$$
\end{theorem}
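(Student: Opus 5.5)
The plan is to reduce everything to the asymptotic behaviour of the NIW posterior for $\mu$. First, I would use \cref{cond1} and \cref{cond2} (through the result referenced in \cref{supp:theorem:param}) to get $\oline{Z}_n(\hat\theta) \to \mu^0$ in probability, $\sqrt{n}(\oline{Z}_n - \mu^0) = O_p(1)$, and $S_n \to \Sigma^0$ in probability. The bias correction $d/(2n)$ is $O(1/n)$, so it does not change any of these limits. From the NIW update formulas in \cref{eq:NIW-posterior}, the posterior of $\mu$ is a multivariate $t$ centred at $\mu_n = \oline{Z}_n + O_p(1/n)$, with scale $\Psi_n/(\lambda_n(\nu_n-K+1))$, and $n$ times this scale tends to $\Sigma^0$ in probability. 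Hence, conditionally on $Z_{1:n}$, $\sqrt{n}(\mu - \mu_n) = O_p(1)$, uniformly in the sense that for every $\epsilon>0$ there is $C$ with $P(\sqrt{n}\|\mu-\mu_n\| > C \mid Z_{1:n}) < \epsilon$ with probability tending to one. Combining the two gives posterior concentration: $P(\|\mu - \mu^0\| > \eta \mid Z_{1:n}) \to 0$ in probability for each fixed $\eta>0$, and in fact $\sqrt{n}\|\mu-\mu^0\| = O_p(1)$ jointly under the posterior and the data.

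\emph{Between-class factor.} The assumption $\delta \neq \mu^0_k - \mu^0_{\min}$ for all $k$ means no model sits exactly on the $\delta$-boundary at $\mu^0$. Since $\mu\mapsto \mu_k - \min_j\mu_j$ is continuous, the sets $M_\delta(\mu)$ and $M_\delta(\mu^0)$ coincide for $\mu$ in a neighbourhood of $\mu^0$. The same then holds for $c_\delta^*(\mu) = c_\delta^*(\mu^0)$. Posterior concentration therefore gives
\[
P\big(c_\delta^*(\mu) = c(k) \mid Z_{1:n}\big) \to \mathds{1}\big(c(k) = c_\delta^*(\mu^0)\big)
\]
in probability.

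\emph{Within-class factor.} I split according to whether $k$ minimizes $\mu^0$ within its class.
\begin{itemize}
\item If $\mu^0_k > \mu^0_{\min,c(k)}$, set $g = \mu^0_k - \mu^0_{\min,c(k)} > 0$. With posterior probability tending to one, $\mu_k - \mu_{\min,c(k)} > g/2$, so $r_{nk}(\mu) \le e^{-\alpha_n g/2} \to 0$. Because $r_{nk}\le 1$, the posterior expectation also tends to $0$.
\item If $\mu^0_k = \mu^0_{\min,c(k)}$, this is the main obstacle, since ties are allowed. The map $\mu\mapsto\mu_{\min,c(k)}$ is 1-Lipschitz in sup norm, so
\[
0 \le \mu_k - \mu_{\min,c(k)} \le 2\|\mu - \mu^0\|_\infty .
\]
It follows that $\alpha_n(\mu_k - \mu_{\min,c(k)}) \le 2(\alpha_n/\sqrt{n})\,\sqrt{n}\|\mu-\mu^0\|_\infty$. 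Here $\alpha_n/\sqrt n \to 0$ and the last factor is $O_p(1)$ under the posterior (with data-probability tending to one), so the product tends to zero in posterior probability. Bounded convergence, via splitting on the event $\{\sqrt n\|\mu-\mu^0\|_\infty\le C\}$, then gives $\E(r_{nk}(\mu)\mid Z_{1:n}) \to 1$ in probability.
\end{itemize}

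Finally, I multiply the two factors. The product converges to $\mathds{1}(c(k)=c_\delta^*(\mu^0))\cdot \mathds{1}(\mu^0_k = \mu^0_{\min,c(k)})$, which by \cref{eq:W} is exactly $\mathds{1}(k\in M_\delta^*(\mu^0))$.

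The delicate technical points are the uniform $\sqrt n$-tightness of the multivariate-$t$ posterior, for which I would use the Inverse-Wishart concentration from $\Psi_n/n\to\Sigma^0$, and the handling of ties. Both are resolved by the rate condition $\alpha_n = o(\sqrt n)$.
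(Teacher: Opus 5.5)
Your proof is correct and has the same skeleton as the paper's. The no-boundary assumption makes $c_\delta^*$ locally constant at $\mu^0$, which handles the between-class factor. For the within-class factor, $\sqrt{n}$-tightness of $\mu-\mu^0$ is played against $\alpha_n=o(\sqrt{n})$. The two limits are then multiplied.

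The technical route differs, though. The paper never bounds the conditional posterior directly. It proves a limit law for a single \emph{data-averaged} posterior draw, $\sqrt{n}(\mu'-\mu^0)\to\mathcal{N}(0,2\Sigma^0)$, using $\Sigma'\to\Sigma^0$ in probability and the representation $\mu'=\mu_n+(\Sigma')^{1/2}\xi/\sqrt{\lambda_n}$. It applies the continuous mapping theorem and the general soft-min result \cref{supp:lemma:alpha_choice} to that draw. Only then does it pass to quantities given $Z_{1:n}$, via Markov's inequality (\cref{supp:theorem:class_consistent_cond}) and \cref{lemma:conditional_convergence}.

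You instead get conditional tightness from the explicit multivariate-$t$ marginal; Chebyshev with the conditional covariance $\Psi_n/(\lambda_n(\nu_n-K-1))$ suffices. Your bound $0\le\mu_k-\mu_{\min,c(k)}\le 2\|\mu-\mu^0\|_\infty$ at ties is a shorter substitute for the decomposition in \cref{supp:lemma:alpha_choice}. Your version needs only tightness, not a Gaussian limit. The paper's data-averaged formulation pays off in \cref{sec:theory:thm:consistency}: each Monte Carlo draw $\mu^{(t)}$ is itself distributed as $\mu'$, so the same statements apply verbatim.

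One input is misattributed. \cref{supp:theorem:param} controls only the average $\oline{Z}_n(\hat\theta)$, not $S_n$. Your tightness step needs at least $\mathrm{tr}(S_n)=O_p(1)$, i.e.\ $\Psi_n/n=O_p(1)$. Getting this requires a separate per-observation Taylor argument showing $\frac{1}{n}\sum_i\|Z_i(\hat\theta)-Z_i(\theta^*)\|^2=O_p(1/n)$. The remainder term there involves $\frac{1}{n}\sum_i H(X_i)^2$, which needs some care because Condition \ref{cond2}(c) gives only a first moment of $H$; the paper uses $\max_{i\le n}H(X_i)=o_p(n)$. This is \cref{supp:theorem:Sn_consistency} in the paper, and with it supplied your argument is complete.
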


All proofs are collected in \cref{supp:proofs}.
The next result shows that the Monte Carlo estimates $\hat{w}_{\delta}(k)$ produced by \cref{alg:proc} also concentrate on $M_\delta^*(\mu^0)$.

\begin{theorem} \label{sec:theory:thm:consistency}
Under the same assumptions as \cref{sec:theory:thm:consistency_score},
for all $k \in \{1,\dots,K\}$,
$$ \hat{w}_{\delta}(k) \xrightarrow[n \to \infty]{\mathrm{p}} \mathds{1}\big(k \in M^*_\delta(\mu^0)\big).$$
\end{theorem}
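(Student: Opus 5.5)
We deduce this from \cref{sec:theory:thm:consistency_score} by controlling the Monte Carlo error conditionally on the data. The number of draws $T$ is implicitly allowed to depend on $n$; we take $T = T_n \to \infty$, which is the natural reading of the statement.

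\textbf{Step 1: conditional unbiasedness and variance bounds.} Conditional on $Z_{1:n}$, the draws $\mu^{(1)},\ldots,\mu^{(T)}$ are i.i.d.\ from the NIW posterior. Hence $\hat p_\delta(k)$ is an average of i.i.d.\ indicators with conditional mean $p_n(k) := P(c_\delta^*(\mu) = c(k)\mid Z_{1:n})$. Likewise $\hat r(k)$ is an average of i.i.d.\ variables in $(0,1]$ with conditional mean $\bar r_n(k) := \E(r_{nk}(\mu)\mid Z_{1:n})$. Both summands are bounded in $[0,1]$, so each conditional variance is at most $1/(4T)$ regardless of the data. By conditional Chebyshev, for any $\epsilon>0$,
\[
P\big(|\hat p_\delta(k) - p_n(k)| > \epsilon \,\big|\, Z_{1:n}\big) \le \frac{1}{4T\epsilon^2},
\]
and the same bound holds for $\hat r(k)$. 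Taking expectations over $Z_{1:n}$ removes the conditioning, so unconditionally $\hat p_\delta(k) - p_n(k) \to 0$ and $\hat r(k) - \bar r_n(k) \to 0$ in probability as $T_n\to\infty$. No uniformity in $n$ is needed, because the bound does not depend on the data or on $\alpha_n$.

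\textbf{Step 2: product.} Write
\[
\hat w_\delta(k) - w_\delta(k\mid Z_{1:n}) = (\hat p_\delta(k) - p_n(k))\,\hat r(k) + p_n(k)\,(\hat r(k) - \bar r_n(k)).
\]
All factors $\hat r(k)$ and $p_n(k)$ lie in $[0,1]$, so this difference tends to $0$ in probability. By \cref{sec:theory:thm:consistency_score}, $w_\delta(k\mid Z_{1:n}) \to \mathds{1}(k\in M^*_\delta(\mu^0))$ in probability. The triangle inequality then gives the claim.

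\textbf{Main obstacle.} The only delicate point is the role of $T$. If $T$ is fixed, the result can fail: with $T=1$, $\hat p_\delta(k)$ is a single indicator. That indicator still converges, because $p_n(k)\to 0$ or $1$ forces the indicator to match with probability tending to one. The within-class factor is different. For tied models $r_{nk}(\mu^{(t)})$ converges to $1$ only as an expectation, and it must also be shown to converge per draw. In fact, since $\alpha_n = o(\sqrt n)$ and posterior fluctuations are $O_p(n^{-1/2})$, each individual draw does satisfy $r_{nk}(\mu^{(t)}) \to 1$ in probability for best-in-class models, and $r_{nk}(\mu^{(t)}) \to 0$ for models strictly worse within their class.

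So an alternative route, valid for any $T$ including fixed $T$, is the following. Show that each single posterior draw satisfies $\mathds{1}(c^*_\delta(\mu^{(t)}) = c(k)) \cdot r_{nk}(\mu^{(t)}) \to \mathds{1}(k\in M^*_\delta(\mu^0))$ in probability. Use the bound $\E(|Y_t - a|) = \E\big(\E(|Y_t-a|\mid Z_{1:n})\big)$ to pass from this per-draw statement to the averages. The per-draw convergence is read off from the proof of \cref{sec:theory:thm:consistency_score}, which establishes posterior concentration of $\mu$ at $\mu^0$ at rate $n^{-1/2}$.

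I would present the Chebyshev argument and remark that the per-draw argument covers fixed $T$. Both arguments rely on the per-draw bounds being bounded in $[0,1]$, which makes convergence in probability equivalent to $L^1$ convergence.
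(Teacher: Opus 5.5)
Your Chebyshev argument is correct, but it proves the statement only under an extra hypothesis, $T = T_n \to \infty$. That is not how the paper reads the result. \cref{alg:proc} takes $T$ as a fixed input, and the paper's proof explicitly works with fixed $T$.

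Your remark that the result ``can fail'' for fixed $T$ is also wrong, as your own next sentences effectively concede. The paper's proof is precisely your alternative route. \cref{supp:theorem:class_consistent} and \cref{supp:lemma:soft_consistent} are statements about the data-averaged posterior. Every draw $\mu^{(t)}$ has the same marginal law as $\mu'$, so it satisfies $\mathds{1}(c^*_\delta(\mu^{(t)}) = c(k)) \xrightarrow{\mathrm{p}} \mathds{1}(c^*_\delta(\mu^0) = c(k))$. For $c(k) = c^*_\delta(\mu^0)$, it also satisfies $r_{nk}(\mu^{(t)}) \xrightarrow{\mathrm{p}} \mathds{1}(\mu^0_k = \mu^0_{\min,c(k)})$. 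Averaging over $t$ and Slutsky's theorem finish the proof. This route never passes through \cref{sec:theory:thm:consistency_score} or the conditional law given $Z_{1:n}$.

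When you write up that route, establish per-draw convergence for each factor separately, not for the per-draw product $\mathds{1}(c^*_\delta(\mu^{(t)}) = c(k))\, r_{nk}(\mu^{(t)})$. The estimator $\hat{w}_\delta(k) = \hat{p}_\delta(k)\,\hat{r}(k)$ is a product of two averages, not an average of products. So for $T \ge 2$, convergence of the per-draw products does not by itself give the result. For $k$ outside the minimal class, $\hat{p}_\delta(k) \to 0$ together with $\hat{r}(k) \in [0,1]$ suffices.

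Comparing the two routes: the summands are bounded in $[0,1]$ and identically distributed. Hence for every $T$ and any constant $a$, $\E|\hat{r}(k) - a| \le \E|r_{nk}(\mu^{(1)}) - a|$, and likewise for $\hat{p}_\delta(k)$. So the per-draw route holds for any sequence $T_n \ge 1$ and strictly subsumes your main argument. What your Chebyshev route buys instead is a clean reduction to \cref{sec:theory:thm:consistency_score}. It also gives a nonasymptotic, data-free bound of order $1/(T\epsilon^2)$ on the conditional Monte Carlo error $\hat{w}_\delta(k) - w_\delta(k \mid Z_{1:n})$.
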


\begin{theorem} \label{sec:theory:thm:consistency_M}
Under the same assumptions as \cref{sec:theory:thm:consistency_score}, for any fixed $\omega \in (0,1)$,
\[
    P \big( \hat{M}^*_\delta = M^*_\delta(\mu^0) \big) \xrightarrow[n\to\infty]{} 1,
\]
where $\hat{M}^*_\delta := \{k : w_{\delta}(k \mid Z_{1:n}) > \omega\}$.
\end{theorem}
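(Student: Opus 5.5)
This result should follow directly from \cref{sec:theory:thm:consistency_score}. Since $K$ is finite, a union bound gives
\[
P\big(\hat{M}^*_\delta \neq M^*_\delta(\mu^0)\big) \le \sum_{k=1}^K P\big(\mathds{1}(k\in\hat{M}^*_\delta) \neq \mathds{1}(k \in M^*_\delta(\mu^0))\big).
\]
It therefore suffices to show that each summand tends to zero.

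Fix $k$ and split into two cases according to whether $k$ belongs to $M^*_\delta(\mu^0)$.

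\emph{Case $k\in M^*_\delta(\mu^0)$.} The event of misclassification is $\{w_\delta(k\mid Z_{1:n})\le \omega\}$. It is contained in $\{|w_\delta(k\mid Z_{1:n}) - 1| \ge 1-\omega\}$. Because $\omega<1$, the quantity $\epsilon_1 := 1-\omega$ is strictly positive. By \cref{sec:theory:thm:consistency_score}, the probability of this event tends to zero.

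\emph{Case $k\notin M^*_\delta(\mu^0)$.} The misclassification event is $\{w_\delta(k\mid Z_{1:n}) > \omega\}$. It is contained in $\{|w_\delta(k\mid Z_{1:n}) - 0| \ge \omega\}$, and $\omega>0$. So again \cref{sec:theory:thm:consistency_score} shows the probability vanishes.

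Summing the $K$ vanishing terms gives $P(\hat{M}^*_\delta = M^*_\delta(\mu^0))\to 1$.

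Two points need care, though neither is a real obstacle. First, the argument uses that $\omega$ lies strictly inside $(0,1)$, so that both margins $\omega$ and $1-\omega$ are positive. Second, $M^*_\delta(\mu^0)$ is a deterministic set, well defined under the assumption $\delta\neq\mu^0_k-\mu^0_{\min}$ inherited from \cref{sec:theory:thm:consistency_score}, so the indicator limits are constants and convergence in probability to them is exactly what is needed.

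If one instead wanted the analogous statement for the Monte Carlo version $\{k:\hat{w}_\delta(k)>\omega\}$, the same argument would apply verbatim, using \cref{sec:theory:thm:consistency} in place of \cref{sec:theory:thm:consistency_score}.
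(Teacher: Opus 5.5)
Your proposal is correct and follows essentially the same route as the paper: the paper gets per-model convergence $\mathds{1}\big(w_\delta(k\mid Z_{1:n})>\omega\big)\xrightarrow{\mathrm{p}}\mathds{1}\big(k\in M^*_\delta(\mu^0)\big)$ from \cref{sec:theory:thm:consistency_score} via the continuous mapping theorem, using that $x\mapsto\mathds{1}(x>\omega)$ is continuous at $0$ and $1$, and then applies the same union bound over the $K$ models. Your explicit margin argument with $\omega$ and $1-\omega$ is just an unpacked version of that continuous-mapping step.
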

\cref{sec:theory:thm:consistency_M} also holds when using $\hat{w}_{\delta}(k)$ instead of $w_{\delta}(k \mid Z_{1:n})$.

\subsection{Instability of alternative approaches at ties}
\label{sec:theory:instability}

The results in this section provide the rationale for using our soft-selection score for within-class selection (\cref{eq:softmin}), rather than the Bayesian plug-in posterior $P(k\in M^*_\delta(\mu) \mid Z_{1:n})$ or a hard-minimum selection score. 
When multiple models of equal complexity are tied for having minimal KL divergence, these naive alternatives randomly allocate probability mass among tied models, and not necessarily in a uniform way.

To understand why the issue arises, suppose there are $K = 3$ models of equal complexity and all three are equally close to the true DGP, that is, $\mu^0 = (\mu^0_1,\mu^0_2,\mu^0_3)^\texttt{T}$ where $\mu^0_1 = \mu^0_2 = \mu^0_3$.
Furthermore, suppose models $1$ and $2$ are strongly negatively correlated and have much higher variance than model $3$, for example,
$$ \Sigma^0 = \begin{bmatrix} 1 & -0.99 & 0 \\ -0.99 & 1 & 0 \\ 0 & 0 & 0.01 \end{bmatrix}. $$
Under the posterior, suppose for simplicity that $\mu \sim \mathcal{N}(\mu^0, \Sigma^0/n)$ given $Z_{1:n}$, which is the expected asymptotic behavior.
Then with fairly high probability, $\min\{\mu_1,\mu_2\} < \mu_3$ because $\mu_3$ has much smaller variance than the other two, and the strong negative correlation makes $\mu_1 \approx -\mu_2$.
Specifically, $P(\mu_k = \min_j \mu_j \mid Z_{1:n})$ is $\approx 0.48$ for $k\in\{1,2\}$ and $\approx 0.04$ for $k = 3$.
Thus, model $3$ is rarely selected in this scenario, even though all three models are equally good. \cref{fig:instability_example} visualizes this pattern. 

\begin{figure}
    \centering
    \includegraphics[width=0.99\linewidth]{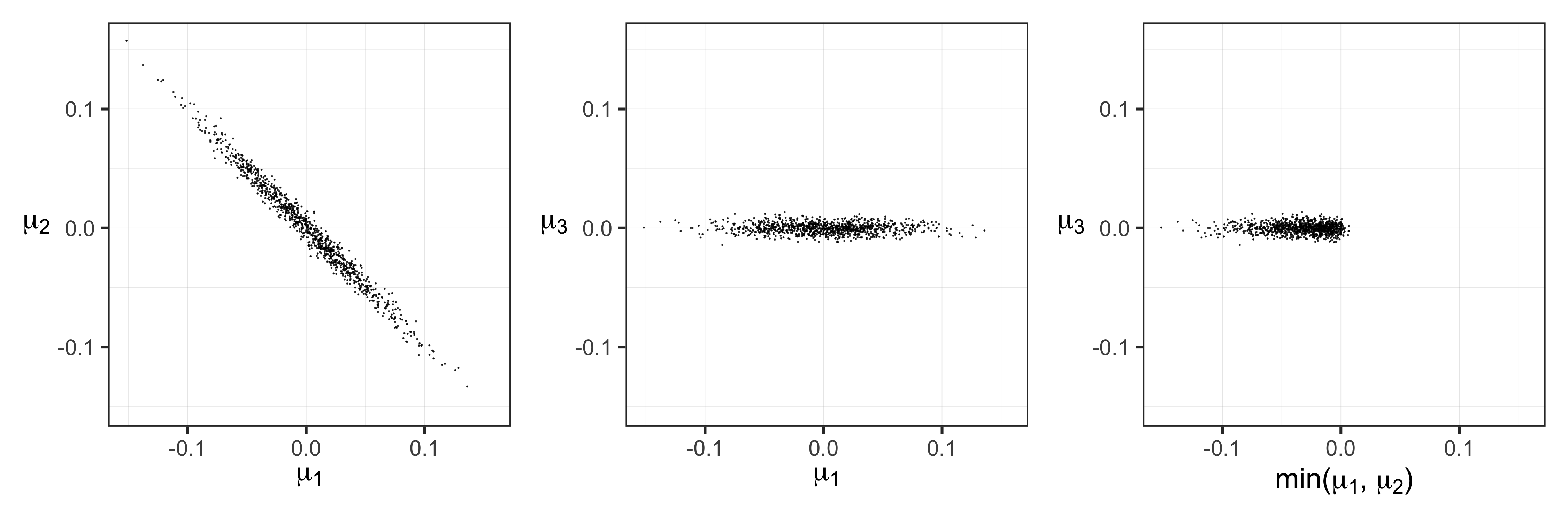}
    \captionsetup{font=small}
    \caption{Instability under ties with $K=3$. The figure shows 1,000 posterior draws from $\mu \sim \mathcal{N}(\mu^0, \Sigma^0/n)$ with $n=500$, $\mu^0 = (0,0,0)^\mathtt{T}$, and $\Sigma^0$ as specified in the text. The left panel shows the strong negative correlation between $\mu_1$ and $\mu_2$, the middle panel shows that $\mu_3$ is much more concentrated than $\mu_1$ and $\mu_2$, and the right panel makes clear that $\mu_3$ is rarely the minimum. 
    }
    \label{fig:instability_example}
\end{figure}

We make this phenomenon precise in the following results.
For $x\in\mathbb{R}^K$, define 
\begin{align}
\label{eq:pi_k}
\pi_k(x) = \branch{P\Big(W_j - W_k \leq x_j  - x_k \text{ for all } j\in M_\delta^*(\mu^0)\setminus\{k\}\Big)}{k\in M_\delta^*(\mu^0)}{0}{k\not\in M_\delta^*(\mu^0)}
\end{align}
where $W\sim\mathcal{N}(0,\Sigma^0)$.

In \cref{sec:theory:thm:post_instability}, parts (a) and (b) show the instability of the plug-in posterior and a hard-minimum rule, respectively. When $|M^*_\delta(\mu^0)| > 1$, the choice among tied models is randomly distributed in a way that depends on the covariance of the limiting Gaussian. 


\begin{theorem} 
\label{sec:theory:thm:post_instability}
Assume Conditions \ref{cond1} and \ref{cond2}, let $\mu, \Sigma \mid Z_{1:n}$ be distributed according to \cref{eq:NIW-posterior}, and suppose $\delta \neq \mu^0_k - \mu^0_{\mathrm{min}}$ for all $k = 1,\ldots,K$. Let $W\sim\mathcal{N}(0,\Sigma^0)$.
\begin{enumerate}[label=(\alph*)]
\item 
$\displaystyle\big( P(k\in M^*_\delta(\mu) \mid Z_{1:n}) \big)_{k=1}^K \xrightarrow[n \to \infty]{\mathrm{d}} \big( \pi_k(W) \big)_{k=1}^K$.
\item 
Consider a modified score $\tilde{w}_{\delta}(k \mid Z_{1:n})$ defined as in \cref{eq:pkn2} but with $\tilde{r}_{n k}(\mu) = \mathds{1}(\mu_k = \mu_{\min,c(k)})$ in place of $r_{n k}(\mu)$. Then
$\displaystyle\big(\tilde{w}_{\delta}(k \mid Z_{1:n})\big)_{k=1}^K \xrightarrow[n\to\infty]{\mathrm{d}} \big(\pi_k(W)\big)_{k=1}^K$.
\end{enumerate}
\end{theorem}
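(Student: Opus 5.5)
We will reduce everything to a conditional Bernstein--von Mises statement for the NIW posterior. Let $\hat\mu_n := \oline{Z}_n(\hat\theta)$ (bias-corrected or not; the $d/(2n)$ shift is $o(n^{-1/2})$). We use three ingredients. First, by \cref{supp:theorem:param} and Conditions \ref{cond1}--\ref{cond2}, $\sqrt{n}(\hat\mu_n-\mu^0)\xrightarrow{\mathrm{d}} W\sim\mathcal{N}(0,\Sigma^0)$. Second, $S_n(\hat\theta)\xrightarrow{\mathrm{p}}\Sigma^0$. Third, the explicit NIW form gives that, conditionally on $Z_{1:n}$, $\sqrt{n}(\mu-\mu_n)$ is multivariate $t$ with $\nu_n\to\infty$ degrees of freedom and scale $\Psi_n/(\lambda_n\nu_n)\cdot n\to\Sigma^0$ in probability. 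Also $\sqrt{n}(\mu_n-\hat\mu_n)=O(n^{-1/2})$. Hence the posterior law of $U:=\sqrt{n}(\mu-\hat\mu_n)$ converges in probability (in bounded-Lipschitz distance) to $\mathcal{N}(0,\Sigma^0)$. By Skorokhod/subsequence arguments, we may work along subsequences where $\sqrt{n}(\hat\mu_n-\mu^0)\to w$ and the posterior of $U$ converges weakly to $\mathcal{N}(0,\Sigma^0)$ almost surely. Then $\sqrt{n}(\mu-\mu^0)=U+\sqrt{n}(\hat\mu_n-\mu^0)$ has posterior limit $\mathcal{N}(w,\Sigma^0)$.

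Next we handle the between-class part. Since $\delta\neq\mu^0_k-\mu^0_{\min}$ for every $k$, the set $M_\delta(\cdot)$ is locally constant near $\mu^0$. Indeed, each inequality $\mu_k\le\min_j\mu_j+\delta$ has strict slack at $\mu^0$, and $\min_j\mu_j$ is continuous. Because the posterior concentrates at $\mu^0$ at rate $n^{-1/2}$, we get $P(M_\delta(\mu)=M_\delta(\mu^0),\,c^*_\delta(\mu)=c^*_\delta(\mu^0)\mid Z_{1:n})\xrightarrow{\mathrm{p}}1$. So the factor $P(c^*_\delta(\mu)=c(k)\mid Z_{1:n})$ tends to $\mathds{1}(c(k)=c^*_\delta(\mu^0))$. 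For $k$ in the minimal class, the event $\{k\in M^*_\delta(\mu)\}$ then coincides, up to posterior probability $o_p(1)$, with the event $\{\mu_k\le\mu_j \text{ for all } j \text{ with } c(j)=c(k)\}$. In this class, models $j$ with $\mu^0_j>\mu^0_{\min,c(k)}$ lose with posterior probability $\to1$. If $k\notin M^*_\delta(\mu^0)$, the probability therefore tends to $0=\pi_k$. If $k\in M^*_\delta(\mu^0)$, the event reduces to $\{\mu_j-\mu_k\ge0 \ \forall j\in M^*_\delta(\mu^0)\setminus\{k\}\}$. Because $\mu^0_j=\mu^0_k$ for these $j$, this is $\{V_j-V_k\ge0\}$ with $V=\sqrt{n}(\mu-\mu^0)$.

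We then pass to the limit in the tie event. Write $V=U+\sqrt{n}(\hat\mu_n-\mu^0)$. The posterior probability equals $P(U_j-U_k\ge -(G_{nj}-G_{nk})\ \forall j\mid Z_{1:n})$, where $G_n=\sqrt{n}(\hat\mu_n-\mu^0)$. Define $F(x):=P(W'_j-W'_k\ge -(x_j-x_k)\ \forall j)$ for an independent $W'\sim\mathcal{N}(0,\Sigma^0)$. By symmetry of $W'$, $F(x)=P(W'_j-W'_k\le x_j-x_k\ \forall j)=\pi_k(x)$. Since $\Sigma^0$ is positive definite, each $W'_j-W'_k$ is nondegenerate, so the boundaries have zero mass. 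The posterior probability is therefore $\pi_k(G_n)+o_p(1)$, uniformly in $G_n$ by Pólya-type uniform convergence over such orthants. The continuous mapping theorem, applied jointly over $k$, gives part (a): the vector converges in distribution to $(\pi_k(W))_k$, with $W=\lim G_n$ in law. We then have to match the sign convention: $\pi_k(W)$ and $\pi_k(-W)$ have the same law, so either orientation is fine.

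For part (b), with $\tilde r_{nk}(\mu)=\mathds{1}(\mu_k=\mu_{\min,c(k)})$, the posterior expectation of $\tilde r_{nk}$ is exactly the within-class argmin probability treated above. The between-class factor tends to $\mathds{1}(c(k)=c^*_\delta(\mu^0))$ in probability. Slutsky's lemma then gives the same joint limit $(\pi_k(W))_k$. Ties in the continuous posterior occur with probability zero, so they are irrelevant.

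The main obstacle is the first step: making the conditional BvM rigorous for the NIW posterior with plug-in $\hat\theta$, jointly with the sampling randomness of $G_n$. I would handle it directly through the explicit multivariate-$t$ form together with $\Psi_n/n\xrightarrow{\mathrm{p}}\Sigma^0$.
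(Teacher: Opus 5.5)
Your proposal is correct and follows essentially the same route as the paper: between-class consistency via local constancy of $M_\delta(\cdot)$ at $\mu^0$, reduction of the within-class argmin from the minimal class to the tied set $M^*_\delta(\mu^0)$, the split of $\sqrt{n}(\mu-\mu^0)$ into a posterior fluctuation plus a data fluctuation, and the continuous mapping theorem plus Slutsky, with part (b) handled by the same factorization into the between-class and within-class factors. The only difference is technical. You approximate the posterior orthant probability using the marginal multivariate-$t$ law of $\mu$ and P\'olya-type uniformity in the shift. The paper instead conditions on $\Sigma'$, writes the probability as $\E\big(\Phi_{\Gamma_{nk}(\Sigma')}(L_kW_n)\mid Z_{1:n}\big)$, and uses continuity of $(\Gamma,x)\mapsto\Phi_\Gamma(L_kx)$ together with a Markov-inequality argument. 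Your Skorokhod/subsequence aside is unnecessary and, as phrased, imprecise, since $G_n$ need not converge to a fixed $w$ along subsequences; the uniform-in-shift bound already handles the random $G_n$.
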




Interestingly, the limiting behavior of $\tilde{w}_{\delta}(k \mid Z_{1:n})$ coincides with the asymptotic BayesBag posterior with $M=N$ \citep[Theorem 3.2]{huggins2023reproducible}. 
Corollary \ref{sec:theory:cor:instability_two2} shows that in the case of two tied models, the scores of the two models are asymptotically distributed as $(U, 1-U)$ where $U \sim \mathrm{Uniform}(0,1)$.

\begin{corollary} \label{sec:theory:cor:instability_two2}
    Assume the conditions of \cref{sec:theory:thm:post_instability}. If $M_\delta^*(\mu^0) = \{k_1,k_2\}$ then
    $$ \big(\tilde{w}_\delta(k_1 \mid Z_{1:n}), \tilde{w}_\delta(k_2 \mid Z_{1:n})\big) \xrightarrow[n\to\infty]{\mathrm{d}} (U, 1-U) $$
    where $U \sim \mathrm{Uniform}(0,1)$.
\end{corollary}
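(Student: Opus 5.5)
The plan is to derive the corollary directly from part (b) of \cref{sec:theory:thm:post_instability}, specializing the limit $\pi_k(W)$ in \cref{eq:pi_k} to $M_\delta^*(\mu^0)=\{k_1,k_2\}$. By the theorem, the vector $\big(\tilde{w}_\delta(k\mid Z_{1:n})\big)_{k=1}^K$ converges in distribution to $\big(\pi_k(W)\big)_{k=1}^K$. Projecting onto coordinates $k_1,k_2$ is continuous, so the continuous mapping theorem gives
\[
\big(\tilde{w}_\delta(k_1 \mid Z_{1:n}), \tilde{w}_\delta(k_2 \mid Z_{1:n})\big) \xrightarrow[n\to\infty]{\mathrm{d}} \big(\pi_{k_1}(W),\pi_{k_2}(W)\big).
\]
It then remains only to identify the law of this limiting pair.

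For this I would unpack \cref{eq:pi_k}. In $\pi_k(x)$ the probability is over an independent copy $W'\sim\mathcal{N}(0,\Sigma^0)$, while the argument $x=W$ carries the randomness of the limit. With only two tied models,
\[
\pi_{k_1}(x) = P\big(W'_{k_2}-W'_{k_1} \le x_{k_2}-x_{k_1}\big) = F(x_{k_2}-x_{k_1}),
\]
where $F$ is the CDF of $V:=W'_{k_2}-W'_{k_1}\sim\mathcal{N}(0,\sigma^2)$ and $\sigma^2 = \Sigma^0_{k_1k_1}+\Sigma^0_{k_2k_2}-2\Sigma^0_{k_1k_2}$. Since $\Sigma^0$ is positive definite (\cref{cond1}(ii)), we have $\sigma^2>0$, so $F$ is continuous and strictly increasing. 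In the same way, $\pi_{k_2}(x) = P(-V\le x_{k_1}-x_{k_2}) = 1-F(x_{k_2}-x_{k_1})$, using the symmetry of $V$ and $P(V=x_{k_2}-x_{k_1})=0$.

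Now set $U := F(W_{k_2}-W_{k_1})$. Because $W_{k_2}-W_{k_1}$ has the same continuous law as $V$ with CDF $F$, the probability integral transform gives $U\sim\mathrm{Uniform}(0,1)$. Hence $(\pi_{k_1}(W),\pi_{k_2}(W)) = (U,1-U)$ almost surely, which proves the claim.

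The main point needing care is the interpretation of $\pi_k(x)$ as a probability over an independent copy of $W$ evaluated at the random argument $W$. If it were instead read as a probability over the same $W$, the limit would be degenerate rather than uniform. A secondary point is that positive definiteness of $\Sigma^0$ is needed so that $\sigma^2>0$; otherwise $F$ could have atoms and $U$ would not be uniform. Everything else follows from the theorem together with standard continuous mapping and probability integral transform arguments.
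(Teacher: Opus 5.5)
Your proposal is correct and follows essentially the same route as the paper. Both arguments reduce via \cref{sec:theory:thm:post_instability}(b) to identifying the law of $(\pi_{k_1}(W),\pi_{k_2}(W))$, write $\pi_{k_1}(x)=F(x_{k_2}-x_{k_1})$ and $\pi_{k_2}(x)=1-F(x_{k_2}-x_{k_1})$ with $F$ the continuous, strictly increasing CDF of $W_{k_2}-W_{k_1}$ (using positive definiteness of $\Sigma^0$), and conclude by the probability integral transform; your explicit continuous-mapping step for the projection onto $(k_1,k_2)$ and your remark about reading $\pi_k$ as a probability over an independent copy of $W$ only make precise what the paper leaves implicit.
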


\section{Examples}\label{sec:examples}

\subsection{Gaussian mixture models}\label{sec:examples:gmm}

We revisit the motivating example from \cref{sec:intro} using finite Gaussian mixture models on the Shapley galaxy data. 
We implement the LaD workflow in \cref{alg:workflow} as follows.
First, we fit a $k$-component Gaussian mixture for each $k = 1, \ldots, 10$ using the expectation-maximization (EM) algorithm to estimate the \emph{maximum a posteriori} (MAP) solution \citep{fraley2007bayesian}. For this,  we employ the \texttt{mclust} package in R \citep{scrucca2016mclust, r2016r} with random initialization, using a moderately informative normal-inverse-gamma prior on the mean and variance of each mixture component, independently.  For each $k$, we run the algorithm on $50$ independent restarts and keep the run with the highest observed log-likelihood.
See \cref{supp:mix_details} for model details.

We then compute the bias-corrected LaD values $Z_{i k} = \ell_k(X_i; \hat\theta_k) + d_k/(2 n)$ (\cref{eq:bias-adjustment}) where $\hat\theta_k$ is the MAP estimate and $d_k = 3 k - 1$ since a univariate $k$-component mixture with unconstrained means, variances, and mixture weights has a total of $k + k + (k-1) = 3 k - 1$ parameters.
While mixtures do not perfectly satisfy the LRT conditions, we find that this bias correction is an effective approximation in practice.
We draw $T = 1000$ samples of $\mu \mid Z_{1:n}$ from the NIW posterior (\cref{eq:NIW-posterior}), and then run \cref{alg:proc} to obtain the SLC scores $\hat{w}_{\delta}(k)$ for a range of values of the tolerance $\delta>0$.
We use $\alpha_n = n^{0.45}$ for the temperature and $c(k) = k$ for the model complexities.
To compute the rescaled tolerance $\hat{\tau}$ as described in \cref{alg:workflow}, we let $\muhat_{k} = \frac{1}{n} \sum_{i=1}^n Z_{i k}$ be the mean of the bias-adjusted LaD values, and we define the noise baseline to be the uniform distribution over the observed range $[\min_i x_i, \max_i x_i]$, which yields $\muhat_{\mathrm{noise}} = \log( \max_i x_i - \min_i x_i)$.  This puts $\delta$ on an interpretable scale, namely, $\hat{\tau} \in [0,1]$ is the fraction of explainable KL improvement one is willing to give up for model simplicity.

\begin{figure}
    \centering
    \includegraphics[trim=0 0 0 0, clip, width=0.95\linewidth]{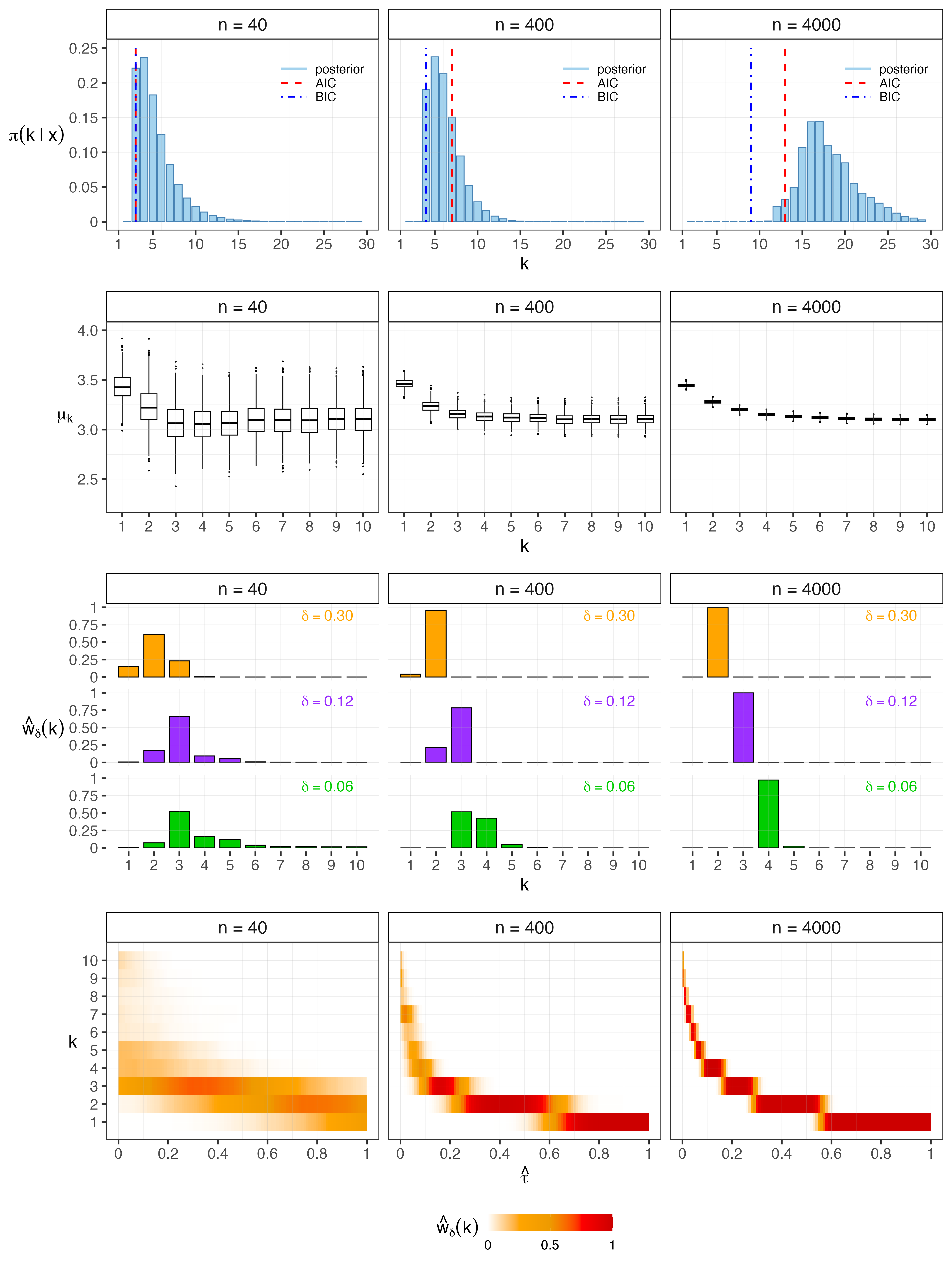}
    \captionsetup{font=small}
    \caption{Gaussian mixture model results on the Shapley galaxy dataset. 
    \emph{(Top row)} Usual Bayesian posterior on $k$ (bars) with AIC (red dashed) and BIC (blue dot-dash) overlaid; as $n$ increases, all three trend toward larger $k$.
    \emph{(Second row)} LaD posteriors on $\mu_k$ for each $k$ show the fit to the true DGP as a function $k$, and concentrate as $n$ grows.
    \emph{(Third row)} SLC score $\hat{w}_{\delta}(k)$ for tolerances $\delta \in \{0.3, 0.12, 0.06\}$. Larger $\delta$ selects simpler models; increasing $n$ concentrates the scores onto a single $k$ for each $\delta$.
    \emph{(Bottom row)} Heatmap of ``posterior path'' of SLC score $\hat{w}_{\delta}(k)$ shows the transition from selecting more complex to simpler models as a function of the rescaled tolerance $\hat{\tau} = \delta/(\hat{\mu}_{\mathrm{noise}} - \hat{\mu}_{\mathrm{min}}) \in [0,1]$.
    }
    \label{fig:gmm_results}
\end{figure}

\cref{fig:gmm_results} displays the results. For comparison, the top row shows the usual Bayesian posterior on $k$, with AIC and BIC overlaid; all three increasingly favor larger $k$ as $n$ grows, exhibiting unstable drift toward complexity under misspecification. 
The second row shows boxplots of the LaD posterior on $\mu_k$ for each $k$, which provide estimates (with uncertainty quantification) of the KL divergence between the true DGP and each model $k$, up to a common additive constant. 
As $n$ increases, we see that the uncertainty in these estimates decreases. As $k$ increases, the estimated KL decreases and gradually levels off, indicating a diminishing improvement in fit as the model complexity grows.
The third row plots the SLC scores $\hat{w}_{\delta}(k)$ for $\delta \in \{0.30, 0.12, 0.06\}$. Larger $\delta$ values are more tolerant, resulting in selection of simpler models; meanwhile, smaller $\delta$ values enforce a tighter tolerance, requiring more complex models in order to achieve a KL within $\delta$ of being minimal.
As $n$ increases, the $\hat{w}_{\delta}(k)$ values concentrate on a single $k$ for each $\delta$, illustrating the large-sample concentration guaranteed by our theory when no two models are tied in terms of both complexity and KL.
Notably, in contrast with AIC, BIC, and the usual Bayesian posterior results in \cref{fig:intro_example}, the LaD results stabilize as $n$ increases.  

This is illustrated further in the bottom row, which shows the \textit{posterior path} of $\hat{w}_{\delta}(k)$ as $\hat{\tau}$ varies from $0$ to $1$.
Here, we see a continuous transition from selecting more complex to simpler models as the proportion of explainable KL that we are willing to sacrifice increases from $0$ to $1$.
For instance, while the usual Bayesian posterior selects values of $k$ between 10 and 25 when $n = 4000$ (\cref{fig:intro_example}), 
the LaD posterior path shows that $k > 8$ is only needed for very tight tolerances of $\hat{\tau} < 0.01$, that is, $<1\%$ of explainable KL sacrificed.
This plot encapsulates the key features of LaD: It operates without assuming any candidate model is correctly specified, quantifies uncertainty in their relative KL divergences from the true DGP, and places these divergences on an interpretable scale.

\subsection{Sparse multivariate normal mean models}\label{sec:examples:mvn}
In this simulation, we consider multivariate normal (MVN) models with sparse mean vectors. We generate $n$ i.i.d.\ samples $x_1, \ldots, x_n \sim \cN(\theta_0, I)$, where the true mean vector is $\theta_0 = (1, 1, 0.5, 0.5, 0.4, 0)^\texttt{T}$, and $I$ is the $6 \times 6$ identity matrix. We compare $K=7$ candidate models, each of which is $\mathcal{N}(\theta,I)$ but with a different sparsity pattern imposed on the mean vector $\theta$, and we define the complexity $c(k)$ to be the number of nonzero entries of $\theta$.
The models vary in complexity and KL divergence from the truth, $\min_{\theta_k} D\big(\mathcal{N}(\theta_0,I)\,\|\,\mathcal{N}(\theta_k,I)\big)$ subject to the sparsity pattern in \cref{tab:sparse_mvn_models}; see \cref{supp:mvn_details}.

\begin{table}[H]
\centering
\caption{Candidate models for sparse multivariate normal simulation example.}
\begin{tabular}{|c|l|c|c|}
\hline
$k$ & \multicolumn{1}{c|}{$\theta_k$} & $c(k)$ & $\min D$ \\
\hline
$1$ & $(\theta_1, 0, 0, \theta_4, 0, 0)^\texttt{T}$ & $2$ & $0.705$ \\
$2$ & $(\theta_1, \theta_2, 0, 0, 0, 0)^\texttt{T}$ & $2$ & $0.33$ \\
$3$ & $(\theta_1, \theta_2, 0, 0, \theta_5, 0)^\texttt{T}$ & $3$ & $0.25$ \\
$4$ & $(\theta_1, \theta_2, 0, \theta_4, 0, 0)^\texttt{T}$ & $3$ & $0.205$ \\
$5$ & $(\theta_1, \theta_2, \theta_3, 0, 0, 0)^\texttt{T}$ & $3$ & $0.205$ \\
$6$ & $(\theta_1, \theta_2, \theta_3, \theta_4, \theta_5, 0)^\texttt{T}$ & $5$ & $0$ \\
$7$ & $(\theta_1, \theta_2, \theta_3, \theta_4, \theta_5, \theta_6)^\texttt{T}$ & $6$ & $0$ \\
\hline
\end{tabular}
\label{tab:sparse_mvn_models}
\end{table}

Models $1$ and $2$ are the most parsimonious, but differ greatly in their distance from the truth.
Models $4$ and $5$ are equally close to the truth and have the same complexity, while model $3$ is slightly farther in KL divergence but has the same complexity as $4$ and~$5$.
Models $6$ and $7$ both recover the true DGP exactly, but model $6$ is simpler.

We implement the LaD approach in \cref{alg:workflow} as follows.
We use the MLE to estimate the parameters for model $k$, which is the vector $\hat{\theta}_k\in\mathbb{R}^6$ containing the sample means in the coordinates that are free in that model, and zeros elsewhere.  That is, for coordinate $j\in\{1,\ldots,6\}$, $\hat{\theta}_{k j} = \bar{x}_j \mathds{1}(j\in J_k)$ where $\bar{x}_j = \frac{1}{n}\sum_{i=1}^n x_{i j}$ and $J_k\subseteq \{1,\ldots,6\}$ is the set where model $k$ allows $\theta_{k j}$ to be nonzero.
The bias-corrected LaD values are then $Z_{i k} = \ell_k(x_i; \hat{\theta}_k) + d_k/(2 n)$ where
$\ell_k(x_i; \hat{\theta}_k) = \frac{6}{2} \log (2\pi) + \frac{1}{2} \|x_i - \hat\theta_k \|^2$ and $d_k = |J_k|$.
We draw $T = 1000$ samples of $\mu\mid Z_{1:n}$ (\cref{eq:NIW-posterior}) and compute the SLC scores $\hat{w}_{\delta}(k)$ using \cref{alg:proc} with a temperature of $\alpha_n = n^{0.45}$. 
We consider three tolerance levels: (i) $\delta = 0.75$ is large enough to tolerate all of the models, (ii) $\delta = 0.26$ excludes models 1 and 2 but still permits models 3--7, and (iii) $\delta = 0.05$ is small enough that only models 6 and 7 are within $\delta$ of the true DGP; see \cref{tab:sparse_mvn_models}.
To interpret the value of $\delta$, we compute the rescaled tolerances $\tau$ as in \cref{alg:workflow}, using $\mathcal{N}(0,I)$ for the noise model. 
For instance, the $\delta$ values of $0.75$, $0.26$, and $0.05$ correspond to $\tau$ values of approximately $55.3\%$, $18.8\%$, and $3.7\%$, respectively.

\begin{figure}
    \centering
    \includegraphics[width=0.8\linewidth]{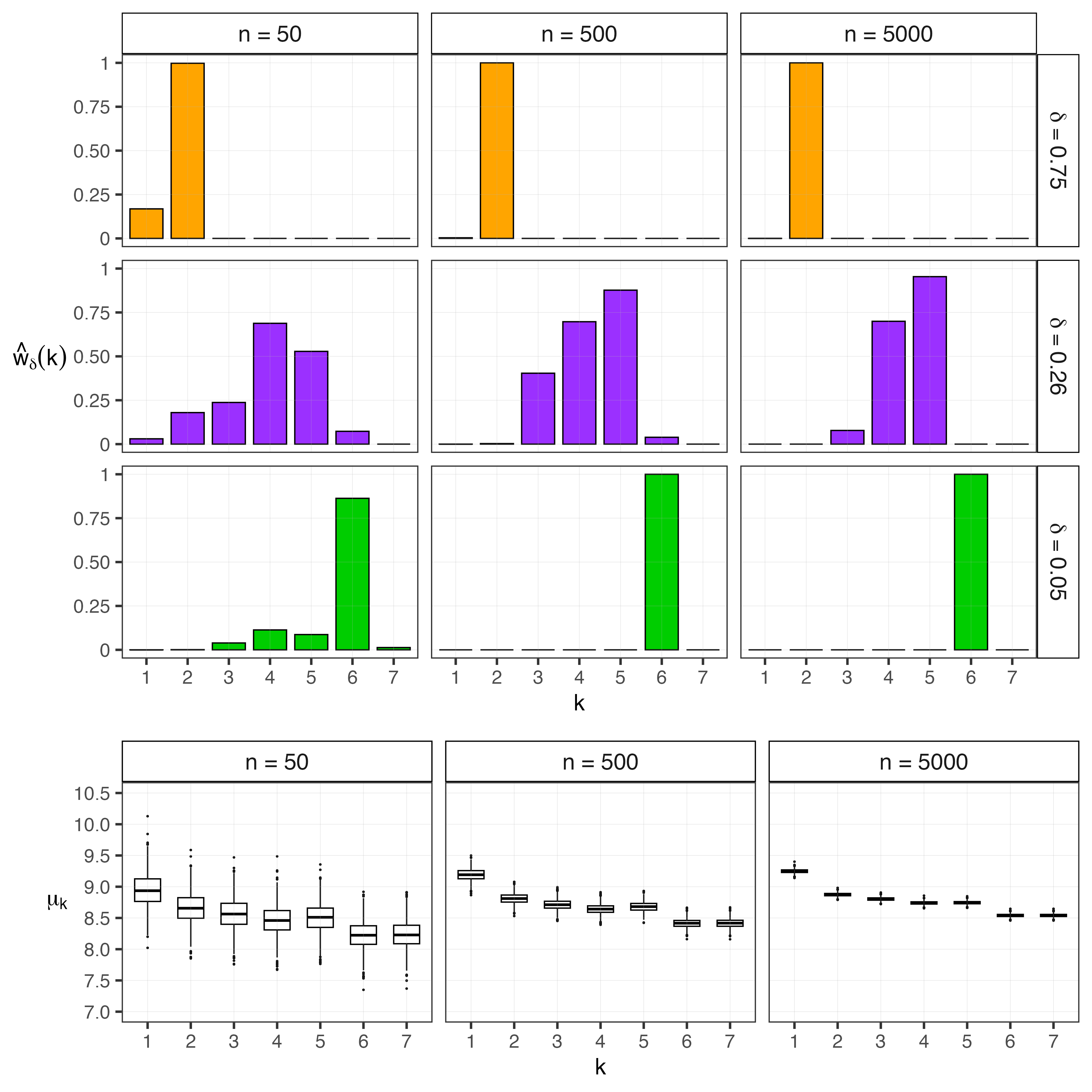}
    \captionsetup{font=small}
    \caption{Sparse multivariate normal model results. 
    \emph{(Top)} SLC scores $\hat{w}_{\delta}(k)$  for sample sizes $n \in \{50, 500, 5000\}$ (columns) and tolerances $\delta \in \{0.75, 0.26, 0.05\}$ (rows).
    \emph{(Bottom)} Boxplots of the LaD posteriors on $\mu_k$ for $k \in \{1,\ldots,7\}$.
     }
    \label{fig:mvn_result}
\end{figure}

Figure \ref{fig:mvn_result} summarizes the LaD results across the $\delta$ settings, for sample sizes $n \in \{50, 500, 5000\}$.
Here, we show results for one representative data set of each size.
In the first setting ($\delta = 0.75$), both models 1 and 2 are given a nonnegligible SLC score when $n$ is small, but the score of model 1 quickly goes to zero and the score of model 2 goes to 1 as $n$ grows.
This is because all candidate models are $\delta$-optimal at this tolerance level, but models 3--7 have greater complexity, and model 1 has worse fit than model 2.
The SLC score concentrates at model 2 since it is the best minimal-complexity $\delta$-optimal model at this tolerance; see \cref{tab:sparse_mvn_models}.

In the second setting ($\delta = 0.26$), the SLC scores are spread more evenly across models when $n$ is small, due to posterior uncertainty. 
As $n$ increases, the SLC scores concentrate on both models 4 and 5, which have equal complexity and equal KL divergence from the true DGP.  This occurs because models 1 and 2, while simpler, are not $\delta$-optimal at this tolerance level; models 6 and 7 have higher complexity; and model 3 has larger KL from the true DGP than models 4 and 5.
Notably, the SLC score is stable in the sense that it does not randomly select one of models 4 and 5, but instead gives them roughly equal score.
Thus, the method correctly treats these two models as equally good with respect to this true DGP.

In the third setting ($\delta = 0.05$), the SLC scores concentrate on model 6.  This is because models 1--5 are not $\delta$-optimal at this tolerance level, and model 7 is more complex than model 6.
Thus, the method correctly chooses model 6 at this value of $\delta$.



\subsubsection{Performance comparison with other methods}\label{sec:examples:mvn:comparison}

Up to this point, we have considered qualitative aspects of performance in our comparisons with existing methods.
In this section, we quantitatively compare model selection performance with several alternative approaches.
For these comparisons, we consider the following methods, including three versions of LaD:
\begin{itemize}
    \item[(i)] (LaD-soft) - Our proposed LaD method (see \cref{alg:workflow}).
    \item[(ii)] (LaD-hard) - A variant of LaD using hard-minimum selection (see \cref{sec:theory:thm:post_instability}(b)).
    \item[(iii)] (LaD-diag) - A variant of LaD using diagonal covariance $\Sigma$ (see \cref{supp:cov}).
    \item[(iv)] (c-posterior) - Coarsened posterior with powers $\alpha \in \{10, 100\}$ (see \cref{supp:mvn_coarsening}). 
    \item[(v)] (Bayes) - Usual Bayesian posterior probabilities on models.
    \item[(vi)] (AIC, BIC) - Standard model selection criteria that select a single model.
\end{itemize}

For each $n\in\{50, 500, 5000\}$, we generate $50$ simulated data sets from $\mathcal{N}(\theta_0, I)$ as above and, for each data set, we apply each method to select from the set of candidate models in \cref{tab:sparse_mvn_models}.
To quantify performance for each $\delta$, consider the Brier loss $S_{\mathrm{B}} := \sum_{k=1}^K (w_k - w_k^0)^2$ where $w_k$ is the weight assigned to model $k$ by a given method and $w_k^0$ is an indicator of whether model $k$ is in the true target set, that is, $w_k^0 = \mathds{1}(k \in M_\delta^*(\mu^0))$.

\begin{figure}
    \centering
    \includegraphics[width=0.99\linewidth]{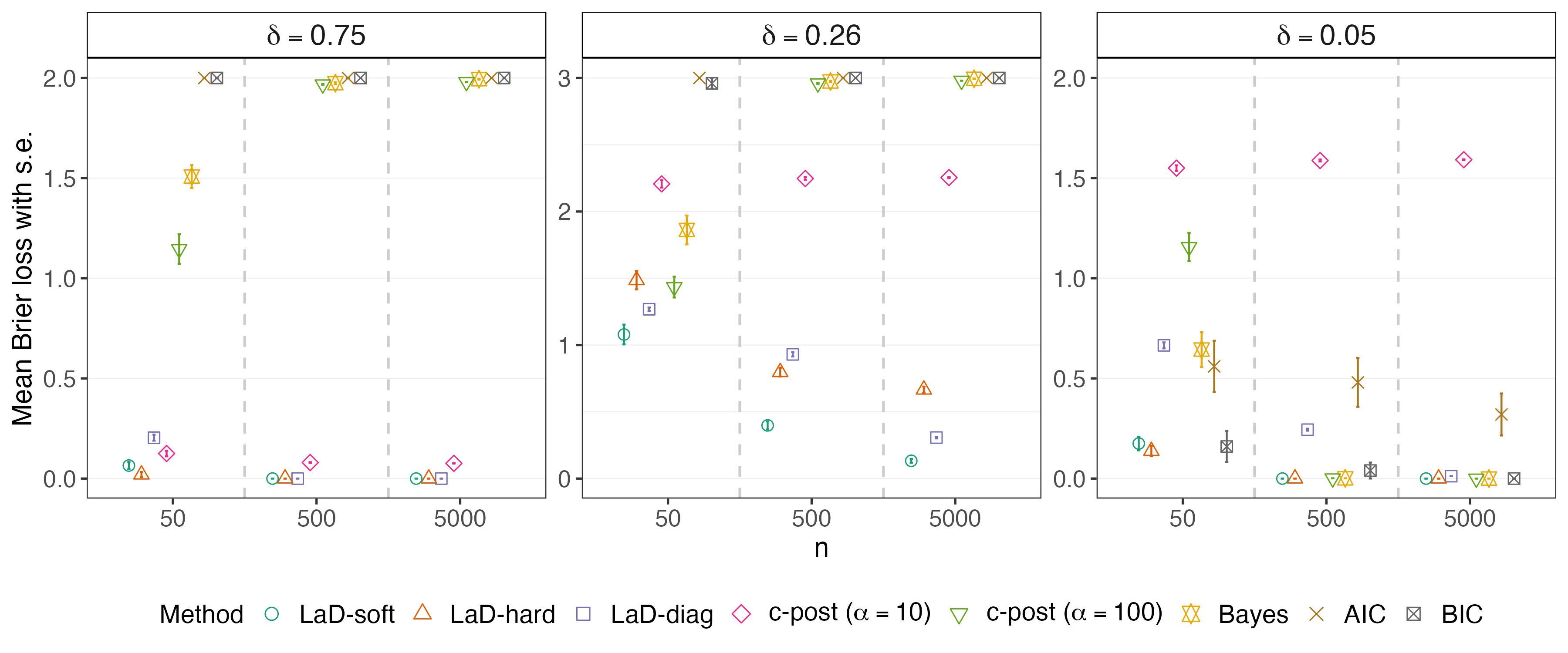}
    \captionsetup{font=small}
    \caption{Performance comparisons using Brier score. Points show the mean Brier loss, and lines show $\pm$ standard error across $50$ datasets for sample sizes $n \in \{50, 500, 5000\}$ (x-axis; vertical dashed lines separate values of $n$). Each panel corresponds to tolerance settings: left $\delta=0.75$, middle $\delta = 0.26$, and right $\delta = 0.05$. Colors indicate the competing methods. 
    }
    \label{fig:gmm_metrics}
\end{figure}

\cref{fig:gmm_metrics} displays the means and standard errors of the Brier loss $S_\mathrm{B}$ for each tolerance setting, $\delta\in\{0.75, 0.26, 0.05\}$.  LaD-soft attains the smallest Brier loss on average across all $n$ and $\delta$. LaD-hard performs comparably to LaD-soft except when there are ties, such as between models 4 and 5 in the case of $\delta = 0.26$, in which case LaD-soft performs better.
LaD-diag works reasonably well but is generally worse than LaD-soft, particularly when $\delta = 0.05$, demonstrating that modeling the full covariance matrix is important. The c-posterior with $\alpha=10$ is competitive at tolerance $\delta=0.75$, but performs poorly at smaller $\delta$.  Meanwhile, the c-posterior with $\alpha = 100$ performs well at $\delta = 0.05$, but poorly at larger $\delta$. This illustrates that the choice of $\alpha$ for the c-posterior needs to be selected in a $\delta$-dependent way, but this is challenging because $\alpha$ does not have a direct interpretation in terms of distance from the true DGP.
Finally, Bayes, AIC, and BIC are only competitive at $\delta = 0.05$ since $M_\delta^*(\mu^0)$ only contains the true DGP in this case. Overall, LaD-soft reliably provides excellent performance across all settings.


\subsection{Thermal performance curves in microbial ecology}\label{sec:examples:tpc}

Thermal performance curves (TPCs)---unimodal functions relating performance (such as growth rate) to temperature---are widely used in microbial ecology for understanding how populations respond to environmental conditions \citep{sinclair2016can}. 
For example, \textit{Prorocentrum minimum} is a mixotrophic dinoflagellate that can form harmful blooms; studying its temperature–growth relationship helps predict these events and mitigate their impact on marine ecosystems and human health \citep{grzebyk1996influences}.
There are many mechanistic and phenomenological TPC models, none of which is universally optimal across traits or taxa \citep{kontopoulos2024no, kellermann2019comparing, angilletta2006estimating}.
We analyze the \textit{P.\ minimum} data of \citet{grzebyk1996influences} consisting of $n=148$ growth-rate observations at 10 temperatures from $10^\circ$C to $31^\circ$C.

We fit $K = 9$ widely used TPC models of varying complexity: three-parameter models (Gaussian, Mitchell--Angilletta, Bri\`ere I, Eubank); four-parameter models (Weibull, Modified Gaussian); a five-parameter model (extended Bri\`ere); a six-parameter model (Poly-5); and a seven-parameter model (Sharpe--Schoolfield). 
These represent a diverse mix of TPC models that are commonly recommended for fitting microbial growth data \citep{kontopoulos2024no, grimaud2017modeling, thomas2012global}; see \cref{supp:tpc} for their functional forms. 


\begin{figure}
    \centering
    \includegraphics[width=0.95\textwidth]{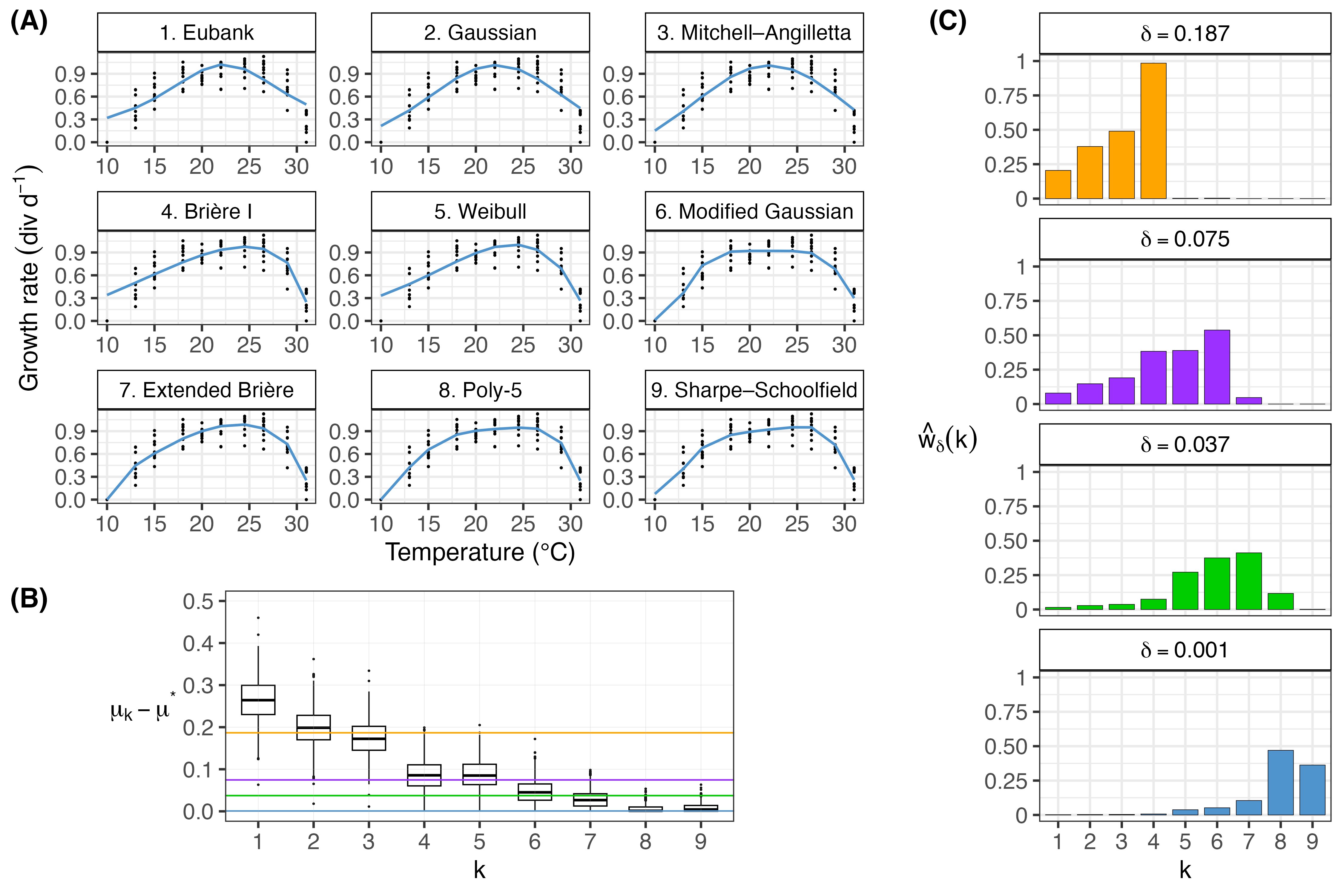}
    \captionsetup{font=small}
    \caption{(A) NLS fits (blue lines) of the $K=9$ candidate TPC models to the $n=148$ growth-rate observations (black points) across temperatures $10^{\circ}\mathrm{C}$ -- $31^{\circ}\mathrm{C}$. Growth rates are reported as divisions per day $(\mathrm{div \, d^{-1}})$, converted from per-second rates $(\mathrm{s^{-1}})$ via $\mathrm{div \, d^{-1}} = 86{,}400/\ln{2} \times \mathrm{s^{-1}}$, as in \citet{grzebyk1996influences}.
    (B) Posterior distributions of $\mu_k - \min_j \mu_j$ for each model $k$. Horizontal lines indicate selected $\delta$ thresholds (orange: $\delta = 0.187$, purple: $\delta = 0.075$, green: $\delta = 0.037$, and blue: $\delta = 0.001$).
    (C) SLC scores $\hat{w}_{\delta}(k)$ across models $k$ for each $\delta$, using the same color for each $\delta$ as in (B).
    }
    \label{fig:tpc_results}
\end{figure}


The data consist of pairs $(y_i,t_i)$ where $y_i$ is the outcome (growth rate) and $t_i$ is the temperature, for observations $i = 1,\ldots,n$.
We implement LaD as follows (\cref{alg:workflow}).
For each candidate model $k$ with parameter $\theta_k$ and mean curve $g_k(t; \theta_k)$, we fit $\hat{\theta}_k$ by nonlinear least squares (NLS), modeling $y_i\sim \mathcal{N}(g_k(t_i; \theta_k), \, \sigma_k^2)$, a standard approach for these models implemented in R \citep{padfield2021rtpc}. \cref{fig:tpc_results}(A) shows the fitted curve for each model, overlaid on a scatterplot of the data. 
The bias-corrected LaD values are $Z_{i k} = \ell_k((y_i,t_i); \hat{\theta}_k) + d_k/(2 n)$ where 
$\ell_k((y_i,t_i); \hat{\theta}_k) = \frac{1}{2} \log (2\pi \hat{\sigma}_k^2) + \big(y_i - g_k(t_i; \hat{\theta}_k) \big)^2 / (2\hat{\sigma}_k^2)$,
$\hat{\sigma}_k^2 = \frac{1}{n} \sum_i \big( y_i - m_k(T_i; \hat\theta_k) \big)^2$,
and $d_k$ is the dimension of $\theta_k$ plus 1 for the variance.
We draw $T = 1000$ samples of $\mu\mid Z_{1:n}$ (\cref{eq:NIW-posterior}) and compute the SLC scores $\hat{w}_{\delta}(k)$ using \cref{alg:proc} with complexities $c(k) = d_k$ and temperature $\alpha_n = n^{0.45}$.

To define the rescaled tolerances $\hat{\tau}$, we introduce both a noise model and a flexible nonparametric model to represent upper and lower bounds, respectively, on the KL divergences from the true DGP. The noise model is defined as $y_i \sim \mathcal{N}(\varphi, \sigma^2_{\mathrm{noise}})$, and the flexible model is $y_i \sim \mathcal{N}(\beta_{t_i},\sigma^2_{\mathrm{flex}})$ where $\beta = (\beta_t : t\in\mathcal{T})$ is defined at the set of 10 observed temperatures, $\mathcal{T}$. 
For the noise model and the flexible model, we compute the MLEs of the parameters and take the average negative log-likelihood to obtain $\hat{\mu}_{\mathrm{noise}}$ and $\hat{\mu}_{\mathrm{flex}}$, respectively; see \cref{supp:tpc} for details.
 We set $\hat{\tau} = \delta/(\muhat_{\mathrm{noise}} - \muhat_{\mathrm{flex}})$.

\cref{fig:tpc_results}(B) displays the distribution of $\mu_k-\min_j \mu_j$ for each $k$ under the posterior $\mu\mid Z_{1:n}$, and we overlay lines indicating the $\delta$ tolerance values used in panel (C).  In \cref{fig:tpc_results}(C), we show  the SLC scores $\hat{w}_{\delta}(k)$ for $\delta=0.187$ (orange), $0.075$ (purple), $0.037$ (green), and $0.001$ (blue). 
The SLC scores behave as follows: (i) small tolerance ($\delta = 0.001$, $\hat{\tau} = 0.1\%$) put higher weight on the most flexible models such as Sharpe-Schoolfield and Poly-5; (ii) moderate tolerance ($\delta \in \{0.037, \, 0.075\}$, $\hat{\tau}\in \{5\%, 10\%\}$) shifts selection to simpler 4--5 parameter models; (iii) large tolerance ($\delta = 0.187$, $\hat{\tau} = 25\%$) shifts to the 3-parameter models, with Bri\`ere I receiving the highest score.
Notably, the Eubank and Gaussian models receive relatively low weight in panel (C) across all $\delta$ values considered, indicating that these two models are not appropriate for this data set, regardless of the tolerance level.

\subsection{Population structure admixture models}
\label{sec:examples:population-structure}

Inferring latent population structure from multilocus genotype data is a central task in population genetics \citep{pritchard2000inference, rosenberg2002genetic}. In many realistic settings, individuals are admixed — that is, their genomes originate from multiple ancestral populations. Modeling this admixture requires decomposing each individual's genotype into proportions contributed by $k$ unknown source populations, while estimating allele frequencies in each population. Bayesian admixture models provide a formal probabilistic framework for this task.

A widely used software program for admixture models is STRUCTURE \citep{pritchard2000inference}, which performs posterior inference via Markov chain Monte Carlo. In the model, each individual is associated with a vector of ancestry proportions over $k$ populations, however, STRUCTURE requires the user to specify $k$ in advance. This introduces a model selection problem: The true number of populations is unknown, and must be chosen to balance fit and complexity. If $k$ is too small, genetically distinct groups may be merged, obscuring meaningful structure. If $k$ is too large, the model may chase noise and infer spurious populations. 

In practice, users often choose $k$ by comparing in-sample log-likelihoods or by applying \emph{ad hoc} heuristics such as Evanno's method \citep{evanno2005detecting}, but these procedures lack theoretical justification.
We address this problem by applying the LaD framework to perform principled, robust model selection in admixture models. The aim is to capture the explainable structure while avoiding unnecessary complexity.

We apply our methodology to the brook trout data set of \citet{erdman2022broadscale}, consisting of multilocus genotypes for $n=8454$ fish sampled from multiple streams in the Eastern United States. Each individual is genotyped at $L = 5$ loci with two allele copies per locus. Specifically, the genotype for individual $i$ is $x_i = [(x_{il1}, x_{il2})]_{l=1}^L$ where $x_{i l a} \in \{1, \ldots, V_l\}$ is the allele observed at locus $l$, allele copy $a \in \{1,2\}$. Following \citet{pritchard2000inference}, the admixture model assumes alleles are drawn from one of $k$ latent populations with allele frequencies $\phi_{jl} = (\phi_{jl1}, \ldots, \phi_{jlV_l}) \sim \mathrm{Dirichlet}(1,\ldots,1)$ for $j=1,\ldots, k$, according to ancestry proportions $q_i = (q_{i1}, \ldots, q_{ik}) \sim \mathrm{Dirichlet}(\psi)$, where $\psi = (\psi_1, \ldots, \psi_k)$ is given a hyperprior, $\psi_j\sim\mathrm{Uniform}(0,10)$ independently. Specifically,
\vspace{-1em}
\begin{align*}
    s_{i l a} &\sim \mathrm{Categorical}(q_i), \\
    x_{i l a} \mid s_{i l a} = j &\sim \mathrm{Categorical}(\phi_{jl}).
\end{align*}
The global parameters are $\theta := (\phi,\psi)$, whereas $q_i$ and $s_{i l a}$ are individual-specific.


We implement the LaD approach in \cref{alg:workflow} as follows.
For each $k = 1, \ldots, 10$, we do the following.  For a given $k$, we run STRUCTURE $20$ times and keep the run with the largest estimated log probability; see \cref{supp:admix_details}.
We take the posterior means $\hat{\phi}_{jl}$ and $\hat{\psi}$ from that run, and compute per-individual negative log-likelihoods $\ell_k(x_i; \hat\theta) = -\log p(x_i\mid \hat{\phi},\hat{\psi})$ by integrating out $q_i|\hat{\psi}$ and $s_{i l a}|q_i$ using Monte Carlo; see \cref{supp:admix_details} for details.
It is necessary to integrate out $q_i$ and $s_{i l a}$ when defining the negative log-likelihood values because these are observation-specific latent variables.
The bias-corrected LaD values are then $Z_{i k} = \ell_k(x_i; \hat\theta) + d_k/(2 n)$ where $d_k = k \sum_{l=1}^L (V_l - 1) + k$, since each $\phi_{j l}$ contributes $V_l - 1$ degrees of freedom and $\psi$ contributes $k$. We then apply \cref{alg:proc} with $\alpha_n = n^{0.45}$ and complexity $c(k) = k$ to compute the SLC scores $\hat{w}_{\delta}(k)$.
To compute the rescaled tolerances $\hat{\tau}$, we take the noise model to be 
discrete uniform at each locus: $x_{i l a} \sim \mathrm{Uniform}(\{1,\ldots, V_l\})$ independently for all $i,l,a$, which yields $\muhat_{\mathrm{noise}} = \frac{1}{n} \sum_{i=1}^n \sum_{l=1}^L N_{il} \log V_l$, where $N_{il} \in \{0,1,2\}$ is the number of observed copies at locus $l$ for individual $i$. 



\begin{figure}
    \centering
    \includegraphics[width=0.99\linewidth]{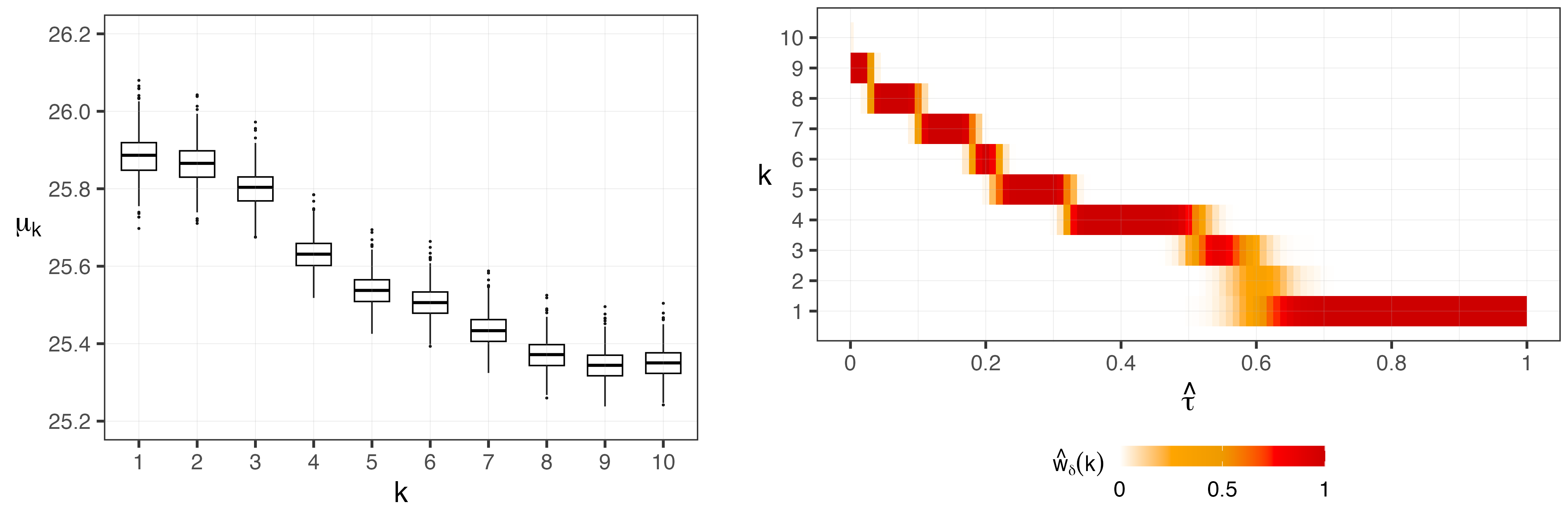}
    \captionsetup{font=small}
    \caption{\emph{(Left)} Boxplot of the posterior of $\mu_k$ for each candidate number of populations $k=1,\ldots,10$. 
    \emph{(Right)} Posterior path showing the SLC scores for $k=1,\ldots, 10$ (y-axis) versus the rescaled tolerance $\hat{\tau}\in [0,1]$ (x-axis), where $\hat{\tau} = \delta/ (\muhat_{\mathrm{noise}} - \min_k \muhat_k)$. Darker red indicates a higher SLC score.}
    \label{fig:brooktrout_lad_pp}
\end{figure}

Figure \ref{fig:brooktrout_lad_pp} shows the LaD results for the brook trout data.
The boxplots (left) show the posterior distribution of $\mu_k$ for each candidate number of populations $k$, and the heatmap (right) shows the SLC score $\hat{w}_{\delta}(k)$ as a function of the rescaled tolerance $\hat{\tau}$. As we increase $\hat{\tau}$ from 0 to 1, the selected number of populations $k$ gradually transitions from higher (more complex models) to lower (simpler models). This provides clear insights into the tradeoff between model complexity and the cost in terms of KL divergence.
For instance, the $k = 4$ population model can capture 65\% of the improvement in fit (that is, $\hat\tau = 0.35$) from the noise model to the most complex candidate model.


For comparison, \cref{fig:brooktrout_evanno} shows the results for Evanno's method \citep{evanno2005detecting}, which is widely used to select $k$ for STRUCTURE.
Evanno's method is based on finding a sharp elbow in the mean log-likelihood $L(k)$ (see \cref{supp:admix_details}), or more precisely, a large value of $\Delta k$, defined as the second-order difference of $L(k)$ divided by the standard deviation of $L(k)$ across runs.
Evanno's method strongly suggests that $k = 2$ should be selected on these data; see \cref{fig:brooktrout_evanno}.
Interestingly, however, our LaD results in \cref{fig:brooktrout_lad_pp} show that $k = 2$ is not a particularly useful choice.
Using $k = 2$ captures around 40\% of the possible improvement in fit ($\hat\tau = 0.6$), but \cref{fig:brooktrout_lad_pp} shows that using $k = 1$ is nearly as good as $k = 2$ on these data in terms of fit.  It is necessary to jump up to $k = 3$ or $4$ to obtain more substantial  gains in fit to the true DGP. 



\begin{figure}
    \centering
    \includegraphics[width=1\linewidth]{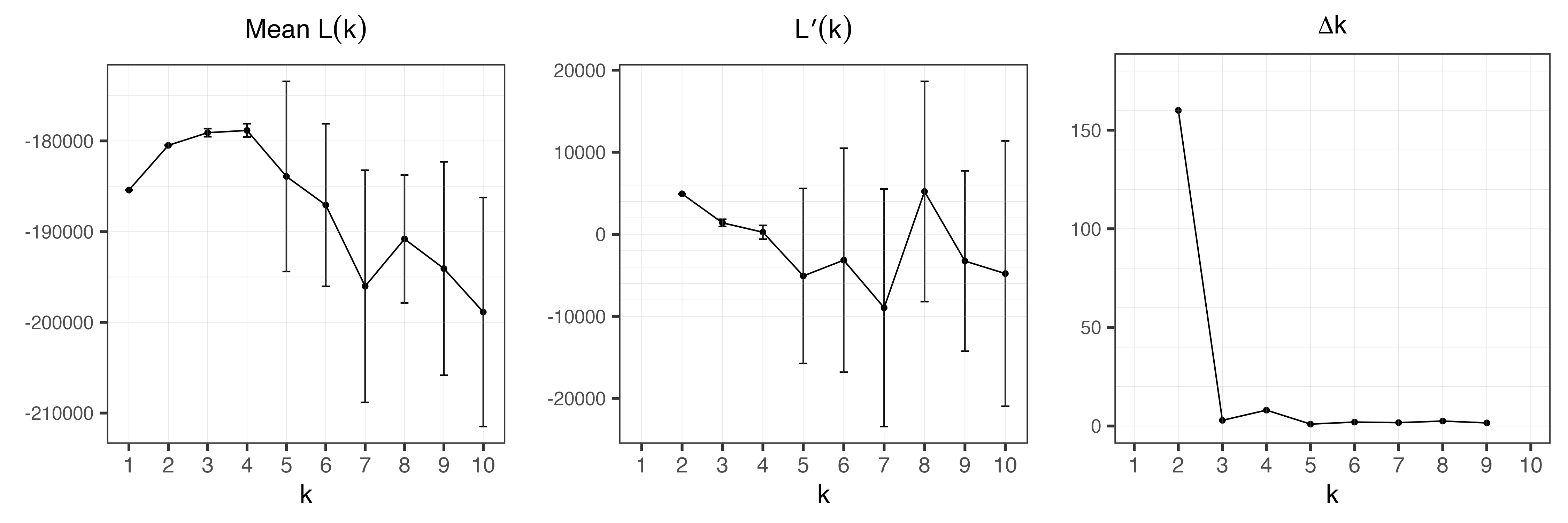}
    \captionsetup{font=small}
    \caption{Evanno's method. 
    \emph{(Left)} Mean $L(k)$ increases with $k$; error bars show standard deviation across replicates. 
    \emph{(Middle)} First difference $L'(k)$.
    \emph{(Right)} $\Delta k = \E|L''(k)| / \text{sd}\big(L(k) \big)$ peaks at $k=2$. Here, $|L''(k)|$ is the absolute second difference.}
    \label{fig:brooktrout_evanno}
\end{figure}

\section{Discussion}\label{sec:conclusion}

The Likelihood as Data (LaD) framework is a novel approach to comparing models based on treating per-observation negative log-likelihoods as data to be analyzed.
LaD quantifies uncertainty in the KL divergence between the true DGP and each candidate model, enabling robust model selection under misspecification by quantitatively characterizing the tradeoff between simplicity and fit.
The LaD approach is simple to compute, asymptotically consistent, and stable in the presence of ties. 

Although LaD is a versatile method, its performance hinges on several assumptions. These include independent observations, finite second moments of the per-observation negative log-likelihoods, and accurate point estimators for the model parameters. If these assumptions are not satisfied, LaD may be unstable or exhibit overconfidence. Furthermore, when there are a large number of candidate models, LaD may encounter difficulties due to the computational burden of exhaustively evaluating all models and estimating the full $K \times K$ covariance $\Sigma$. 

There are several interesting directions for future work.
Placing a prior on $\mu_1,\ldots,\mu_K$ that allows $\mu_k = \mu_j$ with positive probability may improve the stability of using the plug-in posterior $P(k\in M^*_\delta(\mu) \mid Z_{1:n})$ rather than our SLC score. 
Extending LaD to general loss functions beyond negative log-likelihood and developing a frequentist analogue based on sampling distributions are also promising directions.
Large $K$ could be handled by clustering the columns of $Z$ and working at the cluster level rather than over all $K$. Finally, the LaD idea can be used more broadly beyond model selection. For example, one could explore the relationships among models and among data points via dimensionality reduction of the $Z$ matrix. 

\section*{Acknowledgments}

We thank Jackson Loper, David Frazier, Colin Kremer, Zhiyi Chi, and Dipak Dey for helpful conversations.
N.A.S was supported in part by the Makuch Faculty Fund Award in Mathematical and Data Sciences at University of Connecticut.
J.W.M. was supported in part by the National Cancer Institute of the NIH under award number R01CA240299.
The content is solely the responsibility of the authors and does not necessarily represent the official views of the National Institutes of Health.




\putbib 
\end{bibunit}


\clearpage
\setcounter{page}{1}
\setcounter{section}{0}
\setcounter{table}{0}
\setcounter{figure}{0}
\setcounter{algorithm}{0}
\renewcommand{\theHsection}{SIsection.\arabic{section}}
\renewcommand{\theHtable}{SItable.\arabic{table}}
\renewcommand{\theHfigure}{SIfigure.\arabic{figure}}
\renewcommand{\theHalgorithm}{SIalgorithm.\arabic{algorithm}}
\renewcommand{\thepage}{S\arabic{page}}  
\renewcommand{\thesection}{S\arabic{section}}   
\renewcommand{\thetable}{S\arabic{table}}   
\renewcommand{\thefigure}{S\arabic{figure}}
\renewcommand{\thealgorithm}{S\arabic{algorithm}}

\begin{bibunit}
\begin{center}
{\Large Supplementary material for ``Robust model selection using likelihood as data''}
\end{center}

\thispagestyle{empty}

\section{Previous work}\label{supp:previous}

To the best of our knowledge, no prior work proposes LaD in the form we use or develops the theoretical motivation we rely on.
This section briefly summarizes the most closely related previous work.

The core idea behind LaD is related to Vuong’s test \citep{vuong1989likelihood}, which compares two parametric models using the average per-observation log-likelihood difference and tests the null that the models are equally close to the true DGP, where ``closeness'' is defined in terms of KL divergence.
Similarly, \citet{diebold1995} propose a test for equal predictive accuracy of two competing forecasts, again yielding a pairwise testing tool rather than a joint analysis of multiple models.
Several extensions of Vuong’s test and other related model selection methods have been proposed (see, for example, \citealp{shimodaira2001multiple, royall2003interpreting, royall2004likelihood, clarke2007simple, barmalzan2013model, diebold2015comparing, schennach2017simple, sayyareh2022testing, bruck2023corrected}), but none of these are close to the LaD method, which extends to more than two models and jointly quantifies uncertainty in how well each model approximates the DGP.

Several methods have been proposed for finding a parsimonious model whose fit is nearly as good as the best available model. 
Widely used heuristics such as the one-standard-error rule \citep{hastie2009elements, chen2021one} select the most parsimonious model whose cross-validated error is within one estimated standard error of the minimum.
\citet{hansen2011model} introduce the model confidence set (MCS), which compares multiple candidate models via sequential testing and elimination: clearly inferior models are removed until the remaining set contains the best model(s) with a chosen confidence level.
In Bayesian work, \citet{goutis1998model} and \citet{dupuis2003variable} quantify the loss of explanatory power incurred by KL projection from a full model to a submodel and retain only those submodels whose estimated loss stays below a pre-specified threshold; see also \citet{vehtari2004model} and \citet{Lindsay2009model}.
Our LaD method also provides a way of tackling this goal, but operates very differently: it works directly with expected log-likelihood (equivalently, KL distance), incorporates model complexity in a transparent way, and yields a Bayesian posterior over the full vector of KL distances.

Another line of work addresses misspecification by modifying the posterior update itself. 
In contrast with classical robust Bayesian analysis, which focuses on sensitivity to the prior \citep{berger1994overview}, these more recent approaches robustify likelihood-based inference by modifying the likelihood, for instance, using the coarsened posterior \citep{miller2019robust} or generalized Bayes updates based on loss functions \citep{bissiri2016general}. 
There are numerous advances in this direction for parametric inference \citep{knoblauch2022optimization, fong2023martingale, huggins2024reproducible, frazier2025synthetic, dewaskar2025robustifying}, but not primarily for model selection.

Meanwhile, for robust model selection, there have also been recent advances in Bayesian methodology.
\citet{shao2019bayesian} introduce a method that replaces the log marginal likelihood with the Hyvärinen score, a proper scoring rule to obtain Bayes-factor-like comparisons that remain well defined under diffuse or improper priors.
\citet{li2020comparing} propose D-probabilities that compare parametric models to a flexible nonparametric reference via KL divergence, yielding complexity-penalized weights that are less sensitive to prior specification than standard posterior model probabilities. 
\citet{huggins2023reproducible} consider BayesBag, a robust Bayesian model selection method that applies bagging to the posterior by averaging posterior model probabilities over bootstrap-resampled versions of the data.
More recently, \citet{shirvaikar2024general} propose a probabilistic framework for model uncertainty based on forward sampling of missing observations using one-step-ahead prediction.

There is a mature literature on predictive Bayesian approaches to model assessment, selection, and averaging---including cross-validation and predictive aggregation methods \citep{vehtari2012survey, yao2018using, sivula2025uncertainty, mclatchie2025advances}. These approaches are appealing for predictive performance, but their target is typically a predictive utility rather than uncertainty quantification for the KL divergence between the true DGP and each candidate model.


\section{Proofs}\label{supp:proofs}

In this section, we provide proofs of the results in \cref{sec:theory}. For $x\in\bbR^d$, $\|x\|$ denotes the Euclidean norm and for $A\in\bbR^{d\times d}$, $\|A\|$ denotes the Frobenius norm.
For $r > 0$ and $x_0\in\bbR^d$, define the open ball $B_r(x_0) := \{x \in \bbR^d: \, \|x - x_0\| < r\}$.

\begin{condition}\label{cond2}
    Suppose $X, X_1, X_2, \dots, X_n$ are i.i.d.\ random variables defined on a probability space $(\cX, \cF, P)$. Assume the following conditions for all models $k=1,\ldots,K$.  Suppose $\Theta_k \subseteq \mathbb{R}^{d_k}$ is open and $\ell_k: \cX \times \Theta_k \to \mathbb{R}$ is measurable. Throughout, expectations are taken with respect to $P$, and $\Theta_k$ is endowed with the Borel sigma-algebra. 
    Let $\Theta^*_k = \argmin_{\theta_k \in \Theta_k} \E\big(\ell_k(X; \theta_k)\big)$, assumed to be nonempty.
    Assume the following conditions hold for some fixed $\theta^*_k \in \Theta^*_k$ and some $r > 0$ such that $B_{r}(\theta^*_k) \subseteq \Theta_k$.
    \begin{enumerate}[label=(\alph*)]
        \item $\E \big( |\ell_k(X; \theta_k)|\big) < \infty$ for all $\theta_k \in B_{r}(\theta^*_k)$.
        
        \item For all $x \in \cX$, the map $\theta_k \mapsto \ell_k(x; \theta_k)$ is twice continuously differentiable on $B_{r}(\theta^*_k)$.
        
        \item There exists a function $H: \, \cX \to [0, \infty)$ such that $\sup_{\theta_k \in B_{r}(\theta^*_k)} \| \nabla^2_{\theta_k} \ell_k(x;\theta_k) \| \le H(x)$ for all $x\in\cX$, and $\E \big(H(X) \big)< \infty$.

        \item $\E \Big( \big\| \nabla_{\theta_k} \ell_k(X; \theta^*_k) \big\|^2 \Big) < \infty$.
     
        \item There exists an estimator $\hat{\theta}_k = \hat{\theta}_k(X_1,\ldots,X_n) \in \Theta_k$ such that $\|\hat{\theta}_k - \theta^*_k \| = O_p(1/\sqrt{n})$.
    \end{enumerate}  
\end{condition}


We use the following lemma to combine local bounds with high probability events.

\begin{lemma}\label{supp:lemma:high-prob-bound}
    Let $A_1,A_2,\ldots$ be events such that $P(A_n) \to 1$ as $n \to \infty$. Suppose $X_n$ and $Y_n$ are random variables such that $|X_n| \le Y_n$ on $A_n$, and $Y_n \ge 0$ almost surely. 
    If $Y_n = o_p(1)$, then $X_n = o_p(1)$. If $Y_n = O_p(1)$, then $X_n = O_p(1)$.
\end{lemma}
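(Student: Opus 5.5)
The plan is to prove the two claims directly from the definitions of $o_p(1)$ and $O_p(1)$, splitting each probability according to whether $A_n$ occurs.

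\emph{The $o_p(1)$ claim.} Fix $\epsilon>0$. On $A_n$ we have $|X_n|\le Y_n$, so the event $\{|X_n|>\epsilon\}\cap A_n$ is contained in $\{Y_n>\epsilon\}$. This gives
\[
P(|X_n|>\epsilon) \le P(\{|X_n|>\epsilon\}\cap A_n) + P(A_n^c) \le P(Y_n>\epsilon) + P(A_n^c).
\]
The first term tends to $0$ because $Y_n=o_p(1)$ and $Y_n\ge 0$. The second term tends to $0$ by hypothesis. Hence $P(|X_n|>\epsilon)\to 0$, that is, $X_n=o_p(1)$.

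\emph{The $O_p(1)$ claim.} Fix $\epsilon>0$. Since $Y_n=O_p(1)$, there exist $M$ and $N_1$ such that $P(Y_n>M)<\epsilon/2$ for all $n\ge N_1$. Since $P(A_n)\to 1$, there is $N_2$ such that $P(A_n^c)<\epsilon/2$ for all $n\ge N_2$. The same inclusion as above then gives, for $n\ge\max(N_1,N_2)$,
\[
P(|X_n|>M) \le P(Y_n>M) + P(A_n^c) < \epsilon .
\]

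\emph{The finitely many early indices.} For the indices $n<\max(N_1,N_2)$, each $X_n$ is a real-valued random variable, so $P(|X_n|>M_n)<\epsilon$ for some finite $M_n$. Taking the maximum of $M$ and these finitely many $M_n$ gives a single bound valid for all $n$. Alternatively, if one uses the eventual-$n$ form of the definition of $O_p(1)$, this step is unnecessary. Either way $X_n=O_p(1)$.

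There is no real obstacle here; the only point needing care is the handling of the finitely many early indices in the $O_p(1)$ part.
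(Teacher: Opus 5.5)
Your proof is correct and follows essentially the same route as the paper: split on $A_n$ to get $P(|X_n|>\epsilon)\le P(A_n^c)+P(Y_n>\epsilon)$, and for the $O_p(1)$ part choose $M$ so that both $P(Y_n>M)$ and $P(A_n^c)$ are below $\epsilon/2$ for large $n$. Your extra handling of the finitely many early indices is a harmless refinement; the paper omits it by using the ``for all sufficiently large $n$'' form of the definition.
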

\begin{proof}[Proof of \cref{supp:lemma:high-prob-bound}]
    First, assume $Y_n = o_p(1)$. Let $\epsilon >0$. Then 
    \begin{equation}
        P\big( |X_n| > \epsilon \big) \le P( A_n^c) + P\big(|X_n|>\epsilon, \, A_n\big) \le P(A_n^c) + P(Y_n > \epsilon) \longrightarrow 0
    \end{equation}
    because $Y_n = o_p(1)$ and $P(A_n) \to 1$. Since $\epsilon > 0$ is arbitrary, $X_n = o_p(1)$.
    Next, assume $Y_n = O_p(1)$. Let $\delta >0$. Since $Y_n = O_p(1)$, there exists $M < \infty$ such that $P\big(|Y_n| > M) < \delta/2$ for all sufficiently large $n$. Since also $P(A_n^c) \to 0$, then for all sufficiently large $n$,
    \begin{equation}
        P\big( |X_n| > M \big) \le P(A_n^c) + P\big(|X_n|>M, \, A_n\big) \le P(A_n^c) + P\big(|Y_n| > M\big) < \delta.
    \end{equation}
    This shows that $X_n = O_p(1)$.
\end{proof}

\subsection{Plug-in approximation error}

\cref{supp:theorem:param} establishes that replacing the unknown optimal parameters $\theta_k^*$ with their estimates $\hat{\theta}_k$ introduces error $O_p(1/n)$ in the average log-likelihood, which is negligible compared to the empirical approximation error of order $O_p(1/\sqrt{n})$. 

\begin{theorem}\label{supp:theorem:param}
    Assume \cref{cond2} holds for each $k=1, \ldots, K$. Define
    \begin{equation*}
        \oline{Z}_{n k}(\theta_k) = \frac{1}{n}\sum_{i=1}^n \ell_k(X_i; \theta_k),
    \end{equation*}
    for $\theta_k \in \Theta_k$. Let $\oline{Z}_n(\theta) = \big( \oline{Z}_{n 1}(\theta_1),\ldots, \oline{Z}_{n K}(\theta_K) \big)^\mathtt{T}$ for $\theta = (\theta_1, \ldots, \theta_K)$. Then
    \begin{equation}
    \label{eq:R_n-error}
        \oline{Z}_n(\hat{\theta}) = \oline{Z}_n(\theta^*) + O_p(1/n).
    \end{equation}
\end{theorem}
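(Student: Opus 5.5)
Since $K$ is finite, it suffices to prove the claim one coordinate at a time: for each $k$, $\oline{Z}_{n k}(\hat\theta_k) - \oline{Z}_{n k}(\theta^*_k) = O_p(1/n)$. Fix $k$ and use a second-order Taylor expansion of $\theta_k \mapsto \oline{Z}_{n k}(\theta_k)$ about $\theta^*_k$. Let $A_n$ be the event $\{\hat\theta_k \in B_r(\theta^*_k)\}$. By \cref{cond2}(e), $P(A_n)\to 1$. On $A_n$ the segment from $\theta^*_k$ to $\hat\theta_k$ lies in the ball, which is convex. By \cref{cond2}(b), Taylor's theorem with integral (or mean-value) remainder applies pointwise in $x$, and averaging over $i$ gives
\begin{equation*}
\oline{Z}_{n k}(\hat\theta_k) - \oline{Z}_{n k}(\theta^*_k) = G_n^\mathtt{T}(\hat\theta_k - \theta^*_k) + R_n,
\qquad G_n := \frac{1}{n}\sum_{i=1}^n \nabla_{\theta_k}\ell_k(X_i;\theta^*_k),
\end{equation*}
where, by \cref{cond2}(c), $|R_n| \le \tfrac12 \big(\tfrac1n\sum_{i=1}^n H(X_i)\big)\|\hat\theta_k-\theta^*_k\|^2$.

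The remainder is handled as follows. By the weak law of large numbers (using $\E H(X)<\infty$), $\frac1n\sum_i H(X_i) = O_p(1)$. By (e), $\|\hat\theta_k-\theta^*_k\|^2 = O_p(1/n)$. Hence the bound on $R_n$ is $O_p(1/n)$ on $A_n$, and \cref{supp:lemma:high-prob-bound} (applied to $nR_n$) transfers this to $R_n = O_p(1/n)$ unconditionally.

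For the linear term, the key observation is that $\theta^*_k$ is an interior minimizer of $\theta_k\mapsto \E\,\ell_k(X;\theta_k)$, so $\E\,\nabla_{\theta_k}\ell_k(X;\theta^*_k) = 0$. Interchanging differentiation and expectation requires a domination argument, and this is the main technical step. I would proceed as follows:
\begin{itemize}
\item[(1)] By (a), $F(\theta_k) := \E\,\ell_k(X;\theta_k)$ is finite on the ball.
\item[(2)] Write $\ell_k(x;\theta^*_k+h) - \ell_k(x;\theta^*_k) = \nabla\ell_k(x;\theta^*_k)^\mathtt{T} h + \rho(x,h)$ with $|\rho(x,h)|\le \tfrac12 H(x)\|h\|^2$. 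The gradient is integrable by (d), since $\E\|\nabla\ell_k\| \le (\E\|\nabla\ell_k\|^2)^{1/2}$.
\item[(3)] Taking expectations gives $F(\theta^*_k+h)-F(\theta^*_k) = \E[\nabla\ell_k(X;\theta^*_k)]^\mathtt{T} h + O(\|h\|^2)$. Thus $F$ is differentiable at $\theta^*_k$ with gradient $g := \E\,\nabla\ell_k(X;\theta^*_k)$.
\item[(4)] Interior minimality then forces $g=0$; otherwise take $h=-tg$ with $t$ small.
\end{itemize}

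With $g=0$ and (d), $G_n$ is an average of i.i.d. mean-zero vectors with finite second moment. So $\E\|G_n\|^2 = \E\|\nabla\ell_k(X;\theta^*_k)\|^2/n$, and Chebyshev gives $G_n = O_p(n^{-1/2})$. Combined with $\|\hat\theta_k-\theta^*_k\| = O_p(n^{-1/2})$, Cauchy–Schwarz makes the linear term $O_p(1/n)$. Using \cref{supp:lemma:high-prob-bound} once more to remove the restriction to $A_n$, and collecting the $K$ coordinates, yields \cref{eq:R_n-error}.
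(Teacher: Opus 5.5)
Your proposal is correct and follows essentially the same route as the paper. The paper also uses a second-order Taylor expansion on the event $\{\hat\theta_k \in B_r(\theta^*_k)\}$, bounds the remainder by $\tfrac12\big(\tfrac1n\sum_i H(X_i)\big)\|\hat\theta_k-\theta^*_k\|^2$, kills the linear term via $\E\,\nabla_{\theta_k}\ell_k(X;\theta^*_k)=0$ at the interior minimizer, and removes the restriction to the event with \cref{supp:lemma:high-prob-bound}. The only differences are minor: you get the vanishing mean gradient by integrating the pointwise Taylor bound rather than via a dominated-differentiation theorem with dominating function $\|\nabla\ell(x;\theta^*)\|+rH(x)$, and you use Chebyshev rather than the CLT to obtain $G_n=O_p(n^{-1/2})$.
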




\begin{proof}[Proof of \cref{supp:theorem:param}]
We fix $k$ and show \cref{eq:R_n-error} holds in the $k$th component. For notational clarity, we suppress the subscript $k$ in the proof, and we define the function $R_n(\theta) = \tfrac{1}{n} \sum_{i=1}^n \ell(X_i; \theta)$ to represent $\oline{Z}_{n k}(\theta_k)$.
Define the event $A_n = \{\|\hat\theta - \theta^*\| < r\}$, that is, $A_n = \{\hat{\theta}\in B_r(\theta^*)\}$. Then $P(A_n) \to 1$ as $n\to\infty$, since $\|\hat\theta - \theta^*\| = o_p(1)$ by \cref{cond2}(e).
On the event $A_n$, since $B_r(\theta^*)$ is convex, the line segment between $\theta^*$ and $\hat\theta$ is contained in $B_r(\theta^*)$. 
Furthermore, for all $x \in \cX$, the map $\theta \mapsto \ell(x; \theta)$ is twice continuously differentiable on $B_r(\theta^*)$ by \cref{cond2}(b).
Thus, on event $A_n$, Taylor's theorem implies that there exists some $\tilde{\theta}$ on the line segment between $\theta^*$ and $\hat{\theta}$ such that 
\begin{equation}\label{eq:Rn_taylor}
    R_n(\hat{\theta}) = R_n(\theta^*) + \nabla_{\theta} R_n(\theta^*)^\texttt{T} (\hat{\theta} - \theta^*) + \frac{1}{2} (\hat{\theta} - \theta^*)^\texttt{T} \nabla_{\theta}^2 R_n(\tilde{\theta}) (\hat{\theta} - \theta^*).
\end{equation}
To make $\tilde{\theta}$ well-defined on the whole sample space, we define $\tilde{\theta} = \theta^*$ on $A_n^c$. 
Then, by \cref{supp:lemma:high-prob-bound}, it suffices to show that the first- and second-order terms in \cref{eq:Rn_taylor} are $O_p(1/n)$.
Since $P(A_n) \to 1$, this will imply that $|R_n(\hat{\theta}) - R_n(\theta^*)| = O_p(1/n)$ by \cref{supp:lemma:high-prob-bound}, proving the result (\cref{eq:R_n-error}).

We claim the first-order term of \cref{eq:Rn_taylor} is $O_p(1/n)$. To establish this, we will show that $\|\nabla_{\theta} R_n(\theta^*)\| = O_p(1/\sqrt{n})$ and then multiply by $\big\| \hat\theta - \theta^* \big\| = O_p(1/\sqrt{n})$, using \cref{cond2}(e). 
Fix $x \in \cX$. By \cref{cond2}(b), for any $\theta \in B_r(\theta^*)$ we can write
\begin{equation}\label{eq:grad_int}
    \nabla_\theta \ell(x; \theta) - \nabla_\theta \ell(x; \theta^*) = \int_0^1 \nabla_\theta^2 \ell \big(x;\, \theta^* + t(\theta-\theta^*) \big) (\theta-\theta^*)\,dt
\end{equation}
by applying the fundamental theorem of calculus to each entry of the function $f(t) = \nabla_\theta \ell(x; \, \theta^* + t(\theta-\theta^*))$ for $t\in[0,1]$.
Taking norms and using \cref{cond2}(c),
\begin{align}
    \big\|\nabla_\theta \ell(x;\theta) - \nabla_\theta \ell(x;\theta^*)\big\| &\le \int_0^1 \Big\|\nabla_\theta^2 \ell\big(x;\,\theta^* + t(\theta-\theta^*)\big)\Big\|\,\|\theta-\theta^*\|\,dt  \\
    &\le \sup_{\theta' \in B_r(\theta^*)}\|\nabla_\theta^2 \ell(x; \theta')\| \, \|\theta-\theta^*\| \\ 
    &\le H(x)\, \|\theta-\theta^*\|.
\end{align}
In particular, for each fixed $x$, $\nabla_\theta \ell(x; \theta) \to \nabla_\theta \ell(x; \theta^*)$ as $\theta \to \theta^*$. Moreover, by the reverse triangle inequality, for all $\theta\in B_r(\theta^*)$,
\begin{align}\label{eq:grad_dom}
    \|\nabla_\theta \ell(x;\theta)\| &\le \|\nabla_\theta \ell(x;\theta^*)\| + H(x)\|\theta-\theta^*\| \\
    &\le \|\nabla_\theta \ell(x;\theta^*)\| + rH(x).
\end{align}

For each coordinate $j$ and each $\theta \in B_r(\theta^*)$,
\begin{equation}
     \Big| \frac{\partial}{\partial \theta_j} \ell(X; \theta) \Big| \le \big\| \nabla_\theta \ell(X; \theta) \big\| \le \big\| \nabla_\theta \ell(X; \theta^*) \big\| + r H(X),
\end{equation}
and the right-hand side is integrable by \cref{cond2}(c,d). Therefore, we may interchange differentiation and expectation componentwise \citep[Theorem 2.27(b)]{folland1999real}, which implies that $R(\theta) = \E\big(\ell(X; \theta)\big)$ is differentiable at $\theta^*$ and $\nabla_\theta R(\theta^*) = \E \big(\nabla_\theta \ell(X; \theta^*) \big)$.
Since $\Theta$ is open and $\theta^*$ is a minimizer of $R(\theta)$, it follows that $\nabla_\theta R(\theta^*) = 0$, and hence, $\E \big(\nabla_\theta \ell(X; \theta^*) \big) = 0$. Hence, the central limit theorem implies 
\begin{equation}
    \sqrt{n} \nabla_\theta R_n(\theta^*) = \frac{1}{\sqrt{n}} \sum_{i=1}^n \nabla_\theta \ell(X_i; \theta^*) \xrightarrow[n \to \infty]{\mathrm{d}} \cN\Big(0, \, \C\big( \nabla_\theta \ell(X; \theta^*)  \big)  \Big)
\end{equation}
since the covariance matrix $\C(\nabla_\theta \ell(X; \theta^*))$ exists and is finite due to \cref{cond2}(d).
This shows that $\|\nabla_\theta R_n(\theta^*)\| = O_p(1/\sqrt{n})$. Thus, by Cauchy--Schwarz,
\begin{equation}
\label{eq:first_order_term}
    |\nabla_{\theta} R_n(\theta^*)^\texttt{T} (\hat{\theta} - \theta^*)| =  O_p(1/\sqrt{n}) \times  O_p(1/\sqrt{n}) =  O_p(1/n).
\end{equation}

Next, consider the second-order term of \cref{eq:Rn_taylor}. 
By definition, we always have $\tilde{\theta}\in B_r(\theta^*)$ since $\hat{\theta}\in B_r(\theta^*)$ on $A_n$ and $\tilde{\theta} = \theta^*$ on $A_n^c$. Hence,
\begin{align}
    \big\| \nabla^2_\theta R_n(\tilde{\theta}) \big\| &= \Big\| \frac{1}{n} \sum_{i=1}^n \nabla^2_\theta \ell(X_i; \tilde{\theta}) \Big\| \\
                                              &\le \frac{1}{n} \sum_{i=1}^n \sup_{\theta \in B_r(\theta^*)} \big\| \nabla^2_\theta \ell(X_i; \theta) \big\| \\
                                              &\le \frac{1}{n} \sum_{i=1}^n H(X_i)
\end{align}
by \cref{cond2}(c). Therefore,
\begin{equation}
\label{eq:second_order_term}
    \Big| \frac{1}{2} (\hat{\theta} - \theta^*)^\texttt{T} \nabla_\theta^2 R_n(\tilde{\theta})  (\hat{\theta} - \theta^*) \Big| \le \frac{1}{2} \|\hat{\theta} - \theta^*\|^2\, \frac{1}{n} \sum_{i=1}^n H(X_i) = O_p(1/n)
\end{equation}
since  $\big\| \hat\theta - \theta^* \big\| = O_p(1/\sqrt{n})$ by \cref{cond2}(e) and $\frac{1}{n}\sum_{i=1}^n H(X_i) \xrightarrow[]{\mathrm{a.s.}} \E \big(H(X)\big) < \infty$ by the strong law of large numbers.
\end{proof}

\subsection{Bias correction is asymptotically negligible}

While the bias correction in \cref{sec:method:param:overfitting} is useful in smaller sample settings, it is simpler to work with the uncorrected LaD values in theoretical analyses. To this end, our next result shows that, asymptotically, the bias-corrected negative log-likelihood $\oline{Z}_{n}^{\mathrm{bc}}(\hat\theta)$ behaves like the uncorrected version $\oline{Z}_{n}(\hat\theta)$; compare \cref{eq:R_n_bc-error} to \cref{eq:R_n-error}.  This means that asymptotically, the validity of frequentist inference for $\mu^0$ based on $\oline{Z}_n(\hat\theta)$ (\cref{sec:method:param}) is unaffected by the bias correction. 


\begin{corollary}\label{supp:cor:bc}
    Assume \cref{cond2} holds and the loss is the negative log-likelihood, $\ell_k(x; \theta_k) = -\log f_k(x; \theta_k)$, with parameter dimension $d_k = \dim(\Theta_k)$. Define
    \begin{equation}
        \oline{Z}_n^{\mathrm{bc}}(\hat\theta) := \oline{Z}_n(\hat\theta) + \frac{d}{2 n}
    \end{equation}
    where $d = (d_1,\ldots,d_K)^\mathtt{T}\in \mathbb{R}^K$ and $\oline{Z}_n(\theta)$ is defined as in \cref{supp:theorem:param}. Then 
    \begin{equation}\label{eq:R_n_bc-error}
        \oline{Z}_{n}^{\mathrm{bc}}(\hat\theta) = \oline{Z}_{n}(\theta^*) + O_p(1/n).
    \end{equation}
    In particular, if $\sqrt{n}\big( \oline{Z}_n(\hat\theta) - \mu^0 \big)$ converges in distribution, then $\sqrt{n}\big( \oline{Z}_n^{\mathrm{bc}}(\hat\theta) - \mu^0 \big)$ converges in distribution to the same limit.
\end{corollary}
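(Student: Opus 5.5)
The plan is to reduce the statement to \cref{supp:theorem:param}, since the bias correction is a deterministic shift of size $O(1/n)$. By definition,
\[
\oline{Z}_n^{\mathrm{bc}}(\hat\theta) - \oline{Z}_n(\theta^*) = \big(\oline{Z}_n(\hat\theta) - \oline{Z}_n(\theta^*)\big) + \frac{d}{2n}.
\]
The first bracket is $O_p(1/n)$ by \cref{supp:theorem:param}. Note that \cref{cond2} with the negative log-likelihood loss is exactly a special case of its hypotheses, so no extra conditions are needed.

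The second term is a nonrandom vector whose Euclidean norm is $\|d\|/(2n)$. Since $K$ and the $d_k$ are fixed, this is $O(1/n)$ and hence trivially $O_p(1/n)$. A sum of two $O_p(1/n)$ terms is $O_p(1/n)$, which gives \cref{eq:R_n_bc-error}.

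For the distributional claim, I would write
\[
\sqrt{n}\big(\oline{Z}_n^{\mathrm{bc}}(\hat\theta) - \mu^0\big) = \sqrt{n}\big(\oline{Z}_n(\hat\theta) - \mu^0\big) + \frac{d}{2\sqrt{n}}.
\]
The added term is deterministic and tends to $0$. By Slutsky's theorem, the left side therefore has the same limit in distribution as $\sqrt{n}(\oline{Z}_n(\hat\theta) - \mu^0)$ whenever the latter converges.

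There is no real obstacle here; the only point needing care is confirming that the hypotheses of \cref{supp:theorem:param} are met as stated, which holds since they coincide with \cref{cond2}.
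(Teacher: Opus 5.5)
Your proposal is correct and follows the paper's proof essentially line for line. Like the paper, you invoke \cref{supp:theorem:param} for the $O_p(1/n)$ plug-in error and absorb the deterministic $d/(2n)$ shift. You then apply Slutsky's theorem to the vanishing term $d/(2\sqrt{n})$ to get the distributional claim.
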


\begin{proof}[Proof of \cref{supp:cor:bc}]\label{supp:cor:bc:proof}
    By \cref{supp:theorem:param}, 
    $$
    \oline{Z}_{n}^{\mathrm{bc}}(\hat\theta) = \oline{Z}_n(\hat\theta) + \frac{d}{2 n} = \oline{Z}_n(\theta^*) + O_p(1/n) + \frac{d}{2 n} = \oline{Z}_n(\theta^*) + O_p(1/n),
    $$
    which is \cref{eq:R_n_bc-error}.
    Moreover,
    $$
    \sqrt{n}\big( \oline{Z}_n^{\mathrm{bc}}(\hat\theta) - \mu^0 \big) = \sqrt{n}\big( \oline{Z}_n(\hat\theta) - \mu^0 \big) + \frac{d}{2\sqrt{n}},
    $$
    and $d/(2\sqrt{n}) \to 0$, so the two sequences have the same weak limit by Slutsky's theorem.
\end{proof}

\subsection{Asymptotics of the data-averaged LaD posterior}
\label{sec:posterior-asymptotics}

We now return to considering all $K$ models, and focus on the Bayesian posterior in \cref{sec:method:bayesinf}.
Here, we set $Z_{i k} = \ell_k(X_i; \hat{\theta}_k)$, and as in \cref{sec:method:bayesinf}, $Z_i = (Z_{i 1},\ldots,Z_{i K})^\mathtt{T}$, $\oline{Z}_n = \frac{1}{n}\sum_{i=1}^n Z_i$, $S_n = \frac{1}{n}\sum_{i=1}^n (Z_i - \oline{Z}_n)(Z_i - \oline{Z}_n)^\mathtt{T}$, and $\mu_n = \big( \lambda_0 \mu_0 + n\oline{Z}_n \big) / \lambda_n$.
We use the notation $Z_i(\theta^*)$, $\oline{Z}_n(\theta^*)$, $S_n(\theta^*)$, and $\mu_n(\theta^*)$, to denote the corresponding values with $\theta^*_k$ in place of $\hat{\theta}_k$ for all $k$.
Also, recall the definitions of $\mu^0$ and $\Sigma^0$ in \cref{eq:mu_cov}.

\begin{theorem}\label{supp:theorem:niw_clt_a}
    Assume Conditions \ref{cond1} and \ref{cond2} hold. Then
    $$
        \sqrt{n} \big( \mu_n - \mu^0 \big) \xrightarrow[n\to\infty]{\mathrm{d}} \cN(0, \Sigma^0).
    $$
\end{theorem}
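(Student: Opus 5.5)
The plan is to write $\mu_n - \mu^0$ as a sum of three pieces: the CLT term for the ideal LaD vectors, the plug-in error, and the prior shrinkage. Since $\lambda_n = \lambda_0 + n$ and $\mu_n = (\lambda_0\mu_0 + n\oline{Z}_n)/\lambda_n$,
\begin{equation*}
\mu_n - \mu^0 = \frac{n}{\lambda_n}\big(\oline{Z}_n(\theta^*) - \mu^0\big) + \frac{n}{\lambda_n}\big(\oline{Z}_n - \oline{Z}_n(\theta^*)\big) + \frac{\lambda_0}{\lambda_n}\big(\mu_0 - \mu^0\big).
\end{equation*}
We multiply by $\sqrt{n}$ and handle each term in turn.

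\textbf{First term.} The vectors $Z_i(\theta^*) = \ell(X_i;\theta^*)$ are i.i.d.\ with mean $\mu^0$ and covariance $\Sigma^0$, which is finite by \cref{cond1}(i). The multivariate central limit theorem therefore gives $\sqrt{n}(\oline{Z}_n(\theta^*) - \mu^0) \xrightarrow{\mathrm{d}} \cN(0,\Sigma^0)$. Since $n/\lambda_n \to 1$ is deterministic, this limit is unchanged after multiplying by $n/\lambda_n$. Positive definiteness in \cref{cond1}(ii) is not needed for the convergence itself; it only ensures the limit is nondegenerate.

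\textbf{Second term.} Here $\oline{Z}_n$ is the plug-in average $\oline{Z}_n(\hat\theta)$ defined in \cref{sec:posterior-asymptotics}. By \cref{supp:theorem:param}, which applies under \cref{cond2} for each $k$, we have $\oline{Z}_n(\hat\theta) - \oline{Z}_n(\theta^*) = O_p(1/n)$. Hence $\sqrt{n}(n/\lambda_n)$ times this difference is $O_p(n^{-1/2}) = o_p(1)$.

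\textbf{Third term.} This is deterministic: $\sqrt{n}\,\lambda_0\|\mu_0 - \mu^0\|/(\lambda_0+n) = O(n^{-1/2}) \to 0$.

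Combining the three terms by Slutsky's theorem gives $\sqrt{n}(\mu_n - \mu^0) \xrightarrow{\mathrm{d}} \cN(0,\Sigma^0)$. If the bias-corrected values of \cref{sec:method:param:overfitting} are used instead, \cref{supp:cor:bc} shows the extra shift $d/(2n)$ contributes only $O(n^{-1/2})$ after scaling, so the conclusion is the same.

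There is no real obstacle, because the difficult step, the $O_p(1/n)$ plug-in error, is already established in \cref{supp:theorem:param}. The one point to check is that the identification of $\mu^0 = \E(\ell(X;\theta^*))$ in \cref{eq:mu_cov} uses the same fixed $\theta^*_k$ that appears in \cref{cond2}. This matters if $\Theta_k^*$ contains more than one element; otherwise the two centerings could differ.
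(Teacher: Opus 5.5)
Your proof is correct and follows the paper's argument essentially verbatim. The paper uses the same three pieces (the plug-in error controlled by \cref{supp:theorem:param}, the CLT for $\oline{Z}_n(\theta^*)$ scaled by $n/\lambda_n$, and the $O(n^{-1/2})$ prior-shrinkage term), combined with Slutsky's theorem. One small correction to your closing remark: $\mu^0_k = \min_{\theta_k}\E\,\ell_k(X;\theta_k)$ is the same for every element of $\Theta_k^*$, so the centering cannot differ; it is $\Sigma^0$ that depends on which $\theta_k^*$ is fixed in \cref{cond2}.
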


\begin{proof}[Proof of \cref{supp:theorem:niw_clt_a}]\label{supp:theorem:niw_clt_a:proof}
    By \cref{supp:theorem:param} applied coordinate-wise to the vector $\oline{Z}_n$, we have  $\oline{Z}_n = \oline{Z}_n (\theta^*) + O_p\big(1/n \big)$. It follows that 
    \begin{equation}
        \sqrt{n} \big( \mu_n - \mu_n(\theta^*) \big) = \sqrt{n} \frac{n}{\lambda_n} \big( \oline{Z}_n - \oline{Z}_n (\theta^*) \big) = \sqrt{n} \,O_p(1/n) = o_p(1),
    \end{equation}
    because $n / \lambda_n = n/(\lambda_0 + n) \to 1$ as $n \to \infty$. Thus,
    \begin{equation}\label{eq:data_decomp}
        \sqrt{n} \big( \mu_n - \mu^0 \big) = \sqrt{n} \big(\mu_n(\theta^*) - \mu^0 \big) + o_p(1).
    \end{equation}
    Using $\lambda_n = \lambda_0 + n$, we have
    \begin{equation}\label{eq:mu_n_decomp}
        \sqrt{n} \big( \mu_n(\theta^*) - \mu^0 \big) = \frac{n}{\lambda_n} \sqrt{n}\big( \oline{Z}_n(\theta^*) - \mu^0 \big) + \sqrt{n} \frac{\lambda_0}{\lambda_n} \big( \mu_0 - \mu^0 \big).
    \end{equation}
    Since $n/\lambda_n \to 1$, $\sqrt{n}\big( \oline{Z}_n(\theta^*) - \mu^0 \big) \xrightarrow[]{\mathrm{d}} \cN(0, \Sigma^0)$ by the CLT, and the second term of \cref{eq:mu_n_decomp} is $O(1/\sqrt{n})$, Slutsky's theorem gives 
    \begin{equation}\label{eq:mu_n_conv}
        \sqrt{n} \big( \mu_n(\theta^*) - \mu^0 \big)\xrightarrow[n \to \infty]{\mathrm{d}} \cN(0, \Sigma^0).
    \end{equation}
    Hence, with \cref{eq:data_decomp}, another application of Slutsky's theorem yields 
    \begin{equation}\label{eq:data_decomp_limit}
        \sqrt{n} \big( \mu_n - \mu^0 \big) \xrightarrow[n\to\infty]{\mathrm{d}} \cN(0, \Sigma^0).
    \end{equation}
    This completes the proof.
\end{proof}

\begin{theorem}\label{supp:theorem:Sn_consistency}
    Assume Conditions \ref{cond1} and \ref{cond2} hold. Then
    $$
        S_n \xrightarrow[n \to \infty]{\mathrm{p}} \Sigma^0.
    $$
\end{theorem}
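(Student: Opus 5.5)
We need to show $S_n \to \Sigma^0$ in probability, where $S_n$ is built from the plug-in values $Z_i = \ell(X_i;\hat\theta)$. The plan is to first handle the oracle version $S_n(\theta^*)$ and then show that replacing $\theta^*$ by $\hat\theta$ changes $S_n$ by $o_p(1)$.

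\textbf{Oracle part.} Write $S_n(\theta^*) = \frac1n\sum_i Z_i(\theta^*)Z_i(\theta^*)^\mathtt{T} - \oline{Z}_n(\theta^*)\oline{Z}_n(\theta^*)^\mathtt{T}$. By \cref{cond1}(i), the entries of $Z_i(\theta^*)Z_i(\theta^*)^\mathtt{T}$ are integrable, since $|ab|\le (a^2+b^2)/2$. The strong law of large numbers then gives $\frac1n\sum_i Z_i(\theta^*)Z_i(\theta^*)^\mathtt{T} \to \E(Z_1(\theta^*)Z_1(\theta^*)^\mathtt{T})$ and $\oline{Z}_n(\theta^*)\to\mu^0$ almost surely. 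Hence $S_n(\theta^*)\to\Sigma^0$ almost surely.

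\textbf{Plug-in error.} Set $D_i := Z_i - Z_i(\theta^*)$ and $\bar D_n := \frac1n\sum_i D_i$. With centered quantities $\tilde Z_i(\theta^*) := Z_i(\theta^*) - \oline{Z}_n(\theta^*)$ and $\tilde D_i := D_i - \bar D_n$, we have
\[
S_n - S_n(\theta^*) = \frac1n\sum_i \big(\tilde Z_i(\theta^*)\tilde D_i^\mathtt{T} + \tilde D_i \tilde Z_i(\theta^*)^\mathtt{T} + \tilde D_i\tilde D_i^\mathtt{T}\big).
\]
By Cauchy--Schwarz, it suffices to show $\frac1n\sum_i\|D_i\|^2 = o_p(1)$. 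Indeed, $\frac1n\sum_i\|\tilde D_i\|^2 \le \frac1n\sum_i\|D_i\|^2$, and $\frac1n\sum_i\|\tilde Z_i(\theta^*)\|^2 = \mathrm{tr}\, S_n(\theta^*) = O_p(1)$, so the cross terms are bounded by $O_p(1)^{1/2} o_p(1)^{1/2} = o_p(1)$.

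\textbf{Main step: bounding $\frac1n\sum_i\|D_i\|^2$.} This is the main obstacle, because \cref{supp:theorem:param} only controls the average of the $D_i$, not their squares. Work coordinatewise in $k$ on the event $A_n=\{\hat\theta_k\in B_r(\theta_k^*)\}$, whose probability tends to 1. By the mean value theorem and the gradient bound \cref{eq:grad_dom} from the proof of \cref{supp:theorem:param},
\[
|D_{ik}| \le \big(\|\nabla\ell_k(X_i;\theta^*_k)\| + rH(X_i)\big)\,\|\hat\theta_k-\theta_k^*\|.
\]
Squaring and averaging gives
\[
\frac1n\sum_i D_{ik}^2 \le 2\|\hat\theta_k-\theta_k^*\|^2\,\frac1n\sum_i\big(\|\nabla\ell_k(X_i;\theta_k^*)\|^2 + r^2H(X_i)^2\big).
\]

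The gradient term is fine: \cref{cond2}(d) gives integrability, so its average is $O_p(1)$. The difficulty is that \cref{cond2}(c) only gives $\E H(X)<\infty$, not $\E H(X)^2<\infty$. To avoid needing $H^2$, I would use the finer bound $|D_{ik}| \le \|\nabla\ell_k(X_i;\theta^*_k)\|\,\|\hat\theta_k-\theta_k^*\| + H(X_i)\|\hat\theta_k-\theta_k^*\|^2$, so the $H$ contribution to the squared average is $\|\hat\theta_k-\theta_k^*\|^4\frac1n\sum_iH(X_i)^2$.

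Next, bound $\frac1n\sum_i H(X_i)^2 \le \big(\max_i H(X_i)\big)\frac1n\sum_i H(X_i)$. Integrability of $H$ gives $\max_{i\le n}H(X_i)=o_p(n)$. Since $\|\hat\theta_k-\theta_k^*\|^4 = O_p(n^{-2})$ and $\frac1n\sum_i H(X_i) = O_p(1)$ by the law of large numbers, this term is $O_p(n^{-2})\,o_p(n)\,O_p(1) = o_p(n^{-1})$, which is certainly $o_p(1)$.

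Combining the two contributions, $\frac1n\sum_i D_{ik}^2 = O_p(1/n)+o_p(1) = o_p(1)$ on $A_n$. \cref{supp:lemma:high-prob-bound} then extends this to the whole sample space. Summing over $k$ and applying Slutsky's theorem yields $S_n\xrightarrow{\mathrm{p}}\Sigma^0$.
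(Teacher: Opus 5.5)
Your proposal is correct and follows essentially the paper's route. Both arguments use almost-sure convergence of the oracle $S_n(\theta^*)$, a Cauchy--Schwarz reduction of $S_n - S_n(\theta^*)$ to the averaged squared plug-in error, and a second-order Taylor bound (gradient at $\theta^*$ plus a Hessian remainder controlled by $H$). They also share the device $\frac1n\sum_i H(X_i)^2 \le \max_i H(X_i)\cdot\frac1n\sum_i H(X_i)$ with $\max_i H(X_i)=o_p(n)$, which avoids needing $\E H(X)^2<\infty$.

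The one small difference is that you center the differences and use $\frac1n\sum_i\|\tilde D_i\|^2\le\frac1n\sum_i\|D_i\|^2$. This spares you the appeal to \cref{supp:theorem:param} for bounding $\oline{Z}_n-\oline{Z}_n(\theta^*)$, which the paper relies on.
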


\begin{proof}[Proof of \cref{supp:theorem:Sn_consistency}]\label{supp:theorem:Sn_consistency:proof}
    Under Condition \ref{cond1}, the strong law of large numbers implies that $S_n(\theta^*) \xrightarrow[n\to\infty]{\mathrm{a.s.}} \Sigma^0$. Define $Y_i := Z_i - \oline{Z}_n$ and $Y_i(\theta^*) := Z_i(\theta^*) - \oline{Z}_n(\theta^*)$. Then
    \begin{align}
    \label{eq:Sn_diff}
        S_n - S_n(\theta^*) &= \frac{1}{n} \sum_{i=1}^n \big( Y_i Y_i^\texttt{T} - Y_i(\theta^*)Y_i(\theta^*)^\texttt{T} \big) \\
        &=\frac{1}{n} \sum_{i=1}^n \Big( \big( Y_i - Y_i(\theta^*) \big) Y_i^\texttt{T} + Y_i(\theta^*) \big( Y_i - Y_i(\theta^*) \big)^\texttt{T} \Big).
    \end{align}
    
    By \cref{eq:Sn_diff} and the fact that $\|uv^\mathtt{T}\| = \|u\| \|v\|$ for the Frobenius norm,
    \begin{equation}
        \| S_n - S_n(\theta^*) \| \le \frac{1}{n} \sum_{i=1}^n \|Y_i - Y_i(\theta^*) \| \big( \|Y_i\| + \|Y_i(\theta^*)\| \big).
    \end{equation}
    By Cauchy--Schwarz,
    \begin{equation} 
    \label{eq:Sn_diff_norm}
        \| S_n - S_n(\theta^*) \| \le \Big(\frac{1}{n}\sum_{i=1}^n \|Y_i - Y_i(\theta^*)\|^2\Big)^{1/2}
        \Big(\frac{1}{n}\sum_{i=1}^n \big(\|Y_i\| + \|Y_i(\theta^*)\| \big)^2 \Big)^{1/2}.
    \end{equation}
    
    We are working with all $K$ models, so $\theta = (\theta_1,\ldots, \theta_K)$ denotes the concatenated parameter vector in $\bbR^D$ with $D = \sum_{k=1}^K d_k$. Also, $Z_i(\theta) \in \bbR^K$, so $\nabla_\theta Z_i(\theta)$ denotes the $K\times D$ Jacobian matrix of the map $\theta \mapsto Z_i(\theta)$, and $\| \nabla_\theta Z_i(\theta) \|$ denotes its Frobenius norm. 
    
    We next bound the two factors in \cref{eq:Sn_diff_norm} on the event $A_n := \big\{ \|\hat\theta - \theta^*\| < r\big\}$.
    First, consider the first factor of \cref{eq:Sn_diff_norm}. Since $Y_i - Y_i(\theta^*) = Z_i - Z_i(\theta^*) - \big( \oline{Z}_n - \oline{Z}_n (\theta^*) \big)$ and $\|u-v\|^2 \le 2\|u\|^2 + 2\|v\|^2$,
    \begin{equation}
    \label{eq:avg_y_diff}
        \frac{1}{n}\sum_{i=1}^n \|Y_i - Y_i(\theta^*)\|^2 \le \frac{2}{n}\sum_{i=1}^n \|Z_i - Z_i(\theta^*)\|^2 + 2\|\oline{Z}_n - \oline{Z}_n (\theta^*) \|^2.
    \end{equation}
    By \cref{supp:theorem:param}, $\|\oline{Z}_n - \oline{Z}_n (\theta^*) \| = O_p(1/n)$, hence the second term of \cref{eq:avg_y_diff} is $O_p(1/n^2)$.
    For the first term, on $A_n$, apply Taylor expansion of $\theta \mapsto Z_i(\theta)$ to each coordinate of $Z_i(\theta)$ around $\theta^*$: for each $i$ we have
    \begin{equation}
        Z_i - Z_i(\theta^*) = \nabla_\theta Z_i(\theta^*) (\hat{\theta} - \theta^*) + R_i, \qquad \|R_i\| \le \frac{1}{2}H(X_i) \|\hat\theta - \theta^*\|^2,
    \end{equation}
    where the remainder bound uses \cref{cond2}(c) coordinate-wise.  
    Therefore, since $\|u + v\|^2 \le 2\|u\|^2 + 2\|v\|^2$,
    \begin{equation}
        \|Z_i - Z_i(\theta^*) \|^2 \le 2\| \nabla_\theta Z_i(\theta^*)\|^2 \|\hat{\theta} - \theta^*\|^2 + \frac{1}{2}H(X_i)^2 \|\hat\theta - \theta^*\|^4.
    \end{equation}
    Averaging over $i$ yields
    \begin{equation}
    \label{eq:avg_z_diff}
        \frac{1}{n}\sum_{i=1}^n \|Z_i - Z_i(\theta^*) \|^2 \le 2\|\hat\theta - \theta^*\|^2 \frac{1}{n}\sum_{i=1}^n \| \nabla_\theta Z_i(\theta^*)\|^2 + \frac{1}{2} \|\hat\theta - \theta^*\|^4 \frac{1}{n} \sum_{i=1}^n H(X_i)^2.
    \end{equation}
    By \cref{cond2}(d), since $\| \nabla_\theta Z_i(\theta^*)\|^2 = \sum_{k=1}^K \| \nabla_{\theta_k} Z_{ik}(\theta^*_k)\|^2 $  and $K$ is fixed, we have $\frac{1}{n}\sum_{i=1}^n \| \nabla_\theta Z_i(\theta^*)\|^2 = O_p(1)$. By \cref{cond2}(e), $\|\hat\theta - \theta^*\|^2 = \sum_{k=1}^K \|\hat{\theta}_k - \theta^*_k\|^2 = O_p(1/n)$; thus, the first term of \cref{eq:avg_z_diff} is $O_p(1/n)$. Also, since $\E\big(H(X)\big) < \infty$, we have $\frac{1}{n}\sum_{i=1}^n H(X_i) = O_p(1)$ and $\max_{1\le i \le n} H(X_i) = o_p(n)$, since for any $\epsilon >0$, 
    \begin{equation}
        P\big(\max_{1 \le i \le n} H(X_i) > \epsilon n) \le n P\big(H(X) > \epsilon n\big) \le \frac{\E\big(H(X)\mathds{1}(H(X)>\epsilon n)\big)}{\epsilon}  \xrightarrow[n \to \infty]{} 0.
    \end{equation}
    Thus,
    \begin{equation}
        \frac{1}{n} \sum_{i=1}^n H(X_i)^2 \le \Big(\max_{1 \le i \le n}H(X_i)\Big) \Big(\frac{1}{n}\sum_{i=1}^n H(X_i) \Big) = o_p(n). 
    \end{equation}
    Combining with $\|\hat\theta - \theta^*\|^4 = O_p(1/n^2)$ shows the second term of \cref{eq:avg_z_diff} is $o_p(1/n)$. Hence, combining \cref{eq:avg_y_diff,eq:avg_z_diff},
    there exists a random variable $U_n$ such that (i) $U_n = O_p(1/n)$ and (ii) $\frac{1}{n}\sum_{i=1}^n \|Y_i - Y_i(\theta^*)\|^2 \leq U_n$ on $A_n$.
    
    For the second factor of \cref{eq:Sn_diff_norm}, note that
    \begin{equation}
    \label{eq:Sn_norm_sum}
        \frac{1}{n}\sum_{i=1}^n \big(\|Y_i\| + \|Y_i(\theta^*)\| \big)^2 \le \frac{2}{n} \sum_{i=1}^n \|Y_i\|^2 + \frac{2}{n}\sum_{i=1}^n \|Y_i(\theta^*)\|^2.
    \end{equation}
    The second term in \cref{eq:Sn_norm_sum} is $O_p(1)$ since $S_n(\theta^*) \xrightarrow[]{\mathrm{a.s.}} \Sigma^0$ implies $\frac{1}{n}\sum_{i=1}^n \|Y_i(\theta^*)\|^2 = \mathrm{tr}\big(S_n(\theta^*)\big) = O_p(1)$.
    Meanwhile, for the first term in \cref{eq:Sn_norm_sum},
    \begin{equation}
        \frac{1}{n}\sum_{i=1}^n \|Y_i\|^2 \le \frac{2}{n}\sum_{i=1}^n \|Y_i(\theta^*)\|^2 + \frac{2}{n}\sum_{i=1}^n \|Y_i - Y_i(\theta^*)\|^2
    \end{equation}
    and we just showed that $\frac{1}{n}\sum_{i=1}^n \|Y_i - Y_i(\theta^*)\|^2 \leq U_n$ on $A_n$, where $U_n = O_p(1/n)$.  Hence, there exists $V_n$ such that (i) $V_n = O_p(1)$ and (ii) $\frac{1}{n}\sum_{i=1}^n \big(\|Y_i\| + \|Y_i(\theta^*)\| \big)^2 \leq V_n$ on $A_n$. 

    From the bounds above, the right-hand side of \cref{eq:Sn_diff_norm} is upper bounded by $U_n^{1/2} V_n^{1/2}$ on $A_n$, and $U_n^{1/2} V_n^{1/2} = O_p(1/\sqrt{n})\, O_p(1) = o_p(1)$.
    Since $P(A_n) \to 1$ by \cref{cond2}(e), \cref{supp:lemma:high-prob-bound} implies $\|S_n - S_n(\theta^*)\| = o_p(1)$.
    Therefore, $\|S_n - S_n(\theta^*)\| \xrightarrow[]{\mathrm p} 0$. Along with $S_n(\theta^*) \xrightarrow[]{\mathrm{a.s.}} \Sigma^0$, this shows that $S_n \xrightarrow[]{\mathrm{p}} \Sigma^0$ as $n\to\infty$.

\end{proof}

\cref{supp:theorem:niw_clt_b,supp:theorem:niw_clt_c} pertain to the marginal distributions of $\mu'$ and $\Sigma'$, integrating out the data; this is sometimes referred to as the data-averaged posterior. 
In other words, we consider the marginal distributions of $\mu'$ and $\Sigma'$ when $Z_1,\ldots,Z_n$ are distributed according to the true data distribution and $(\mu', \Sigma') \sim p(\mu, \Sigma \mid Z_{1:n})$ according to the Bayesian model.
See the beginning of \cref{sec:posterior-asymptotics} for a summary of the notation.

\begin{theorem}\label{supp:theorem:niw_clt_b}
    Assume Conditions \ref{cond1} and \ref{cond2} hold, and $(\mu', \Sigma') \sim p(\mu, \Sigma \mid Z_{1:n})$ as in \cref{eq:NIW-posterior}. Then 
    $$
        \Sigma' \xrightarrow[n \to \infty]{\mathrm{p}} \Sigma^0.
    $$
\end{theorem}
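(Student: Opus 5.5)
Conditionally on the data, $\Sigma' \sim \mathrm{InverseWishart}(\Psi_n, \nu_n)$ with $\nu_n = \nu_0 + n$ and $\Psi_n = \Psi_0 + nS_n + (\lambda_0 n/\lambda_n)(\oline{Z}_n - \mu_0)(\oline{Z}_n - \mu_0)^\mathtt{T}$. We write $\Sigma' - \Sigma^0 = (\Sigma' - \E(\Sigma'\mid Z_{1:n})) + (\E(\Sigma'\mid Z_{1:n}) - \Sigma^0)$ and show each term is $o_p(1)$ under the joint law of data and posterior draw.

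For the conditional mean, use the inverse-Wishart mean formula $\E(\Sigma' \mid Z_{1:n}) = \Psi_n/(\nu_n - K - 1)$. This is valid once $\nu_n > K+1$, which holds for large $n$. We have $\Psi_n/n = \Psi_0/n + S_n + (\lambda_0/\lambda_n)(\oline{Z}_n - \mu_0)(\oline{Z}_n-\mu_0)^\mathtt{T}$. By \cref{supp:theorem:Sn_consistency}, $S_n \to \Sigma^0$ in probability. By \cref{supp:theorem:param} together with the strong law for $\oline{Z}_n(\theta^*)$, $\oline{Z}_n = O_p(1)$. Since $\lambda_0/\lambda_n \to 0$, the last term is $o_p(1)$, and $\Psi_0/n \to 0$. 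Also $n/(\nu_n - K-1) \to 1$. Slutsky's theorem then gives $\E(\Sigma'\mid Z_{1:n}) \to \Sigma^0$ in probability.

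For the fluctuation term, use the inverse-Wishart variance formula. For each entry,
\[
\V(\Sigma'_{jk}\mid Z_{1:n}) = \frac{(\nu_n - K+1)\Psi_{n,jk}^2 + (\nu_n - K-1)\Psi_{n,jj}\Psi_{n,kk}}{(\nu_n-K)(\nu_n-K-1)^2(\nu_n-K-3)} ,
\]
valid once $\nu_n > K+3$. Since $\Psi_n = O_p(n)$ entrywise by the previous step, the numerator is $O_p(n^3)$ and the denominator is of order $n^4$, so the conditional variance is $O_p(1/n)$. Then, for $\epsilon>0$ and any $M$, conditional Chebyshev gives
\[
P\big(|\Sigma'_{jk} - \E(\Sigma'_{jk}\mid Z_{1:n})|>\epsilon\big) \le \E\big[\min\{1, \V(\Sigma'_{jk}\mid Z_{1:n})/\epsilon^2\}\big].
\]
The quantity inside the expectation is bounded by $1$ and tends to $0$ in probability, so the expectation tends to $0$ by bounded convergence.

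Combining the two terms entrywise over the finitely many entries gives $\Sigma' \to \Sigma^0$ in probability. The main technical care lies in the second step. One must pass from conditional (given data) concentration to unconditional convergence for the data-averaged draw, and the $\min\{1,\cdot\}$ truncation is what keeps this argument clean without needing moment bounds on $S_n$.
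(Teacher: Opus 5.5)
Your proposal is correct and follows essentially the same route as the paper: split $\Sigma'-\Sigma^0$ into the conditional mean $\Psi_n/(\nu_n-K-1)$ (shown to converge via $S_n \xrightarrow{\mathrm{p}} \Sigma^0$ and $\Psi_n/n \xrightarrow{\mathrm{p}} \Sigma^0$) plus a fluctuation controlled entrywise by the inverse-Wishart variance formula and conditional Chebyshev. The only difference is cosmetic. The paper splits on the high-probability event $\{\max_{a,b}|(\Psi_n)_{ab}|\le Mn\}$ to get a deterministic $C/n$ variance bound, whereas you bound by $\E\big[\min\{1,\V(\Sigma'_{jk}\mid Z_{1:n})/\epsilon^2\}\big]$ and use bounded convergence; with that route, the stray ``any $M$'' in your Chebyshev line is unused.
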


\begin{proof}[Proof of \cref{supp:theorem:niw_clt_b}]\label{supp:theorem:niw_clt_b:proof}
    
    By \cref{supp:theorem:Sn_consistency}, we have $S_n \xrightarrow[]{p} \Sigma^0$ as $n \to \infty$. 
    Thus, since $\Psi_0$ is fixed, $\lambda_0 / \lambda_n \to 0$, and $\oline{Z}_n = O_p(1)$ by \cref{supp:theorem:param} and the law of large numbers,
    \begin{equation}
    \label{eq:Psi_n}
        \frac{1}{n} \Psi_n = \frac{1}{n} \Psi_0 + S_n + \frac{\lambda_0}{\lambda_n} (\oline{Z}_n - \mu_0)(\oline{Z}_n -\mu_0)^\texttt{T} \xrightarrow[n\to\infty]{\mathrm{p}} \Sigma^0,
    \end{equation}
    recalling the definitions of $\Psi_0$ and $\Psi_n$ from \cref{eq:NIW-posterior}.
    Since $\nu_n = \nu_0 + n$, for $n$ large enough that $\nu_n > K+ 1$, standard results on the moment properties of the inverse-Wishart distribution \citep{press2005applied} give that
    \begin{equation}
    \label{eq:sigma'_expectation}
        \E (\Sigma' \mid \Psi_n,\nu_n) = \frac{\Psi_n}{\nu_n -K -1} = \frac{\Psi_n}{n} \, \frac{n}{\nu_0 + n -K -1} \xrightarrow[n\to\infty]{\mathrm{p}} \Sigma^0.
    \end{equation}
    For $n$ large enough that $\nu_n > K+ 3$, the variance of the $(i,j)$-th entry is
    \begin{equation}
    \label{eq:sigma'_var}
        \V\big( (\Sigma')_{ij} \mid \Psi_n, \nu_n \big) = \frac{(\nu_n - K + 1) (\Psi_n)^2_{ij} + (\nu_n - K - 1) (\Psi_n)_{ii} (\Psi_n)_{jj}}{(\nu_n - K)(\nu_n - K - 1)^2 (\nu_n - K - 3)}.   
    \end{equation}
    Recall that $\Psi_n / n \xrightarrow[]{\mathrm p} \Sigma^0$ by \cref{eq:Psi_n}. Fix any $M > \max_{a,b} \big| (\Sigma^0)_{ab} \big|$. Then,
    \begin{equation}
        P\left( \max_{a,b} \big| (\Psi_n)_{ab} \big| \le Mn \right) \xrightarrow[n \to \infty]{} 1.
    \end{equation}
    On the event $\max_{a,b} \big| (\Psi_n)_{ab} \big| \le Mn$, the numerator of \cref{eq:sigma'_var} is $O(n^3)$ and denominator is $\Omega(n^4)$, so there exists a constant $C_{ij}(M) >0$ such that, on this event,
    \begin{equation}
        \V\big( (\Sigma')_{ij} \mid \Psi_n, \nu_n \big) \le \frac{C_{ij}(M)}{n}.
    \end{equation}
    Then, for all $\epsilon>0$, by the law of total expectation and Chebyshev's inequality,
    \begin{align}
        P\Big(& \big| (\Sigma')_{ij} - \E \big( (\Sigma')_{ij} \mid \Psi_n, \nu_n \big) \big| > \epsilon \Big) \\
        &= \E\Big(P\Big( \big| (\Sigma')_{ij} - \E \big( (\Sigma')_{ij} \mid \Psi_n, \nu_n \big) \big| > \epsilon \;\Big|\; \Psi_n, \nu_n \Big) \mathds{1}\big(\max_{a,b} \big| (\Psi_n)_{ab} \big| \le Mn \big) \Big) +  \\
        &~~~~ \E\Big(P\Big( \big| (\Sigma')_{ij} - \E \big( (\Sigma')_{ij} \mid \Psi_n, \nu_n \big) \big| > \epsilon \;\Big|\; \Psi_n, \nu_n \Big) \mathds{1}\big(\max_{a,b} \big| (\Psi_n)_{ab} \big| > Mn \big) \Big) \notag\\
        &\le \frac{C_{ij}(M)}{n\epsilon^2} + P\Big( \max_{a,b} \big| (\Psi_n)_{ab} \big| > Mn\Big) \xrightarrow[n \to \infty]{} 0.
    \end{align}
    
    Now, we control the whole matrix using the max norm and a union bound:
    \begin{equation}
         P\Big( \big\| \Sigma' - \E \big(\Sigma' \mid \Psi_n, \nu_n \big) \big\|_{\max} > \epsilon \Big) \le \sum_{i,j=1}^K P\Big( \Big| (\Sigma')_{ij} - \E \big( (\Sigma')_{ij} \mid \Psi_n, \nu_n\big) \Big| > \epsilon \Big) \xrightarrow[n \to \infty]{} 0.
    \end{equation}
    Thus, 
    \begin{equation}
        \big\| \Sigma' - \E \big(\Sigma' \mid \Psi_n, \nu_n \big) \big\|_{\max} \xrightarrow[n \to \infty]{\mathrm p} 0.
    \end{equation}
    Finally, by the triangle inequality,
    \begin{equation}
        \big\|\Sigma' - \Sigma^0 \big\|_{\max}  \le  \big\| \Sigma' - \E \big( \Sigma' \mid \Psi_n, \nu_n\big) \big\|_{\max} + \big \| \E \big( \Sigma' \mid \Psi_n, \nu_n\big) - \Sigma^0 \big\|_{\max} \xrightarrow[n \to \infty]{\mathrm{p}} 0,
    \end{equation}
    since the second term is $o_p(1)$ by \cref{eq:sigma'_expectation}. Therefore,  $\Sigma' \xrightarrow[]{\mathrm{p}} \Sigma^0$, proving the result.
\end{proof}

\cref  {supp:theorem:niw_clt_c} establishes the limiting behavior of $\mu'$ under the data-averaged LaD posterior, that is, integrating out the data. The factor of $2$ in the covariance $2\Sigma^0$ reflects the combined contribution of (i) the sampling variability of the data and (ii) the posterior uncertainty given the data.

\begin{theorem}\label{supp:theorem:niw_clt_c}
    Assume Conditions \ref{cond1} and \ref{cond2} hold, and $(\mu', \Sigma') \sim p(\mu, \Sigma \mid Z_{1:n})$ as in \cref{eq:NIW-posterior}. Then
    $$
        \displaystyle\sqrt{n} (\mu' - \mu^0) \xrightarrow[n \to \infty]{\mathrm{d}} \cN(0, 2\Sigma^0).
    $$
\end{theorem}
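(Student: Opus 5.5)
The plan is to decompose
\[
\sqrt{n}(\mu' - \mu^0) = \sqrt{n}(\mu' - \mu_n) + \sqrt{n}(\mu_n - \mu^0).
\]
The second term is a function of the data only. By \cref{supp:theorem:niw_clt_a} it converges in distribution to $\mathcal{N}(0,\Sigma^0)$. For the first term, the NIW posterior gives, conditionally on $Z_{1:n}$ and $\Sigma'$,
\[
\mu' - \mu_n \sim \mathcal{N}(0, \Sigma'/\lambda_n).
\]
Hence $\sqrt{n}(\mu'-\mu_n) = (n/\lambda_n)^{1/2}(\Sigma')^{1/2} G$, where $G\sim\mathcal{N}(0,I_K)$ is independent of $(Z_{1:n},\Sigma')$. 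This representation lets me write the whole quantity as a sum of a data term and an independent Gaussian noise term scaled by $\Sigma'$.

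Next I will show the first term converges jointly with the second. Write $A_n := \sqrt{n}(\mu_n-\mu^0)$ and $B_n := (n/\lambda_n)^{1/2}(\Sigma')^{1/2}G$. \cref{supp:theorem:niw_clt_b} gives $\Sigma' \to \Sigma^0$ in probability. Since $\Sigma'\mapsto (\Sigma')^{1/2}$ is continuous on positive semidefinite matrices and $n/\lambda_n\to 1$, we get
\[
B_n = (\Sigma^0)^{1/2}G + o_p(1).
\]
So $\sqrt{n}(\mu'-\mu^0) = A_n + (\Sigma^0)^{1/2}G + o_p(1)$, where $G$ is independent of $A_n$ (since $A_n$ depends only on the data).

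The key step is the joint limit. Because $G$ is independent of $A_n$ and has a fixed law, $(A_n, G) \to (A, G)$ in distribution with $A\sim\mathcal{N}(0,\Sigma^0)$ independent of $G$. One way to verify this is through characteristic functions, which factor as $\E e^{is^\mathtt{T}A_n}\,\E e^{it^\mathtt{T}G}$. The continuous mapping theorem then gives
\[
A_n + (\Sigma^0)^{1/2}G \to A + (\Sigma^0)^{1/2}G \sim \mathcal{N}(0, 2\Sigma^0).
\]
Slutsky's theorem absorbs the $o_p(1)$ term and finishes the proof.

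The main obstacle is handling the replacement of $(\Sigma')^{1/2}$ by $(\Sigma^0)^{1/2}$ correctly: $\Sigma'$ depends on the data, while the independence argument requires a factor that does not. I plan to resolve this by bounding
\[
\|(\Sigma')^{1/2}-(\Sigma^0)^{1/2}\|\,\|G\| = o_p(1)\,O_p(1),
\]
which justifies the $o_p(1)$ error; this is why $G$ is constructed on the same probability space, independent of everything else. As an alternative check, I could compute the conditional characteristic function directly:
\[
\E\big(e^{it^\mathtt{T}\sqrt{n}(\mu'-\mu^0)}\mid Z_{1:n}\big) = e^{it^\mathtt{T}A_n}\,\E\big(e^{-\frac{n}{2\lambda_n}t^\mathtt{T}\Sigma' t}\mid Z_{1:n}\big).
\]
Taking expectations and using bounded convergence with $\Sigma'\to\Sigma^0$ in probability gives the limit $e^{-t^\mathtt{T}\Sigma^0 t}$, which is the characteristic function of $\mathcal{N}(0,2\Sigma^0)$.
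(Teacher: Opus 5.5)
Your proof is correct and follows essentially the same route as the paper: split $\sqrt{n}(\mu'-\mu^0)$ into a posterior-noise term and the data term $\sqrt{n}(\mu_n-\mu^0)$, use that the Gaussian innovation is independent of $(Z_{1:n},\Sigma')$, and combine \cref{supp:theorem:niw_clt_a,supp:theorem:niw_clt_b} with Slutsky's theorem. The only difference is cosmetic: the paper standardizes both terms by $(\Sigma')^{-1/2}$, so that $V_n\sim\mathcal{N}(0,nI/\lambda_n)$ is exactly independent of $W_n$, and multiplies back by $(\Sigma')^{1/2}$ at the end, whereas you replace $(\Sigma')^{1/2}$ by $(\Sigma^0)^{1/2}$ in the noise term at an $o_p(1)$ cost.
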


\begin{proof}[Proof of \cref{supp:theorem:niw_clt_c}]\label{supp:theorem:niw_clt_c:proof}
    Write
    \begin{equation}\label{eq:post_data_decomp}
       \sqrt{n} (\mu' - \mu^0) = \sqrt{n} (\mu' - \mu_n) + \sqrt{n} (\mu_n  - \mu^0). 
    \end{equation}
    By \cref{supp:theorem:niw_clt_b}, we have $\Sigma' \xrightarrow[]{\mathrm{p}} \Sigma^0$. Since the matrix square root and inverse are continuous on the space of positive definite matrices, the continuous mapping theorem yields $(\Sigma')^{-1/2} \xrightarrow[]{\mathrm{p}} (\Sigma^0)^{-1/2}$.
    Define the standardized quantities:
    \[
        V_n := \sqrt{n} (\Sigma')^{-1/2} \big( \mu' - \mu_n  \big), 
        \qquad
        W_n := \sqrt{n} (\Sigma')^{-1/2} \big( \mu_n - \mu^0 \big).
    \]
    We characterize the limiting distributions of $V_n$ and $W_n$. 
    For $V_n$, since $\mu' \mid \Sigma',Z_{1:n} \sim \cN \big(\mu_n , \Sigma' / \lambda_n \big)$, we have 
    \begin{equation}\label{eq:post_decomp_cond}
        V_n \mid \Sigma',Z_{1:n} \sim \cN(0, nI/\lambda_n),
    \end{equation}
    Thus the conditional distribution $V_n$ does not depend on $\Sigma'$ or $Z_{1:n}$, and hence $V_n \sim \cN(0, nI/\lambda_n)$ marginally for all $n$. Since $n / \lambda_n \to 1$, it follows from Slutsky's theorem that $V_n \xrightarrow[]{\mathrm{d}} \cN(0, I)$. 
    For $W_n$, from \cref{supp:theorem:niw_clt_a} and $(\Sigma')^{-1/2} \xrightarrow[]{\mathrm{p}} (\Sigma^0)^{-1/2}$, Slutsky's theorem yields $W_n \xrightarrow[]{\mathrm{d}} \cN(0, I)$. 
    Moreover, since we can represent $\mu' = \mu_n + \Sigma'^{1/2}\xi /\sqrt{\lambda_n}$ with $\xi\sim\mathcal N(0,I)$ independent of $(Z_{1:n},\Sigma')$, it follows that $V_n = \xi\sqrt{n/\lambda_n}$ depends only on $\xi$, whereas $W_n$ depends only on $(Z_{1:n},\Sigma')$. Thus, $V_n$ and $W_n$ are independent. Therefore, $V_n + W_n \xrightarrow[]{\mathrm{d}} \cN(0, 2 I)$. Finally, Slutsky’s theorem yields
    \begin{equation}
        \sqrt{n} \big( \mu' - \mu^0 \big) = (\Sigma')^{1/2} (V_n + W_n) \xrightarrow[n \to \infty]{\mathrm{d}} \cN(0, 2\Sigma^0),
    \end{equation}
    which proves the result.
\end{proof}


\subsection{Posterior consistency of hard min for the minimal KL}

Recall from \cref{eq:mu_cov} that $\mu^0 = (\mu^0_1, \ldots, \mu^0_K)^\texttt{T}$ where $\mu^0_k = \E\big(\ell_k(X; \theta^*_k)\big)$.


\begin{corollary}
\label{supp:corollary:En}
    Assume Conditions \ref{cond1} and \ref{cond2} hold, and $(\mu',\Sigma') \sim p(\mu, \Sigma \mid Z_{1:n})$ as in \cref{eq:NIW-posterior}. Then
    $$ \mu' \xrightarrow[n\to\infty]{\mathrm{p}} \mu^0.$$ 
\end{corollary}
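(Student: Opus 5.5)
The plan is to deduce the corollary from \cref{supp:theorem:niw_clt_c}, which already gives the $\sqrt{n}$-scale limit of the data-averaged posterior draw. The idea is that convergence in distribution at rate $\sqrt{n}$ implies tightness, and tightness at that rate forces the unscaled difference to vanish.

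First, \cref{supp:theorem:niw_clt_c} gives $\sqrt{n}(\mu' - \mu^0) \xrightarrow{\mathrm{d}} \cN(0, 2\Sigma^0)$. A sequence converging in distribution is uniformly tight, so $\sqrt{n}(\mu' - \mu^0) = O_p(1)$. Concretely, for $\epsilon>0$ pick $M$ with $P(\|G\|\ge M)<\epsilon/2$ for $G\sim\cN(0,2\Sigma^0)$. The portmanteau theorem applied to the closed set $\{\|x\|\ge M\}$ then gives $\limsup_n P(\sqrt n\|\mu'-\mu^0\|\ge M) < \epsilon$.

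Next, I multiply by the deterministic factor $1/\sqrt{n}\to 0$. Since $O_p(1)\cdot o(1) = o_p(1)$, this yields $\mu' - \mu^0 = o_p(1)$. Explicitly, for any $\eta>0$ and all $n$ with $M/\sqrt n<\eta$, we have $P(\|\mu'-\mu^0\|>\eta)\le P(\sqrt n\|\mu'-\mu^0\|\ge M)$, whose limsup is below $\epsilon$. Equivalently, one can apply Slutsky's theorem to $n^{-1/2}\cdot\sqrt n(\mu'-\mu^0)\xrightarrow{\mathrm d} 0$ and use the fact that convergence in distribution to a constant is convergence in probability.

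There is no real obstacle here, since all the work (the plug-in error, the consistency of $S_n$ and $\Sigma'$, and the independence of the two Gaussian pieces) was done in \cref{supp:theorem:niw_clt_a,supp:theorem:niw_clt_b,supp:theorem:niw_clt_c}. The only point needing care is interpretation: $\mu'$ is the data-averaged posterior draw, so the probability is over both the data and the posterior randomization, matching the setting of \cref{supp:theorem:niw_clt_c}.
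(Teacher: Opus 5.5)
Your proposal is correct and follows essentially the same route as the paper. The paper also invokes the data-averaged limit $\sqrt{n}(\mu'-\mu^0)\xrightarrow{\mathrm{d}}\mathcal{N}(0,2\Sigma^0)$ from \cref{supp:theorem:niw_clt_c}, concludes $\mu'-\mu^0=O_p(1/\sqrt{n})$, and hence $\mu'\xrightarrow{\mathrm{p}}\mu^0$; your tightness and portmanteau argument just writes out the one-line step the paper leaves implicit.
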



\begin{proof}[Proof of \cref{supp:corollary:En}]
    By Lemma \ref{supp:theorem:niw_clt_c}, we have $\sqrt{n}(\mu' - \mu^0) \xrightarrow[]{\mathrm{d}} \cN(0, 2\Sigma^0)$ as $n \to \infty$. Consequently, $\mu' - \mu^0 = O_p(1/\sqrt{n})$, so $\mu' \xrightarrow[]{\mathrm{p}} \mu^0$ in $\bbR^K$. 
\end{proof}

Recall from \cref{eq:M} that we define $M_\delta(\mu) = \big\{ k : \mu_k  \leq \min_j \mu_j + \delta \big\}$ for $\delta \geq 0$. 

\begin{theorem}\label{supp:theorem:min_correct}
     Assume Conditions \ref{cond1} and \ref{cond2} hold, and $(\mu',\Sigma') \sim p(\mu, \Sigma \mid Z_{1:n})$ as in \cref{eq:NIW-posterior}. 
     Then
    \[
        P \Big( \argmin_{1 \le k \le K} \, \mu'_k \in M_0(\mu^0) \Big) \xrightarrow[n\to\infty]{} 1.
    \]
\end{theorem}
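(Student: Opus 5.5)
The idea is to combine the consistency of $\mu'$ from \cref{supp:corollary:En} with the finite gap between the minimal and non-minimal coordinates of $\mu^0$. Let $m^0 := \min_j \mu^0_j$. The set $M_0(\mu^0) = \{k : \mu^0_k = m^0\}$ is nonempty, and its complement $M_0(\mu^0)^c$ is finite. If the complement is empty, then every index lies in $M_0(\mu^0)$ and the probability is identically $1$, so there is nothing to prove. Otherwise define the gap
\[
\gamma := \min_{k \notin M_0(\mu^0)} \mu^0_k - m^0 > 0.
\]
It is strictly positive because it is a minimum over finitely many strictly positive numbers.

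Next I would define the event $E_n := \{\max_k |\mu'_k - \mu^0_k| < \gamma/2\}$. By \cref{supp:corollary:En}, $\mu' \to \mu^0$ in probability under the data-averaged posterior. Since all norms on $\bbR^K$ are equivalent, this gives $P(E_n) \to 1$. On $E_n$, take any $j \in M_0(\mu^0)$ and any $k \notin M_0(\mu^0)$. Then
\[
\mu'_k > \mu^0_k - \gamma/2 \ge m^0 + \gamma/2 = \mu^0_j + \gamma/2 > \mu'_j .
\]
So no index outside $M_0(\mu^0)$ can attain $\min_i \mu'_i$. Consequently every element of $\argmin_k \mu'_k$ lies in $M_0(\mu^0)$, that is, $\argmin_k \mu'_k \subseteq M_0(\mu^0)$.

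I read the statement's ``$\in$'' as set inclusion, or equivalently as membership of any selected minimizer. With that reading, the probability in the statement is at least $P(E_n)$, which tends to $1$.

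The only delicate point is interpretive rather than technical. Ties in $\argmin \mu'$ occur with probability zero, because $\mu'$ has a continuous Gaussian conditional law given $(\Sigma', Z_{1:n})$. So the argmin is almost surely a singleton, and the inclusion and membership readings agree. The quantitative work is already done by \cref{supp:corollary:En}, which itself rests on \cref{supp:theorem:niw_clt_c}. No rate is needed, only convergence in probability together with the strictly positive gap $\gamma$.
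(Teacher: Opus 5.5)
Your proof is correct and follows essentially the same route as the paper: a positive gap between the minimal and non-minimal coordinates of $\mu^0$, an event $\{\max_k |\mu'_k - \mu^0_k| < \eta\}$ whose probability tends to $1$ by \cref{supp:corollary:En}, and a separation argument on that event. The only differences are cosmetic. You take $\eta = \gamma/2$ where the paper takes any $\eta \in (0, m/2)$, and you add a remark that $\argmin$ is almost surely a singleton, which the paper leaves implicit.
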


\begin{proof}[Proof of \cref{supp:theorem:min_correct}]
    If $M_0(\mu^0) = \{1, \ldots, K\}$, then the result is immediate. Suppose $M_0(\mu^0) \neq \{1, \ldots, K\}$.
    Define $m := \min\big\{ \mu^0_k - \mu^0_{\mathrm{min}}: k \notin M_0(\mu^0) \big\}$ where $\mu^0_{\mathrm{min}} = \min_k \mu^0_k$, and note that $m > 0$. Choose any $\eta \in (0, m/2)$. By \cref{supp:corollary:En}, $P\big( E_n(\eta) \big) \xrightarrow[]{} 1$ where $E_n(\eta) = \big\{ \max_{1\le k\le K} |\mu'_k - \mu^0_k| < \eta \big\}$. On the event $E_n(\eta)$, for all $k \in M_0(\mu^0)$, we have 
    \begin{equation}
        \mu'_k < \mu_k^0 + \eta = \mu^0_{\mathrm{min}} + \eta.
    \end{equation}
    Meanwhile, for all $k \notin M_0(\mu^0)$, we have $\mu_k^0 - \mu^0_{\mathrm{min}} \ge m >0$, and so since $\eta < m/2$,
    \begin{equation}
        \mu'_k > \mu^0_k - \eta \ge \mu^0_{\mathrm{min}} + m - \eta > \mu^0_{\mathrm{min}} + \eta.
    \end{equation}
    Consequently, on event $E_n(\eta)$, we have $\max_{k \in M_0(\mu^0)} \,\mu'_k < \min_{k \notin M_0(\mu^0)}\, \mu'_k$, and hence, 
    $\argmin_{1 \le k \le K} \mu'_k \in M_0(\mu^0)$.  Therefore,
    \begin{equation}
        P \Big( \argmin_{1 \le k \le K} \mu'_k \in M_0(\mu^0) \Big) \ge P\big(E_n(\eta) \big) \longrightarrow 1.
    \end{equation}
\end{proof}

\subsection{Posterior consistency for the complexity class}

Recall from \cref{eq:min_c} that $c_\delta^*(\mu) = \min\big\{ c(k): k \in M_\delta(\mu) \big\}$.


\begin{lemma}\label{supp:lemma:class_correct}
   The function $\mu\mapsto c_\delta^*(\mu)$ is continuous at any point $\mu\in\bbR^K$ such that $\delta \neq \mu_k - \min_j\mu_j$ for all $k$. 
\end{lemma}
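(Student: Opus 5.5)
Since $c_\delta^*$ takes values in the finite set $\{c(1),\ldots,c(K)\}$, continuity at $\mu$ amounts to local constancy: we exhibit a neighborhood $B_\eta(\mu)$ on which $c_\delta^*(\cdot)=c_\delta^*(\mu)$. The plan is to prove the stronger statement that the set $M_\delta(\cdot)$ is itself constant on such a neighborhood. Then $c_\delta^*$, being the minimum of $c$ over that set, is constant there too.

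Fix $\mu$ with $\delta\neq \mu_k-\min_j\mu_j$ for all $k$, and write $g_k(x) := x_k - \min_j x_j$. Each $g_k$ is continuous, since $x\mapsto\min_j x_j$ is $1$-Lipschitz in the sup norm. In fact $|g_k(x)-g_k(\mu)|\le 2\max_j|x_j-\mu_j|\le 2\|x-\mu\|$. Define the gap
\[
\gamma := \min_{1\le k\le K} |g_k(\mu)-\delta|,
\]
which is strictly positive by the hypothesis, because $K$ is finite.

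Take $\eta=\gamma/4$ and any $x\in B_\eta(\mu)$. If $k\in M_\delta(\mu)$, then $g_k(\mu)\le\delta$, and since equality is excluded, in fact $g_k(\mu)\le \delta-\gamma$. Hence
\[
g_k(x)\le \delta-\gamma+2\eta<\delta,
\]
so $k\in M_\delta(x)$. Conversely, if $k\notin M_\delta(\mu)$, then $g_k(\mu)\ge\delta+\gamma$, so $g_k(x)\ge \delta+\gamma-2\eta>\delta$ and $k\notin M_\delta(x)$. Thus $M_\delta(x)=M_\delta(\mu)$ for all $x\in B_\eta(\mu)$, and therefore $c_\delta^*(x)=c_\delta^*(\mu)$.

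The only delicate point is the role of the non-tie hypothesis. It guarantees $\gamma>0$: without it, a model with $g_k(\mu)=\delta$ could leave $M_\delta$ under an arbitrarily small perturbation, and $c_\delta^*$ could jump. The rest is routine. Note that $M_\delta(\mu)$ is always nonempty, since it contains the argmin, so $c_\delta^*$ is well defined everywhere.
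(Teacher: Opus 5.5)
Your proof is correct and follows essentially the same route as the paper: both show that $M_\delta(\cdot)$ is locally constant at $\mu$ by combining continuity of $x\mapsto x_k-\min_j x_j$ with the strict inequalities guaranteed by the non-tie hypothesis, so $c_\delta^*$, which depends on $\mu$ only through $M_\delta(\mu)$, is locally constant. The paper phrases this as continuity of a composition $f\circ g$ into $2^{\{1,\ldots,K\}}$ with the discrete topology. Your explicit gap $\gamma$ and radius $\eta=\gamma/4$ spell out the step the paper asserts without detail, namely that $f$ is continuous at points with $x_k\neq\delta$ for all $k$.
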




\begin{proof}[Proof of \cref{supp:lemma:class_correct}]
Since $c_\delta^*(\mu)$ is integer valued, the claim is that $\mu\mapsto c_\delta^*(\mu)$ is continuous at any point $\mu\in\bbR^K$ such that $\delta \neq \mu_k - \min_j\mu_j$ for all $k$. 
To show this, define $g:\bbR^K \to \bbR^K$ by $g(\mu) = \mu - \min_k \mu_k$
and define $f:\bbR^K \to 2^{\{1,\ldots,K\}}$ by $f(x) = \{k : x_k \leq \delta\}$.  
We give $\bbR^K$ the usual Euclidean topology and give $2^{\{1,\ldots,K\}}$ the discrete topology.
Observe that $f(g(\mu)) = M_\delta(\mu)$, so if $f\circ g$ is continuous at some point $\mu$, then $c_\delta^*(\mu)$ is also continuous at $\mu$, because it depends on $\mu$ only through the discrete quantity $M_\delta(\mu)$.

Note that $g$ is continuous everywhere, since $\mu \mapsto \min_k \mu_k$ is continuous.
Next, note that $f$ is continuous at any point $x$ such that $x_k \neq \delta$ for all $k$.
Now, let $\mu\in\bbR^K$ such that $\delta \neq \mu_k - \min_j\mu_j$ for all $k$. 
Then, letting $x = g(\mu)$, we have $x_k\neq \delta$ for all $k$, so $f$ is continuous at $x$. 
Therefore, $f\circ g$ is continuous at $\mu$.
It follows that $c_\delta^*(\mu)$ is continuous at $\mu$, proving the claim.
\end{proof}

\cref{supp:theorem:class_consistent} shows that the data-averaged posterior is consistent for the minimal $\delta$-optimal complexity class.
More precisely, with probability tending to $1$, a sample of $c_\delta^*(\mu')$ from the data-averaged posterior will be equal the true value $c_\delta^*(\mu^0)$.

\begin{theorem}\label{supp:theorem:class_consistent}
    Assume Conditions \ref{cond1} and \ref{cond2}, and let $(\mu', \Sigma') \sim p(\mu, \Sigma \mid Z_{1:n})$ as in \cref{eq:NIW-posterior}. Suppose $\delta \neq \mu^0_k - \min_j \mu^0_j$ for all $ k = 1,\ldots,K$.  Then 
    \[
        P\big( c_\delta^*(\mu') = c_\delta^*(\mu^0) \big) \xrightarrow[n\to\infty]{} 1.
    \]
\end{theorem}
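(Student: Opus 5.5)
The plan is to combine posterior consistency of $\mu'$ (\cref{supp:corollary:En}) with the local constancy of $c_\delta^*$ at $\mu^0$ (\cref{supp:lemma:class_correct}).

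First, apply \cref{supp:lemma:class_correct} at the point $\mu^0$. This is valid because the hypothesis $\delta \neq \mu^0_k - \min_j \mu^0_j$ for all $k$ is exactly what the lemma requires. Since $c_\delta^*$ takes values in the finite set $\{c(1),\ldots,c(K)\}$, continuity at $\mu^0$ means local constancy: there is $\eta>0$ such that $c_\delta^*(\mu) = c_\delta^*(\mu^0)$ for all $\mu \in B_\eta(\mu^0)$.

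We can also write $\eta$ down explicitly, which is a useful sanity check. Let
\[
m := \min_k \big|\mu^0_k - \mu^0_{\mathrm{min}} - \delta\big| > 0,
\]
and take $\eta = m/2$. Suppose $\|\mu - \mu^0\| < \eta$. Then each coordinate satisfies $|\mu_k - \mu^0_k| < \eta$, and also $|\min_j \mu_j - \min_j \mu^0_j| < \eta$. Hence
\[
\big|(\mu_k - \min_j \mu_j) - (\mu^0_k - \mu^0_{\mathrm{min}})\big| < 2\eta = m.
\]
So $\mu_k - \min_j\mu_j \le \delta$ if and only if $\mu^0_k - \mu^0_{\mathrm{min}} \le \delta$. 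This gives $M_\delta(\mu) = M_\delta(\mu^0)$, and therefore $c_\delta^*(\mu) = c_\delta^*(\mu^0)$.

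Second, use \cref{supp:corollary:En}, which gives $\mu' \to \mu^0$ in probability under the data-averaged posterior. It follows that $P(\|\mu'-\mu^0\| < \eta) \to 1$. On this event $c_\delta^*(\mu') = c_\delta^*(\mu^0)$, so
\[
P\big(c_\delta^*(\mu') = c_\delta^*(\mu^0)\big) \ge P\big(\|\mu' - \mu^0\|<\eta\big) \to 1.
\]

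There is no serious obstacle. The only delicate point is the non-boundary assumption on $\delta$. Without it, a model lying exactly on the threshold could enter or leave $M_\delta(\mu')$ with non-vanishing probability, and the convergence would fail. The assumption gives the strictly positive margin $m$, and the rest is routine.
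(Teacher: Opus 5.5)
Your proposal is correct and follows essentially the same route as the paper: continuity (hence local constancy) of $c_\delta^*$ at $\mu^0$ via \cref{supp:lemma:class_correct}, combined with $\mu' \xrightarrow{\mathrm{p}} \mu^0$ from \cref{supp:corollary:En}. The paper finishes with the continuous mapping theorem and discreteness of $c_\delta^*$. Your direct ball argument with the explicit margin $m$ is just an unpacked version of that same step.
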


\begin{proof}[Proof of \cref{supp:theorem:class_consistent}]
By \cref{supp:lemma:class_correct}, $\mu\mapsto c_\delta^*(\mu)$ is continuous at $\mu^0$. By \cref{supp:corollary:En}, $\mu' \stackrel{\mathrm{p}}{\to} \mu^0$. 
Therefore, by the continuous mapping theorem, $c_\delta^*(\mu') \stackrel{\mathrm{p}}{\to} c_\delta^*(\mu^0)$.  Since $c_\delta^*(\mu)$ is integer-valued, this implies $P\big( c_\delta^*(\mu') = c_\delta^*(\mu^0) \big) \longrightarrow 1$.
\end{proof}

\cref{supp:theorem:class_consistent_cond} shows that given $Z_{1:n}$, the posterior is consistent for the minimal $\delta$-optimal complexity class.
This establishes consistency of the posterior, whereas \cref{supp:theorem:class_consistent} shows consistency of the data-averaged posterior.


\begin{theorem}\label{supp:theorem:class_consistent_cond}
    Under the same assumptions as \cref{supp:theorem:class_consistent},
    $$
        P\big( c_\delta^*(\mu') = c^*_\delta(\mu^0) \, \big\vert \, Z_{1:n} \big) \xrightarrow[n \to \infty]{\mathrm{p}} 1.
    $$
\end{theorem}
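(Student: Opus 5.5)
The plan is to pass from the data-averaged statement in \cref{supp:theorem:class_consistent} to the conditional one by a simple expectation argument. Define the random variable
\[
p_n := P\big( c_\delta^*(\mu') \neq c^*_\delta(\mu^0) \,\big\vert\, Z_{1:n} \big) \in [0,1],
\]
which is a measurable function of $Z_{1:n}$ (equivalently of the NIW posterior parameters $\mu_n,\lambda_n,\Psi_n,\nu_n$). By the tower property,
\[
\E(p_n) = P\big( c_\delta^*(\mu') \neq c^*_\delta(\mu^0) \big),
\]
where the right-hand side is the probability under the joint law of data and posterior draw. This probability tends to $0$ by \cref{supp:theorem:class_consistent}.

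Since $p_n \ge 0$, Markov's inequality gives $P(p_n > \epsilon) \le \E(p_n)/\epsilon \to 0$ for every $\epsilon>0$. Hence $p_n \xrightarrow{\mathrm{p}} 0$, which is exactly the claim
\[
P\big(c_\delta^*(\mu') = c^*_\delta(\mu^0) \mid Z_{1:n}\big) = 1 - p_n \xrightarrow{\mathrm{p}} 1.
\]

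The only point needing care is measurability and the identification of the data-averaged probability with $\E(p_n)$. This follows from the construction of $(\mu',\Sigma')$ as a draw from the kernel $p(\mu,\Sigma\mid Z_{1:n})$, together with the fact that $\{c_\delta^*(\mu') \ne c_\delta^*(\mu^0)\}$ is a Borel event. That event is Borel because $c_\delta^*$ is piecewise constant on sets defined by finitely many linear inequalities.

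A more direct alternative avoids the data-averaged result. Let $\eta>0$ be such that the ball $B_\eta(\mu^0)$ lies inside the region where $c^*_\delta$ is constant; such an $\eta$ exists by \cref{supp:lemma:class_correct}. It then suffices to show
\[
P\big(\|\mu'-\mu^0\|\ge \eta \,\big\vert\, Z_{1:n}\big) \xrightarrow{\mathrm{p}} 0.
\]
This follows from a conditional Chebyshev bound using $\mu_n \to \mu^0$ (\cref{supp:theorem:niw_clt_a}) and the posterior covariance $\E(\Sigma'\mid Z_{1:n})/\lambda_n = \Psi_n/\big(\lambda_n(\nu_n-K-1)\big) = O_p(1/n)$, via \cref{eq:Psi_n}.

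I expect no real obstacle. The Markov argument is the cleanest route, and I would present it.
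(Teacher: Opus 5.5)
Your proposal is correct and follows the paper's own proof. The paper also applies Markov's inequality to the posterior probability $P(A_n \mid Z_{1:n})$ of the event $A_n = \{c_\delta^*(\mu') \neq c_\delta^*(\mu^0)\}$, notes that its expectation equals $P(A_n)$, and concludes from \cref{supp:theorem:class_consistent} that $P(A_n) \to 0$.
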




\begin{proof}[Proof of \cref{supp:theorem:class_consistent_cond}]
Define the event $A_n = \{c_\delta^*(\mu') \neq c_\delta^*(\mu^0)\}$.
By \cref{supp:theorem:class_consistent}, $P(A_n) \to 0$ as $n \to \infty$.
Consider the random variable $P(A_n \mid Z_{1:n})$, which is the posterior probability of $A_n$ given the data. By Markov's inequality, for all $\epsilon > 0$,
\begin{equation}
    P\big( P(A_n \mid Z_{1:n}) > \epsilon \big) \le \frac{\E \big( P(A_n \mid Z_{1:n}) \big)}{\epsilon} = \frac{P(A_n)}{\epsilon} \longrightarrow 0.
\end{equation}
Therefore, $P(A_n \mid Z_{1:n}) \xrightarrow[]{\mathrm{p}} 0$, and hence,
$$ P\big( c_\delta^*(\mu') = c^*_\delta(\mu^0) \, \big\vert \, Z_{1:n} \big) = 1 - P(A_n \mid Z_{1:n}) \xrightarrow[n\to\infty]{\mathrm{p}} 1. $$
\end{proof}

\subsection{Asymptotics of the soft selection score}


\cref{supp:lemma:alpha_choice} establishes the asymptotics of the soft selection score in \cref{eq:softmin}.  First, we show a simple lemma that is used in the proof.

\begin{lemma}
\label{supp:lemma:alpha_choice_helper}
If $X_n = O_p(1)$ and $Y_n \stackrel{\mathrm{p}}{\to} \infty$ then $X_n + Y_n  \stackrel{\mathrm{p}}{\to} \infty$ and $\min\{X_n,Y_n\} = O_p(1)$.
\end{lemma}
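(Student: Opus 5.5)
We prove both claims directly from the definitions of $O_p(1)$ and divergence in probability, where $Y_n \stackrel{\mathrm{p}}{\to} \infty$ means that $P(Y_n \le M) \to 0$ for every fixed $M < \infty$. The idea is to split the probability of a bad event into a part controlled by tightness of $X_n$ and a part controlled by the divergence of $Y_n$.

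For the first claim, fix $M$ and $\epsilon > 0$. Since $X_n = O_p(1)$, we choose $C$ so that $P(|X_n| > C) < \epsilon$ for all sufficiently large $n$. On the event $\{|X_n| \le C\}$ we have $X_n + Y_n \ge Y_n - C$. Hence
\[
P(X_n + Y_n \le M) \le P(|X_n| > C) + P(Y_n \le M + C).
\]
The first term is below $\epsilon$ for large $n$, and the second tends to $0$ because $Y_n \stackrel{\mathrm{p}}{\to} \infty$. So $\limsup_n P(X_n + Y_n \le M) \le \epsilon$, and since $\epsilon$ is arbitrary the limit is $0$.

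For the second claim, we use $\min\{X_n,Y_n\} \le X_n$ to get the upper tail: $P(\min\{X_n,Y_n\} > C) \le P(X_n > C)$. For the lower tail, note that
\[
\{\min\{X_n,Y_n\} < -C\} \subseteq \{X_n < -C\} \cup \{Y_n < -C\}.
\]
Combining the two,
\[
P(|\min\{X_n,Y_n\}| > C) \le P(|X_n| > C) + P(Y_n \le -C).
\]
Choosing $C$ from tightness of $X_n$ makes the first term smaller than $\epsilon/2$ for large $n$. The second term tends to $0$ by the divergence of $Y_n$, so it is also below $\epsilon/2$ for large $n$. Thus $\min\{X_n,Y_n\}$ is tight, i.e. $\min\{X_n,Y_n\} = O_p(1)$.

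The only step that needs care is combining the uniform tightness bound for $X_n$ with the eventual smallness of the $Y_n$ tail; taking $n$ large enough for both bounds to hold at once settles it.
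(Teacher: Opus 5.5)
Your proposal is correct and follows essentially the same argument as the paper. The paper likewise bounds $P(X_n+Y_n<M)$ by $P(|X_n|>B)+P(Y_n<M+B)$ and $P(|\min\{X_n,Y_n\}|>B)$ by $P(|X_n|>B)+P(Y_n<-B)$, using tightness of $X_n$ and the divergence of $Y_n$.
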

\begin{proof}
Let $M \in \bbR$ and $\epsilon > 0$.  Choose $B>0$ such that $P(|X_n| > B) < \epsilon$ for all $n$ sufficiently large.  Since $P(X_n + Y_n < M) \leq P(|X_n| > B) + P(Y_n < M + B)$, we have
$$ \limsup_n P(X_n + Y_n < M) \leq \limsup_n P(|X_n| > B) + \limsup_n P(Y_n < M + B) \leq \epsilon $$
because $Y_n\stackrel{\mathrm{p}}{\to} \infty$. Since $M\in\bbR$ and $\epsilon>0$ are arbitrary, this shows that $X_n + Y_n \stackrel{\mathrm{p}}{\to} \infty$. 
Furthermore, since $P(|\min\{X_n,Y_n\}| > B) \leq P(|X_n| > B) + P(Y_n < -B)$,
$$ \limsup_n P(|\min\{X_n,Y_n\}| > B) \leq \limsup_n P(|X_n| > B) + \limsup_n P(Y_n < -B) \leq \epsilon. $$
This shows that $\min\{X_n,Y_n\} = O_p(1)$.
\end{proof}



\begin{lemma}\label{supp:lemma:alpha_choice}
Suppose $Y_1,Y_2,\ldots\in\bbR^K$ are random vectors such that $\sqrt{n}(Y_n - \mu) \xrightarrow[n \to \infty]{\mathrm{d}} \cN(0, \Sigma)$ for some $\mu \in \mathbb{R}^K$ and a positive definite matrix $\Sigma \in \mathbb{R}^{K \times K}$. 
Suppose $\alpha_n\to\infty$ such that $\alpha_n/\sqrt{n} \to 0$ as $n\to\infty$. 
Then
\begin{align}
\label{eq:theorem:alpha_choice}
\exp\big(-\alpha_n(Y_{n k} - Y_n^*)\big) \xrightarrow[n\to\infty]{\mathrm{p}} \mathds{1}(\mu_k = \mu^*)
\end{align}
for all $k = 1,\ldots,K$, where $Y_n^* = \min_k Y_{n k}$ and $\mu^* = \min_k \mu_k$.
\end{lemma}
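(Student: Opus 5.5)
The plan is to write the exponent as the product of $\alpha_n/\sqrt{n}$ and a $\sqrt{n}$-rescaled quantity, and then split into two cases according to whether $k$ attains the minimum of $\mu$. Write $Y_{nk} - Y_n^* = (Y_{nk} - \mu_k) - (Y_n^* - \mu^*) + (\mu_k - \mu^*)$. Let $G_n := \sqrt{n}(Y_n - \mu)$, so that $G_n = O_p(1)$ by the assumed weak convergence (tightness). Consequently $Y_n \stackrel{\mathrm{p}}{\to} \mu$.

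\emph{Case $\mu_k = \mu^*$.} Set $J := \{j : \mu_j = \mu^*\}$ and $m := \min_{j\notin J}(\mu_j - \mu^*) > 0$, taking $m=\infty$ if $J$ is everything. On the event $\{\|Y_n - \mu\|_\infty < m/2\}$, whose probability tends to $1$, the minimum $Y_n^*$ is attained within $J$. Hence $Y_n^* - \mu^* = \min_{j\in J} G_{nj}/\sqrt{n}$ on that event. Then
\[
\sqrt{n}\,(Y_{nk} - Y_n^*) = G_{nk} - \min_{j\in J} G_{nj},
\]
and the right-hand side is $O_p(1)$ and nonnegative. By \cref{supp:lemma:high-prob-bound}, $\sqrt{n}(Y_{nk}-Y_n^*) = O_p(1)$ on the full space. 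Multiplying by $\alpha_n/\sqrt{n}\to 0$ gives $\alpha_n(Y_{nk}-Y_n^*) = o_p(1)$. Since $Y_{nk}-Y_n^*\ge 0$ always, the continuous mapping theorem applied to $x\mapsto e^{-x}$ gives convergence in probability to $1$.

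\emph{Case $\mu_k > \mu^*$.} Here $Y_{nk} - Y_n^* \stackrel{\mathrm{p}}{\to} \mu_k - \mu^* > 0$, because $\min$ is continuous. So with probability tending to one, $Y_{nk}-Y_n^* > (\mu_k-\mu^*)/2$. On that event,
\[
\exp\big(-\alpha_n(Y_{nk}-Y_n^*)\big) \le \exp\big(-\alpha_n(\mu_k-\mu^*)/2\big) \to 0,
\]
since $\alpha_n\to\infty$. The quantity also lies in $(0,1]$ everywhere, because $Y_{nk}\ge Y_n^*$. Therefore \cref{supp:lemma:high-prob-bound} (or a direct $\epsilon$ argument) yields convergence in probability to $0$. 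Alternatively, this case can be phrased via \cref{supp:lemma:alpha_choice_helper}: the rescaled difference is an $O_p(1)$ term plus a term $\sqrt{n}(\mu_k-\mu^*)\to\infty$. That is the likely intended route, though the direct bound suffices.

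The only step needing care is the first case. The minimizing index of $Y_n$ is random, so $Y_n^*$ is not a fixed coordinate. The localization event $\{\|Y_n-\mu\|_\infty<m/2\}$ handles this by forcing the argmin into $J$. An alternative without localization is to bound $0\le Y_{nk}-Y_n^* \le \max_j |Y_{nj}-\mu_j| + (Y_{nk}-\mu_k)$ directly. This works because $Y_n^* \ge \min_j \mu_j - \max_j|Y_{nj}-\mu_j| = \mu^* - \max_j|Y_{nj}-\mu_j|$ and $\mu_k=\mu^*$, which gives an $O_p(1/\sqrt{n})$ bound immediately. I expect to use this simpler bound. Positive definiteness of $\Sigma$ is not actually needed.
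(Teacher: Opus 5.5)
Your proof is correct. It shares the paper's basic idea: write the exponent as $\sqrt{n}$-scale fluctuations plus a deterministic gap, and use $\alpha_n/\sqrt{n}\to 0$. Where it differs is in how you control $Y_n^*$.

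The paper keeps the three-term decomposition for every $k$. To get $\sqrt{n}(Y_n^*-\mu^*)=O_p(1)$, it splits the minimum over $S=\{j:\mu_j=\mu^*\}$ and $S^c$. It shows the $S$-part converges in distribution and the $S^c$-part diverges in probability, and then invokes \cref{supp:lemma:alpha_choice_helper}. This yields the dichotomy in \cref{eq:alpha_choice_overall}: $\alpha_n(Y_{nk}-Y_n^*)$ tends to $0$ or to $\infty$. Finally it applies $x\mapsto e^{-x}$.

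Your preferred bound is just the $1$-Lipschitz property of $\min$ in the sup norm, $|Y_n^*-\mu^*|\le\|Y_n-\mu\|_\infty$. It gives the same $O_p(n^{-1/2})$ control in one line and makes both the $S/S^c$ split and the helper lemma unnecessary. Your localization alternative, which forces the argmin into $J$, is essentially the paper's split.

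For $\mu_k>\mu^*$ you use only consistency $Y_n\to\mu$ and the fixed positive gap. You do not need the rescaled exponent to diverge in probability.

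What your organization buys is transparency about the hypotheses. The condition $\alpha_n=o(\sqrt{n})$ is needed only for minimizing coordinates, and $\alpha_n\to\infty$ only for the others. Only tightness of $\sqrt{n}(Y_n-\mu)$ is used, not the Gaussian limit or positive definiteness of $\Sigma$. That is also true of the paper's argument, but less visibly.
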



\begin{proof}[Proof of \cref{supp:lemma:alpha_choice}]
For each $k$, decompose
\begin{align}
\label{eq:alpha_choice_decomp}
\alpha_n(Y_{n k} - Y_n^*) = \alpha_n(Y_{n k} - \mu_k) - \alpha_n(Y_n^* - \mu^*) + \alpha_n(\mu_k - \mu^*).
\end{align}
Let $S = \{k : \mu_k = \mu^*\}$.
For the third term in \cref{eq:alpha_choice_decomp}, since $\alpha_n\to\infty$, we have 
\begin{align}
\label{eq:alpha_choice_decomp3}
\alpha_n(\mu_k - \mu^*) \longrightarrow \branch{0}{k \in S}{\infty}{k \not\in S}
\end{align}
as $n\to\infty$. For the first term in \cref{eq:alpha_choice_decomp}, we have
\begin{align}
\label{eq:alpha_choice_decomp1}
\alpha_n(Y_{n k} - \mu_k) = \frac{\alpha_n}{\sqrt{n}} \sqrt{n}(Y_{n k} - \mu_k) \xrightarrow[n\to\infty]{\mathrm{p}} 0
\end{align}
since $\alpha_n/\sqrt{n}\to 0$ and $\sqrt{n}(Y_{n k} - \mu_k) = O_p(1)$, because $\sqrt{n}(Y_n - \mu)$ converges in distribution by assumption.
To study the second term, we write 
\begin{align}
\label{eq:alpha_choice_min}
    \sqrt{n}(Y_n^* - \mu^*) = \min\Big\{\min_{k\in S} \sqrt{n}(Y_{n k} - \mu^*), \; \min_{k\in S^c} \sqrt{n}(Y_{n k} -\mu^*)\Big\}.
\end{align}
Letting $W\sim\cN(0,\Sigma)$, the first argument of \cref{eq:alpha_choice_min} is
\begin{align}
\label{eq:alpha_choice_limit1}
\min_{k\in S} \sqrt{n}(Y_{n k} - \mu^*) \stackrel{\mathrm{(a)}}{=} \min_{k\in S} \sqrt{n}(Y_{n k} - \mu_k) \xrightarrow[n\to\infty]{\mathrm{d}} \min_{k\in S} W_k = O_p(1)
\end{align}
where (a) holds since $\mu_k = \mu^*$ for $k\in S$, and the limit holds by the continuous mapping theorem since $x \mapsto \min_{k\in S} x_k$ is a continuous function and $\sqrt{n}(Y_n - \mu) \stackrel{\mathrm{d}}{\to} W$ by assumption.
Meanwhile, the second argument of \cref{eq:alpha_choice_min} is
\begin{align*}
\min_{k\in S^c} \sqrt{n}(Y_{n k} - \mu^*) &= \min_{k\in S^c} \Big(\sqrt{n}(Y_{n k} - \mu_k) + \sqrt{n}(\mu_k - \mu^*)\Big) \\
&\geq \min_{k\in S^c} \sqrt{n}(Y_{n k} - \mu_k) + \min_{k\in S^c}\sqrt{n}(\mu_k - \mu^*).
\end{align*}
As in \cref{eq:alpha_choice_limit1}, $\min_{k\in S^c} \sqrt{n}(Y_{n k} - \mu_k) \stackrel{\mathrm{d}}{\to} \min_{k\in S^c} W_k$.
Meanwhile, the second term satisfies $\min_{k\in S^c}\sqrt{n}(\mu_k - \mu^*) \to \infty$ since $\min_{k\in S^c} \mu_k - \mu^* > 0$ by the definition of $S$ and $\mu^*$. It follows that 
\begin{align}
\label{eq:alpha_choice_limit2}
\min_{k\in S^c} \sqrt{n}(Y_{n k} - \mu^*) \xrightarrow[n\to\infty]{\mathrm{p}} \infty,
\end{align}
by \cref{supp:lemma:alpha_choice_helper}.
Combining \cref{eq:alpha_choice_limit1,eq:alpha_choice_limit2} with \cref{eq:alpha_choice_min}, we have that
$\sqrt{n}(Y_n^* - \mu^*) = O_p(1)$ by \cref{supp:lemma:alpha_choice_helper}.  Therefore,
\begin{align}
\label{eq:alpha_choice_decomp2}
\alpha_n(Y_n^* - \mu^*) = \frac{\alpha_n}{\sqrt{n}}\sqrt{n}(Y_n^* - \mu^*) \xrightarrow[n\to\infty]{\mathrm{p}} 0,
\end{align}
since $\alpha_n/\sqrt{n}\to 0$.
Plugging \cref{eq:alpha_choice_decomp3,eq:alpha_choice_decomp1,eq:alpha_choice_decomp2} into \cref{eq:alpha_choice_decomp}, we have
\begin{align}
\label{eq:alpha_choice_overall}
\alpha_n(Y_{n k} - Y_n^*) \xrightarrow[n\to\infty]{\mathrm{p}} \branch{0}{k \in S}{\infty}{k \not\in S.}
\end{align}
For $k\in S$, we obtain \cref{eq:theorem:alpha_choice} by the continuous mapping theorem since $x\mapsto \exp(-x)$ is continuous.  For $k\in S^c$, \cref{eq:alpha_choice_overall} implies \cref{eq:theorem:alpha_choice} since for all $\epsilon > 0$,
$$ P\Big(\exp\big(-\alpha_n(Y_{n k} - Y_n^*)\big) > \epsilon\Big) =  P\Big(\alpha_n(Y_{n k} - Y_n^*) < -\log\epsilon\Big) \xrightarrow[n\to\infty]{} 0. $$
This completes the proof.
\end{proof}

\subsection{Posterior consistency of the SLC score}

Before proving our main consistency result, \cref{sec:theory:thm:consistency_score}, we establish the following lemma.  
\cref{supp:lemma:soft_consistent} shows that, under the data-averaged posterior, the within-class soft-selection score (\cref{eq:softmin}) concentrates on models with the smallest KL among the set of minimal-complexity $\delta$-optimal models.
Recall the definitions of $c_\delta^*(\mu)$, $r_{n k} (\mu)$, and $\mu^0$ from \cref{eq:min_c,eq:softmin,eq:mu_cov}, respectively. Also recall that $\mu_{\min,c(k)} = \min_{j \,:\, c(j) = c(k)} \mu_j$.

\begin{lemma}\label{supp:lemma:soft_consistent}
    Assume Conditions \ref{cond1} and \ref{cond2}, and $(\mu', \Sigma') \sim p(\mu, \Sigma \mid Z_{1:n})$ as in \cref{eq:NIW-posterior}. Suppose $\alpha_n \to \infty$ with $\alpha_n = o(\sqrt{n})$.  Then, for all $k$ such that $c(k) = c_\delta^*(\mu^0)$,
    \begin{align}
    \label{eq:lemma:soft_consistent}
        r_{n k}(\mu') = \exp\big( - \alpha_n (\mu'_k - \mu'_{\min,c(k)}) \big) \xrightarrow[n \to \infty]{\mathrm{p}} \mathds{1}\big(\mu^0_k = \mu^0_{\min,c(k)}\big).
    \end{align}
\end{lemma}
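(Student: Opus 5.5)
The plan is to reduce the statement to \cref{supp:lemma:alpha_choice}, applied to the subvector of $\mu'$ indexed by the complexity class of $k$. Fix $k$ with $c(k) = c_\delta^*(\mu^0)$ and let $C := \{j : c(j) = c(k)\}$. Then $\mu'_{\min,c(k)} = \min_{j\in C}\mu'_j$ and $\mu^0_{\min,c(k)} = \min_{j\in C}\mu^0_j$. So \cref{eq:lemma:soft_consistent} is exactly the conclusion of \cref{supp:lemma:alpha_choice} with $Y_n := (\mu'_j)_{j\in C} \in \bbR^{|C|}$ and target mean $(\mu^0_j)_{j\in C}$. Note that the hypothesis $c(k) = c_\delta^*(\mu^0)$ plays no role; the argument works for every $k$.

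To verify the hypothesis of \cref{supp:lemma:alpha_choice}, I will use \cref{supp:theorem:niw_clt_c}. It gives $\sqrt{n}(\mu' - \mu^0) \xrightarrow{\mathrm{d}} \cN(0, 2\Sigma^0)$ under the data-averaged posterior, which is precisely how $\mu'$ is distributed in the present statement. Projecting onto the coordinates in $C$ is a continuous linear map, so by the continuous mapping theorem
\[
\sqrt{n}\big((\mu'_j)_{j\in C} - (\mu^0_j)_{j\in C}\big) \xrightarrow{\mathrm{d}} \cN\big(0, 2\Sigma^0_{CC}\big).
\]
Here $\Sigma^0_{CC}$ is a principal submatrix of the positive definite matrix $\Sigma^0$ (\cref{cond1}(ii)), so it is positive definite, and hence so is $2\Sigma^0_{CC}$. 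The temperature assumptions $\alpha_n \to \infty$ and $\alpha_n/\sqrt{n} \to 0$ are given.

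With both hypotheses checked, \cref{supp:lemma:alpha_choice} yields $\exp(-\alpha_n(\mu'_k - \min_{j\in C}\mu'_j)) \xrightarrow{\mathrm{p}} \mathds{1}(\mu^0_k = \min_{j\in C}\mu^0_j)$, which is the claim. There is no substantial obstacle. The one point needing care is that the convergence is stated for the data-averaged posterior draw $\mu'$, which matches how \cref{supp:theorem:niw_clt_c} is formulated.

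If positive definiteness were not available, I would check that the proof of \cref{supp:lemma:alpha_choice} uses only tightness of $\sqrt{n}(Y_n - \mu)$ and continuity of the minimum. In that case the lemma would go through without it.
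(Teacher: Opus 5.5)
Your proposal is correct and matches the paper's proof: restrict $\sqrt{n}(\mu'-\mu^0)\xrightarrow{\mathrm{d}}\mathcal{N}(0,2\Sigma^0)$ from \cref{supp:theorem:niw_clt_c} to the coordinates of the complexity class via the continuous mapping theorem, then apply \cref{supp:lemma:alpha_choice}. Your two side remarks are also right and slightly sharpen the paper's write-up. The principal submatrix $2\Sigma^0_{CC}$ is positive definite, which the paper leaves implicit, and the argument works for every $k$ once $C$ is taken to be $k$'s own complexity class.
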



This result pertains to the data-averaged posterior, that is, the marginal distribution of $\mu'$ integrating out data $Z_{1:n}$ from the true distribution.


\begin{proof}[Proof of \cref{supp:lemma:soft_consistent}]
    Let $C := \big\{k: \, c(k) = c_\delta^*(\mu^0) \big\}$.  By \cref{supp:theorem:niw_clt_c}, we have $\sqrt{n}(\mu' - \mu^0) \xrightarrow[]{d} \cN(0, 2\Sigma^0)$ as $n \to \infty$.  By the continuous mapping theorem, this convergence holds when restricted to the coordinates in $C$, that is, $\sqrt{n}(\mu'_C - \mu^0_C) \xrightarrow[]{d} \cN(0, 2\Sigma^0_{C,C})$ where  $\mu'_C = (\mu'_j : j\in C)$, $\mu^0_C = (\mu^0_j : j\in C)$, and $\Sigma^0_{C,C} = (\Sigma^0_{j k} : j,k\in C)$.
    We apply \cref{supp:lemma:alpha_choice} with $\mu'_C$, $\mu^0_C$, and $2 \Sigma^0_{C,C}$ playing the roles of $Y_n$, $\mu$, and $\Sigma$, respectively.
    This shows that for all $k\in C$,
    $$\exp\Big(-\alpha_n(\mu'_k - \min_{j\in C}\mu'_j)\Big) \xrightarrow[n\to\infty]{\mathrm{p}} \mathds{1}\Big(\mu^0_k = \min_{j\in C} \mu^0_j\Big).$$
The result in \cref{eq:lemma:soft_consistent} follows since $k\in C$ implies $c(k) = c_\delta^*(\mu^0)$, and therefore, $\mu'_{\min,c(k)} = \min_{j\in C} \mu'_j$ and $\mu^0_{\min,c(k)} = \min_{j\in C} \mu^0_j$.
\end{proof}


\begin{lemma}
\label{lemma:conditional_convergence}
If $X_n$ and $Y_n$ are random variables such that $X_n$ is bounded and $X_n \stackrel{\mathrm{p}}{\to} c$ for some constant $c\in\bbR$, then $\E(X_n \mid Y_n) \stackrel{\mathrm{p}}{\to} c$.
\end{lemma}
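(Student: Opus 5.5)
The plan is to use $L^1$ convergence as the bridge, because conditional expectation is an $L^1$ contraction. Since $X_n$ is bounded, I will interpret this as $|X_n| \le B$ almost surely for some constant $B<\infty$ not depending on $n$, which is the uniform boundedness available in the intended applications (e.g.\ $r_{n k}(\mu')\in(0,1]$ or indicators). The first step is to upgrade $X_n \stackrel{\mathrm{p}}{\to} c$ to $\E|X_n - c| \to 0$. For any $\epsilon>0$, split on the event $\{|X_n - c|>\epsilon\}$:
\[
\E|X_n - c| \le \epsilon + (B+|c|)\,P(|X_n - c|>\epsilon).
\]
Taking $\limsup_n$ and then letting $\epsilon\downarrow 0$ gives the claim. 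Equivalently, this is bounded convergence applied to the uniformly integrable family $\{X_n\}$.

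The second step is to pass to the conditional expectation. By conditional Jensen's inequality, $|\E(X_n\mid Y_n) - c| = |\E(X_n - c\mid Y_n)| \le \E(|X_n - c|\mid Y_n)$ almost surely. Taking expectations and using the tower property gives
\[
\E|\E(X_n\mid Y_n) - c| \le \E|X_n - c| \to 0.
\]
Thus $\E(X_n\mid Y_n)\to c$ in $L^1$.

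The final step is Markov's inequality: $P(|\E(X_n\mid Y_n)-c|>\epsilon)\le \epsilon^{-1}\E|\E(X_n\mid Y_n)-c|\to 0$ for each $\epsilon>0$. This yields $\E(X_n\mid Y_n)\stackrel{\mathrm{p}}{\to}c$, mirroring the Markov-inequality argument used in the proof of \cref{supp:theorem:class_consistent_cond}.

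There is no real obstacle here. The only point needing care is that ``bounded'' must mean uniformly bounded in $n$, or at least uniformly integrable. Without that, the $L^1$ upgrade in the first step fails, and with it the conclusion can fail even though $X_n \stackrel{\mathrm{p}}{\to} c$. In the intended use the random variables are $r_{n k}(\mu')$ and indicators, so $B=1$ suffices.
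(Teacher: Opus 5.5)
Your proof is correct and matches the paper's argument: you upgrade to $L^1$ convergence using uniform boundedness, apply $|\E(X_n - c \mid Y_n)| \le \E(|X_n - c| \mid Y_n)$ together with the tower property, and conclude convergence in probability. The only difference is that you prove the $L^1$ upgrade and the final step directly, with an explicit $\epsilon$-split and Markov's inequality, whereas the paper cites Jacod--Protter (Theorems 17.4 and 17.2) for these steps.
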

\begin{proof}
These assumptions imply $X_n\to c$ in $L^1$, that is, $\E|X_n - c| \to 0$ \citep[Theorem 17.4]{jacod2004probability}. Thus, 
$$ \big|\E(X_n \mid Y_n) - c\big| = \big|\E(X_n - c \mid Y_n)\big| \leq \E\big(|X_n - c| \,\big\vert\, Y_n\big) $$
almost surely. Taking expectations, this yields 
$$ \E\Big(\big|\E(X_n \mid Y_n) - c\big|\Big) \leq \E|X_n - c| \to 0 $$
by the law of total expectation. Hence, $\E(X_n \mid Y_n)\to c$ in $L^1$, and thus $\E(X_n \mid Y_n)\stackrel{\mathrm{p}}{\to} c$ \citep[Theorem 17.2]{jacod2004probability}.
\end{proof}


\begin{proof}[\bf Proof of \cref{sec:theory:thm:consistency_score}]
    Recall from \cref{eq:pkn2} that we define the SLC score as
    \begin{align}
    \label{eq:thm:consistency_score:w}
        w_{\delta}(k \mid Z_{1:n}) = P\big(c_\delta^*(\mu') = c(k) \mid Z_{1:n} \big) \, \E\big( r_{nk}(\mu') \mid Z_{1:n} \big)
    \end{align}
    where $r_{n k}(\mu) = \exp(-\alpha_n (\mu_k - \mu_{\min,c(k)}))$ and $\mu_{\min,c(k)} = \min_{j: c(j)=c(k)} \mu_j$ for $\mu\in\bbR^K$.
    By assumption, $\alpha_n \to \infty$ with $\alpha_n = o(\sqrt{n})$, and $(\mu',\Sigma') \sim p(\mu, \Sigma \mid Z_{1:n})$ as in \cref{eq:NIW-posterior}. Here, we write $\mu'$ and $\Sigma'$ to emphasize that these are random variables distributed according to the posterior, rather than arbitrary values of $\mu$ and $\Sigma$.
    

    First, consider the first factor in \cref{eq:thm:consistency_score:w}. By \cref{supp:theorem:class_consistent_cond}, for all $k$,
    \begin{equation}
    \label{eq:thm:consistency_score:between}
        P\big(c_\delta^*(\mu') = c(k) \mid Z_{1:n} \big) \xrightarrow[n \to \infty]{\mathrm{p}} \mathds{1}\big( c_\delta^*(\mu^0) = c(k) \big).
    \end{equation}
    
    Now, consider the second factor in \cref{eq:thm:consistency_score:w}. For all $k$ such that $c(k) =  c_\delta^*(\mu^0)$, 
    we have $r_{n k}(\mu') \stackrel{\mathrm{p}}{\to} \mathds{1}\big(\mu^0_k = \mu^0_{\min,c(k)}\big)$ by \cref{supp:lemma:soft_consistent}, and therefore,
    \begin{align}
    \label{eq:thm:consistency_score:within}
    \E\big(r_{n k}(\mu') \,\big\vert\, Z_{1:n}\big) \xrightarrow[n\to\infty]{\mathrm{p}}  \mathds{1}\big(\mu^0_k = \mu^0_{\min,c(k)}\big)
    \end{align}
    by \cref{lemma:conditional_convergence}, since $r_{n k}(\mu') \in [0,1]$ is bounded.

    We combine the two factors as follows. 
    If $c(k) \neq c_\delta^*(\mu^0)$, then by \cref{eq:thm:consistency_score:w,eq:thm:consistency_score:between},
    \begin{align}
    \label{eq:thm:consistency_score:neq}
    0 \leq w_\delta(k\mid Z_{1:n}) \leq P\big(c_\delta^*(\mu') = c(k) \mid Z_{1:n} \big) \xrightarrow[n\to\infty]{\mathrm{p}} 0,
    \end{align}
    and hence, $w_\delta(k\mid Z_{1:n}) \stackrel{\mathrm{p}}{\to} 0$.
    Meanwhile, if $c(k) = c_\delta^*(\mu^0)$, then applying Slutsky's theorem to \cref{eq:thm:consistency_score:w} along with \cref{eq:thm:consistency_score:between,eq:thm:consistency_score:within}, we have
    \begin{align}
    \label{eq:thm:consistency_score:eq}
    w_\delta(k\mid Z_{1:n}) \xrightarrow[n\to\infty]{\mathrm{p}}  \mathds{1}\big(\mu^0_k = \mu^0_{\min,c(k)}\big).
    \end{align}
    Hence, for all $k$,   
    \begin{align}
    \label{eq:thm:consistency_score:combined}
    w_\delta(k\mid Z_{1:n}) \xrightarrow[n\to\infty]{\mathrm{p}}  \mathds{1}\big(c_\delta^*(\mu^0) = c(k) \big) \, \mathds{1}\big(\mu^0_k = \mu^0_{\min,c(k)}\big),
    \end{align}
    by combining  \cref{eq:thm:consistency_score:neq,eq:thm:consistency_score:eq}.
    Recalling the definition of $M^*_\delta(\mu^0)$, observe that    
    \begin{align}
    \label{eq:thm:consistency_score:M}
    M^*_\delta(\mu^0) = \argmin_{k \,:\, c(k) = c_{\delta}^*(\mu^0)} \mu^0_k = \Big\{k \,:\, c_\delta^*(\mu^0) = c(k), \; \mu^0_k = \mu^0_{\min,c(k)} \Big\}.
    \end{align}    
    Therefore, by \cref{eq:thm:consistency_score:combined,eq:thm:consistency_score:M}, $w_\delta(k\mid Z_{1:n}) \stackrel{\mathrm{p}}{\to} \mathds{1}(k \in M^*_\delta(\mu^0))$ as $n\to\infty$.

\end{proof}

\subsection{Posterior consistency of the approximate SLC score}


\begin{proof}[\bf Proof of \cref{sec:theory:thm:consistency}]
    Recall \cref{alg:proc} defines $\hat{w}_{\delta}(k)$ as the product of the between-class factor $\hat{p}_{\delta}(k)$ and the within-class factor $\hat{r}(k)$, where 
    \[
        \hat{p}_{\delta}(k) := \frac{1}{T} \sum_{t=1}^T \mathds{1} \big(c_\delta^*(\mu^{(t)}) = c(k) \big) \quad \text{and} \quad \hat{r}(k) := \frac{1}{T} \sum_{t=1}^T r_{n k}(\mu^{(t)}).
    \]
    
    First, consider the between-class factor $\hat{p}_{\delta}(k)$. Since $\mu' \sim p(\mu, \Sigma \mid Z_{1:n})$ as in \cref{eq:NIW-posterior}, by \cref{supp:theorem:class_consistent}, 
    \begin{equation}
        P\big( c_\delta^*(\mu') = c(k) \big) \xrightarrow[n\to\infty]{} \mathds{1}\big( c_\delta^*(\mu^0)  = c(k) \big)
    \end{equation}
    for all $k$.
    Each $\mu^{(t)}$ has the same distribution as $\mu'$, so for all $t = 1,\ldots, T$,
    \begin{equation}
        \mathds{1}\big( c_\delta^*(\mu^{(t)})  = c(k) \big) \xrightarrow[n \to \infty]{\mathrm{p}} \mathds{1}\big( c_\delta^*(\mu^0)  = c(k) \big).
    \end{equation}
    Since $T$ is fixed, averaging preserves the limit in probability, hence
    \begin{equation}
    \label{eq:approx_consistency:p}
        \hat{p}_{\delta}(k) \xrightarrow[n \to \infty]{\mathrm{p}} \mathds{1}\big( c_\delta^*(\mu^0)  = c(k) \big).
    \end{equation}
    
    Next, consider the within-class factor $\hat{r}(k)$. Fix any $k$ such that $c(k) =  c_\delta^*(\mu^0)$. For a single posterior draw $\mu'$, we have $r_{n k}(\mu') \stackrel{\mathrm{p}}{\to} \mathds{1}\big(\mu^0_k = \mu^0_{\min,c(k)}\big)$
    by \cref{supp:lemma:soft_consistent}.
    Again, each $\mu^{(t)}$ has the same distribution as $\mu'$, so for all $ t = 1,\ldots, T$,
    \begin{equation}
        r_{n k} (\mu^{(t)}) \xrightarrow[n \to \infty]{\mathrm{p}} \mathds{1}\big(\mu^0_k = \mu^0_{\min,c(k)}\big).
    \end{equation}
    Since $T$ is fixed, averaging preserves the limit in probability, so 
    \begin{equation}
    \label{eq:approx_consistency:r}
        \hat{r}(k) \xrightarrow[n \to \infty]{\mathrm{p}} \mathds{1}\big(\mu^0_k = \mu^0_{\min,c(k)}\big). 
    \end{equation}
    By Slutsky's theorem, combining \cref{eq:approx_consistency:p,eq:approx_consistency:r} yields
    $$
    \hat{w}_\delta(k) = \hat{p}_{\delta}(k)\, \hat{r}(k) \xrightarrow[n\to\infty]{\mathrm{p}}  \mathds{1}\big(c_\delta^*(\mu^0) = c(k) \big) \, \mathds{1}\big(\mu^0_k = \mu^0_{\min,c(k)}\big) = \mathds{1}(k\in M_\delta^*(\mu^0))$$
    where the last equality holds by the same argument as in the proof of \cref{sec:theory:thm:consistency_score}.

    

\end{proof}

\subsection{Consistency of the estimated target set of models}

\begin{proof}[\bf Proof of \cref{sec:theory:thm:consistency_M}]
    By \cref{sec:theory:thm:consistency_score}, for all $k$,
    \begin{equation}
        w_{\delta}(k \mid Z_{1:n}) \xrightarrow[n \to \infty]{\mathrm{p}} \mathds{1} \big(k \in M^*_\delta(\mu^0) \big).
    \end{equation} 
    Fix $\omega \in (0,1)$. The function $x \mapsto \mathds{1}(x > \omega)$ is continuous at $0$ and $1$, so by the continuous mapping theorem,
    \begin{equation}
        \mathds{1}\big(w_{\delta}(k \mid Z_{1:n}) > \omega \big) \xrightarrow[n \to \infty]{\mathrm{p}} \mathds{1} \big(k \in M^*_\delta(\mu^0)\big).
    \end{equation}
    for all $k$. Equivalently, $P(A_{n k}) \to 0$ as $n\to\infty$ where 
    \begin{equation}
        A_{nk} := \Big\{ \mathds{1}\big(w_{\delta}(k \mid Z_{1:n}) > \omega \big) \neq \mathds{1} \big(k \in M^*_\delta(\mu^0) \big) \Big\}.
    \end{equation} 
    Since $\hat{M}^*_\delta = \{k : w_{\delta}(k \mid Z_{1:n}) > \omega\}$,
    \begin{equation}
        P \big( \hat{M}^*_\delta \neq M^*_\delta (\mu^0) \big)  = P \bigg( \bigcup_{k=1}^K A_{nk} \bigg) \le \sum_{k=1}^K P(A_{nk} )\xrightarrow[n\to\infty]{} 0
    \end{equation}
    since $P(A_{n k})\to 0$ and $K$ is finite.
    Therefore, $P \big( \hat{M}^*_\delta = M^*_\delta (\mu^0) \big) \to 1$ as $n\to\infty$.
\end{proof}

\subsection{Instability of alternative approaches at ties}


\begin{proof}[\bf Proof of \cref{sec:theory:thm:post_instability}]
    Part (a). For brevity, let us denote $M := M^*_\delta(\mu^0)$. 
    Define $C = \{k: c(k) = c_\delta^* (\mu^0) \}$, and note that $M \subseteq C$. 
    Let $(\mu',\Sigma') \sim p(\mu, \Sigma \mid Z_{1:n})$ as in \cref{eq:NIW-posterior}; we use $\mu'$ and $\Sigma'$ to emphasize that these are random variables distributed according to the posterior.
    By the definition of $M^*_\delta(\mu)$, for all $k$,
    \begin{equation}
        \mathds{1}\big(k \in M_\delta^*(\mu')\big) = \mathds{1}\big(c_\delta^* (\mu') = c(k),\; \mu'_k = \mu'_{\min,c(k)} \big).
    \end{equation}
    Taking posterior expectations given $Z_{1:n}$ and factoring, 
    \begin{equation}\label{eq:cond_decomp}
        P(k\in M^*_\delta(\mu') \mid Z_{1:n}) = P(c_\delta^* (\mu') = c(k) \mid Z_{1:n}) \, P(\mu'_k = \mu'_{\min,c(k)} \mid Z_{1:n}, \,c_\delta^* (\mu') = c(k)).
    \end{equation}
    We analyze four cases: (1) $k\not\in C$, (2) $k\in C$, (3) $k\in C\setminus M$, and (4) $k\in M$.

    (Case 1: $k\not\in C$.)
    By \cref{supp:theorem:class_consistent_cond}, for all $k$, 
    \begin{equation}
    \label{eq:class_consistency}
        P(c_\delta^* (\mu') = c(k) \mid Z_{1:n}) \xrightarrow[n \to \infty]{\mathrm{p}} \mathds{1}\big( c_\delta^* (\mu^0) = c(k) \big) = \mathds{1}(k\in C).
    \end{equation}
    Hence, for all $k \notin C$, 
    $P(k\in M^*_\delta(\mu') \mid Z_{1:n}) \xrightarrow[n\to\infty]{\mathrm{p}} 0$ by \cref{eq:cond_decomp,eq:class_consistency}.

    (Case 2: $k\in C$.)
    Fix $k\in C$.
    Define the events $A := \{\mu_k' = \mu'_{\min,c(k)}\}$ and $B := \{c^*_\delta(\mu') = c(k)\}$. 
    Then \cref{eq:cond_decomp} can be written more succinctly as
    \begin{equation}\label{eq:plugin_post}
        P(k\in M^*_\delta(\mu') \mid Z_{1:n}) = P(B \mid Z_{1:n}) \, P(A \mid Z_{1:n}, B).
    \end{equation}
    By the law of total probability, 
    \begin{equation}
        P(A \mid Z_{1:n}) = P(A \mid Z_{1:n}, B) \,P(B \mid Z_{1:n}) + P(A \mid Z_{1:n}, B^c) \,P(B^c \mid Z_{1:n}), 
    \end{equation}
    so
    \begin{equation}
        P(A \mid Z_{1:n}) - P(A \mid Z_{1:n}, B) =  P(B^c \mid Z_{1:n}) \,\Big( P(A \mid Z_{1:n}, B^c) - P(A \mid Z_{1:n}, B) \Big),
    \end{equation}
    hence 
    \begin{equation}
        \Big| P(A \mid Z_{1:n}) - P(A \mid Z_{1:n}, B) \Big| \le P(B^c \mid Z_{1:n}).
    \end{equation}
    By \cref{supp:theorem:class_consistent_cond}, $P(B^c \mid Z_{1:n}) \xrightarrow[]{\mathrm{p}} 0$ since $k\in C$, and thus,
    \begin{equation}\label{eq:bound_diff}
        P(A \mid Z_{1:n}, B) = P(A \mid Z_{1:n}) + o_p(1).
    \end{equation}
    Combining \cref{eq:plugin_post,eq:bound_diff} and the fact that $P(B \mid Z_{1:n}) \xrightarrow[]{\mathrm p} 1$, we obtain
    \begin{equation}
        P(k\in M^*_\delta(\mu') \mid Z_{1:n}) = P(B\mid Z_{1:n}) P(A\mid Z_{1:n}) + o_p(1) = P(A\mid Z_{1:n}) + o_p(1),
    \end{equation}
    that is,
    \begin{equation}\label{eq:reduce-C*}
        P(k\in M^*_\delta(\mu') \mid Z_{1:n}) = P\big(\mu_k' = \mu'_{\min,c(k)} \mid Z_{1:n} \big) + o_p(1).
    \end{equation}

    Note that $\mu'_{\min,c(k)} = \min_{j \in C} \mu_j'$ since $k\in C$.
    We reexpress \cref{eq:reduce-C*} in terms of an argmin over $M$, rather than the min over $C$.
    First, we show that $\min_{j \in C} \mu_j' = \min_{j \in M} \mu_j'$.
    If $C = M$, then this is trivial; otherwise, suppose $C\setminus M \neq \varnothing$.
    Define the within-class gap
    \[
        m := \min_{j \in C \setminus M} \big(\mu_j^0 - \min_{i \in M} \mu_i^0 \big) > 0,
    \]
    which is strictly positive since $M\subseteq C$ and $C\setminus M \neq \varnothing$. For $\eta > 0$, define the event $E_n(\eta) := \big\{ \max_{1\le k\le K} |\mu'_k - \mu^0_k| < \eta \big\}$. By \cref{supp:corollary:En}, for any fixed $\eta > 0$, we have $P\big( E_n(\eta) \big) \to 1$ as $n\to\infty$. Choose $\eta \in (0, m/2)$. On $E_n(\eta)$, for all $j \in M$,
    \begin{equation}
        \mu_j' \le \mu_j^0 + \eta = \min_{i \in M} \mu_i^0 + \eta,
    \end{equation}
    while for all $j \in C\setminus M$,
    \begin{equation}
        \mu_j' \ge \mu_j^0 - \eta \ge \min_{i \in M} \mu_i^0 + m - \eta > \min_{i \in M} \mu_i^0 + \eta.
    \end{equation}
    Hence, $\min_{j\in C\setminus M} \mu'_j > \min_{j\in M} \mu'_j$, and thus, on event $E_n(\eta)$,
    \begin{align}
    \label{eq:min_equivalence}
    \min_{j \in C} \mu_j' = \min_{j \in M} \mu_j'.
    \end{align}
    
    Under the NIW posterior (\cref{sec:method:bayesinf}), the conditional distribution $\mu' \mid Z_{1:n}, \Sigma'$ is multivariate normal with positive definite covariance matrix $\Sigma'/\lambda_n$. Consequently, the posterior of $\mu' \mid Z_{1:n}$ has a density with respect to Lebesgue measure on $\bbR^K$, so for all $i\neq j$, we have $P(\mu_i' = \mu_j' \mid Z_{1:n}) =0$. 
    Hence, the minimum $\min_{j\in M}\mu'_j$ is attained at only one index, almost surely, under the posterior. Along with \cref{eq:min_equivalence}, this implies
    $$ P\big(\{\mu_k' = \min_{j \in C} \mu_j'\}\cap E_n(\eta)\,\big\vert\,Z_{1:n}) = P\big(\{\{k\} = \argmin_{j \in M} \mu_j'\}\cap E_n(\eta)\,\big\vert\,Z_{1:n}\big). $$
    For any events $A,B,E$, if $P(A\cap E) = P(B\cap E)$, then $|P(A)-P(B)|\leq P(E^c)$. Thus,
    \begin{align}
    \label{eq:bound_min_difference}
        \Big| P\big(\mu_k' = \min_{j \in C} \mu_j' \mid Z_{1:n} \big) - P\big(\{k\} = \argmin_{j\in M}\mu_j' \mid Z_{1:n} \big) \Big| 
        \leq P(E_n(\eta)^c \mid Z_{1:n}).
    \end{align}
    Since $0 \le P(E_n(\eta)^c \mid Z_{1:n}) \le 1$, Markov's inequality yields, for all $\epsilon > 0$,
    \begin{equation}
        P\Bigl( P(E_n(\eta)^c \mid Z_{1:n}) > \epsilon\Bigr) \le \frac{\E\bigl(P(E_n(\eta)^c \mid Z_{1:n})\bigr)}{\epsilon} \xrightarrow[n\to\infty]{} 0,
    \end{equation}
    so $P(E_n(\eta)^c \mid Z_{1:n}) \xrightarrow{\mathrm p} 0$.
    Combining this with \cref{eq:bound_min_difference,eq:reduce-C*}, along with the fact that $\mu'_{\min,c(k)} = \min_{j\in C}\mu'_j$ since $k\in C$,
    \begin{equation}\label{eq:prob-argmin-M*}
        P(k\in M^*_\delta(\mu') \mid Z_{1:n}) = P\Big(\{k\} = \argmin_{j \in M} \mu_j' \;\Big\vert\; Z_{1:n} \Big) + o_p(1).
    \end{equation}
    
    (Case 3: $k\in C\setminus M$.)
    If $k \in C\setminus M$, then $\{\{k\} = \arg\min_{j\in M} \mu_j'\} = \varnothing$, so in this case, \cref{eq:prob-argmin-M*} implies    
    $P(k\in M^*_\delta(\mu') \mid Z_{1:n}) \stackrel{\mathrm{p}}{\longrightarrow} 0$ as $n\to\infty$.

    (Case 4: $k\in M$.)
    Finally, we analyze the asymptotics of the argmin over $M$ in \cref{eq:prob-argmin-M*}.
    Define 
    \[
        V_n := \sqrt{n}\big(\mu' - \mu_n\big), \quad W_n := \sqrt{n}\big(\mu_n - \mu^0\big),
    \]
    where $\mu_n$ is the posterior mean defined in \cref{sec:method:bayesinf}, namely, $\mu_n = \big( \lambda_0 \mu_0 + n\oline{Z}_n \big) / \lambda_n$, where the $Z$ values are evaluated at $\hat{\theta}_k$ as in \cref{sec:posterior-asymptotics}.  
    By the form of the NIW posterior in \cref{sec:method:bayesinf}, we can represent $\mu'\mid Z_{1:n},\Sigma'$ as 
    \begin{equation}
        \mu' = \mu_n + (\Sigma')^{1/2}\xi/\sqrt{\lambda_n},
    \end{equation}
    with $\xi \sim \cN(0, I)$ independent of $(Z_{1:n}, \Sigma')$. Thus, $V_n \mid Z_{1:n}, \Sigma' \sim \cN(0, n\Sigma' / \lambda_n)$, and $W_n\mid Z_{1:n}$ is deterministic. In addition, we have $n/\lambda_n \to 1$, $\Sigma' \xrightarrow[]{\mathrm{p}} \Sigma^0$ (\cref{supp:theorem:niw_clt_b}), and $W_n \xrightarrow[]{d} \cN(0, \Sigma^0)$ (\cref{supp:theorem:niw_clt_a}). 
    Let $k \in M$ and define    
    \begin{align}
    \label{eq:Q}
    Q_{n k} := P\Big(\{k\} = \argmin_{j \in M} \mu_j' \;\Big\vert\; Z_{1:n} \Big).
    \end{align}
    For all $j \in M$, since $\mu_j^0 = \mu_k^0$,
    \begin{equation}
        \sqrt{n}(\mu_j' - \mu_k') = (V_{n j} - V_{n k}) + (W_{n j} - W_{n k}).
    \end{equation}
    Note that $\{k\} = \argmin_{j\in M} \mu'_j$ if and only if $\mu_j' - \mu_k' > 0$ for all $j\in M\setminus\{k\}$. Thus,
    \begin{equation}
        Q_{n k} = P\big(L_k V_n + L_k W_n > 0 \mid Z_{1:n} \big),
    \end{equation}
    where we define the linear difference map $L_k : \bbR^K \to \bbR^{|M|-1}$ by $L_k x := (x_j - x_k)_{j \in M \setminus \{k\}}$, and $>$ indicates componentwise inequality.
    We have
    \[
      L_k V_n \mid Z_{1:n}, \Sigma' \sim \cN\big(0,\Gamma_{n k}(\Sigma')\big),
      \qquad
      \Gamma_{n k}(\Sigma') := L_k \big(n\Sigma'/\lambda_n\big) L_k^\mathtt{T}.
    \]
    Since $n/\lambda_n \to 1$ and $\Sigma' \xrightarrow{\mathrm p} \Sigma^0$,
    \begin{equation}
        \Gamma_{n k}(\Sigma') \xrightarrow[n \to \infty]{\mathrm p} \Gamma_k := L_k \Sigma^0 L_k^\mathtt{T}.
    \end{equation}
    Let $\Phi_\Gamma(\cdot)$ denote the CDF of $\cN(0, \Gamma)$. By the symmetry of Gaussian distributions with mean zero, along with the fact that $W_n \mid Z_{1:n}$ is deterministic, 
    \begin{align}
        Q_{n k} &= P\big( -L_k V_n < L_k W_n \mid Z_{1:n} \big) \notag\\
                &= \E\big(P\big( -L_k V_n < L_k W_n \mid Z_{1:n}, \Sigma' \big)\;\big\vert\; Z_{1:n}\big) \\
                & = \E\big( \Phi_{\Gamma_{n k}(\Sigma')}(L_k W_n) \, \big| \, Z_{1:n} \big). \notag
    \end{align}
    
    Observe that since $k\in M$, the function $\pi_k(x)$ defined in \cref{eq:pi_k} can equivalently be written as 
    \begin{align*}
        \pi_k(x) = \Phi_{\Gamma_k}(L_k x)
    \end{align*}
    for $x \in \bbR^K$.
    The map $(\Gamma,x) \mapsto \Phi_\Gamma(L_k x)$ is continuous on the space of positive definite covariance matrices and vectors $x \in \bbR^{|M|-1}$, and is bounded in $[0,1]$.
    Let $W\sim\mathcal{N}(0,\Sigma^0)$.
    Since $\Gamma_{n k}(\Sigma') \xrightarrow{\mathrm p} \Gamma_k$ and $W_n \xrightarrow[]{\mathrm d} W$, we have joint convergence
    \begin{equation}
        \big( \Gamma_{n k}(\Sigma'), W_n \big) \xrightarrow{\mathrm d} \big( \Gamma_k, W \big).
    \end{equation}
    By the continuous mapping theorem,
    \begin{equation}
        D_n := \Phi_{\Gamma_{n k}(\Sigma')}\big( L_k W_n\big) - \Phi_{\Gamma_k}\big( L_k W_n\big) \xrightarrow[n \to \infty]{\mathrm d} 0.
    \end{equation}
    Since the limit is constant, $D_n \xrightarrow[]{\mathrm p} 0$, and $|D_n| \le 1$ for all $n$. Thus, $\E|D_n| \to 0$ \citep[Theorem 17.4]{jacod2004probability}.
    By Markov's inequality, for all $\epsilon >0$,
    \begin{equation}
        P\Big( \E\big(|D_n| \big| Z_{1:n}\big) > \epsilon \Big) \le \frac{\E|D_n|}{\epsilon} \xrightarrow[n\to\infty]{} 0.
    \end{equation}
    Therefore, $\big| \E(D_n \big| Z_{1:n}) \big| \le \E\big(|D_n| \big| Z_{1:n} \big) = o_p(1)$. Thus, for $k\in M$,
    \begin{align}
    \label{eq:Q_pi}
        Q_{n k} &= \E \Big( \Phi_{\Gamma_{n k}(\Sigma')}\big( L_k W_n\big) \, \Big| \, Z_{1:n} \Big) \notag\\
                &=  \Phi_{\Gamma_k}\bigl(L_k W_n\bigr) + o_p(1) \\
                &= \pi_k(W_n) + o_p(1). \notag
    \end{align}

    We now combine the cases to obtain the result.
    Since \cref{eq:prob-argmin-M*,eq:Q,eq:Q_pi} hold coordinate-wise for each $k\in M$, they also hold jointly for $k\in M$, because coordinate-wise convergence in probability implies joint convergence in probability.  Hence,
    \begin{align}
    \label{eq:PM_pi}
    \big(P(k\in M^*_\delta(\mu') \mid Z_{1:n})\big)_{k\in M} = \big(\pi_k(W_n)\big)_{k\in M} + o_p(1).
    \end{align}
    Meanwhile, by cases (1) and (3) above, for $k \notin M$ we know $P(k\in M^*_\delta(\mu') \mid Z_{1:n}) \xrightarrow[]{\mathrm{p}} 0$, 
    and by definition $\pi_k(x) = 0$ for all $x$ when $k\not\in M$ (\cref{eq:pi_k}).
    Thus, in fact, \cref{eq:PM_pi} holds jointly over all $k\in\{1,\ldots,K\}$, not just $k\in M$.
    Since $W_n \xrightarrow[]{\mathrm{d}} W \sim \cN(0, \Sigma^0)$ and the map $x \mapsto \big(\pi_k(x)\big)_{k=1}^K$ is continuous, the continuous mapping theorem implies $\big(\pi_k(W_n)\big)_{k=1}^K \xrightarrow[]{\mathrm{d}} \big(\pi_k(W)\big)_{k=1}^K$. Therefore, by Slutsky's theorem,
    \begin{equation}
            \big(P(k\in M^*_\delta(\mu') \mid Z_{1:n})\big)_{k=1}^K \xrightarrow[n \to \infty]{\mathrm{d}} \big(\pi_k(W)\big)_{k=1}^K.
    \end{equation}
    
    
    Part (b). By the definition $\tilde{r}_{n k}(\mu) = \mathds{1}(\mu_k = \mu_{\min,c(k)})$, we have
    \begin{equation}\label{eq:hardmin_rep}
        \tilde w_\delta (k \mid Z_{1:n}) = P\big(c_\delta^*(\mu') = c(k) \mid Z_{1:n} \big) P\big( \mu_k' = \mu'_{\min,c(k)} \mid Z_{1:n} \big) = P(B\mid Z_{1:n}) P(A \mid Z_{1:n}),
    \end{equation}
    where $A = \{\mu_k' = \mu'_{\min,c(k)}\}$ and $B = \{c^*_\delta(\mu') = c(k)\}$ as before. For $k\in C$, subtracting \cref{eq:plugin_post} from \cref{eq:hardmin_rep} and using \cref{eq:bound_diff}, we obtain
    \begin{equation}
        \tilde w_\delta (k \mid Z_{1:n}) - P\big(k\in M_\delta^*(\mu') \mid Z_{1:n}\big) = P(B\mid Z_{1:n}) \big( P(A\mid Z_{1:n}) - P(A\mid Z_{1:n},B) \big) = o_p(1),
    \end{equation}
    since $|P(B\mid Z_{1:n})| \leq 1$.
    Meanwhile, for $k\notin C$, \cref{eq:class_consistency} implies that both $\tilde w_\delta (k \mid Z_{1:n})$ and $P(k\in M_\delta^*(\mu') \mid Z_{1:n})$ converge to $0$ in probability, as in case (1); thus, their difference also converges to $0$ in probability. Therefore, for all $k$, regardless of whether $k\in C$,
    \begin{equation}
    \label{eq:w_tilde_diff}
        \tilde w_\delta (k \mid Z_{1:n}) - P\big(k\in M_\delta^*(\mu') \mid Z_{1:n}\big) \xrightarrow[n \to \infty]{\mathrm p} 0.
    \end{equation}
    Therefore, 
    \begin{equation}
    \label{eq:w_tilde_PM}
    \big(\tilde w_\delta (k \mid Z_{1:n})\big)_{k=1}^K = \big(P\big(k\in M_\delta^*(\mu') \mid Z_{1:n}\big)\big)_{k=1}^K + o_p(1).
    \end{equation}
    Along with the result of part (a), this implies that
    $\big(\tilde w_\delta(k\mid Z_{1:n})\big)_{k=1}^K \stackrel{\mathrm{d}}{\to} \big(\pi_k(W)\big)_{k=1}^K$ by Slutsky's theorem.
\end{proof}



\begin{proof}[\bf Proof of \cref{sec:theory:cor:instability_two2}]
    By \cref{sec:theory:thm:post_instability}(b), it suffices to show that 
    \begin{equation}
        \big(\pi_{k_1}(W), \pi_{k_2}(W) \big) \stackrel{d}{=} (U, 1- U)
    \end{equation}
    where $W \sim \cN(0, \Sigma^0)$ and $U\sim\mathrm{Uniform}(0,1)$.
    Since $M_\delta^*(\mu^0) = \{k_1, k_2\}$, the definition of $\pi_k(x)$ reduces to
    \begin{equation}
        \pi_{k_1}(x) = P (W_{k_2} - W_{k_1} \le x_{k_2} - x_{k_1}), \quad
        \pi_{k_2}(x) = P (W_{k_1} - W_{k_2} \le x_{k_1} - x_{k_2})
    \end{equation}
    for all $x \in \bbR^K$. Let $F$ be the cumulative distribution function of the Gaussian random variable $W_{k_2} - W_{k_1}$. Since $\Sigma^0$ is positive definite, $\V(W_{k_2} - W_{k_1}) >0$, so $F$ is continuous and strictly increasing. Then, $\pi_{k_1}(x) = F(x_{k_2} - x_{k_1})$ and
    \begin{equation}
        \pi_{k_2}(x) = P\big( -(W_{k_2} - W_{k_1}) \le - (x_{k_2}-x_{k_1}) \big) = 1 - F(x_{k_2} - x_{k_1}).
    \end{equation}
    Hence, for all $x$, 
    \begin{equation}\label{eq:sum_pi}
        \pi_{k_1}(x) + \pi_{k_2}(x) = 1.
    \end{equation}

    Now, we consider evaluating these functions at $x=W$.
    By the probability integral transform,
    \begin{equation}\label{eq:pi1_unif}
        \pi_{k_1}(W) = F(W_{k_2} - W_{k_1}) \sim \mathrm{Uniform}(0,1).
    \end{equation}
    By \cref{eq:sum_pi}, $\pi_{k_2}(W) = 1 - \pi_{k_1}(W)$. Combined with \cref{eq:pi1_unif}, this implies
    \begin{equation}
        \big(\pi_{k_1}(W), \pi_{k_2}(W) \big) \overset{\mathrm d}{=} (U, 1-U), \quad U \sim \mathrm{Uniform}(0,1).
    \end{equation}
    This completes the proof. 
\end{proof}

\section{Additional details on the examples}

\subsection{Gaussian mixture example details}\label{supp:mix_details}

This section provides additional details on the Gaussian mixture example in \cref{sec:examples:gmm}.

\paragraph*{Data source.} The Shapley galaxy radial velocity data are from \citet{drinkwater2004large}, available at \url{https://sites.psu.edu/astrostatistics/datasets-shapley-galaxy-dataset/}.

For a given number of mixture components $k$, we model the observations $x_1, \ldots, x_n \in \mathbb{R}$ by a $k$-component Gaussian mixture
\[
    f_k(x) = \sum_{j=1}^k \phi_{j} \,\cN \big( x \,\big\vert\,m_{j}, s_{j}^2 \big)
\]
with parameters $\theta = \big\{ ( \phi_j, m_j, s_j^2 ) \big\}_{j=1}^k$, where $\phi_1,\ldots,\phi_k \ge 0$ are mixture weights such that $\sum_{j=1}^k \phi_j = 1$, while $m_j\in\bbR$ and $s_j^2 > 0$ are the mean and variance, respectively, of the $j$th component.

\paragraph{Prior distributions.}
Following \citet{fraley2007bayesian}, we use a normal-inverse-gamma prior, which is a conjugate prior for the means and variances:
\[
    m_j \mid s_j^2 \sim \cN(m_0, s_j^2/\kappa_0), \quad s_j^2 \sim \mathrm{InverseGamma}(\nu_0/2, s_0^2/2),
\]
and we place a uniform prior on the vector of mixing proportions $\phi = (\phi_1,\ldots,\phi_k)$, that is, $\phi \sim \mathrm{Dirichlet(1,\ldots,1)}$, which is also conjugate.
In each run (for a given $n$ and $k$), we set $m_0 = \bar x$, $\kappa_0 = 0.01$, and $s_0^2 = \widehat\V(x) /k^2$, following the default prior choices in \citet{fraley2007bayesian}, and we set the degrees of freedom to $\nu_0=10$.

\paragraph{Expectation-maximization (EM) algorithm.}
We estimate $\theta = \big\{ ( \phi_j, m_j, s_j^2 ) \big\}_{j=1}^k$ via the EM algorithm, using \emph{maximum a posteriori} (MAP) updates to prevent likelihood singularities and improve stability \citep{bishop2006pattern}. 
The E-step consists of computing the responsibilities
\[
    r_{i j} = \frac{\phi_j \,\cN( x_i \mid m_{j}, s_{j}^2 )}{\sum_{l=1}^k \phi_l \,\cN(x_i \mid m_l, s_l^2)}.
\]
For the M-step, define $n_j := \sum_{i=1}^n r_{i j}$ and define the weighted mean $\bar x_j := \frac{1}{n_j} \sum_{i=1}^n r_{i j} x_i$.  The MAP M-step updates are then
\[
    \phi_j^{\mathrm{new}} = \frac{n_j}{n}, \qquad m_j^{\mathrm{new}} = \frac{\kappa_0 m_0 + n_j \bar x_j}{\kappa_0 + n_j},
\]
\[
    s_j^{2,\mathrm{new}} = \frac{\displaystyle s_0^2 + \frac{\kappa_0 n_j}{\kappa_0 + n_j} (\bar x_j - m_0)^2 + \sum_{i=1}^n r_{i j} (x_i - \bar x_j)^2 }{\nu_0 + n_j + 3}.
\]
We run this EM algorithm using the \texttt{mclust} package in R \citep{scrucca2016mclust, r2016r} with random initialization.
For each $k$, we run the algorithm on $50$ independent restarts and keep the run with the highest observed log-likelihood.

\paragraph{Bayesian posterior on the number of components.}

In this example, the model index $k$ represents the number of mixture components. 
As a comparison method, we obtain a posterior over the (unknown) number of components $k$ by placing a prior on $k$ and computing its posterior $\pi(k \mid x_{1:n})$ under a mixture of finite mixtures model (MFM; \citealp{miller2018mixture}). Specifically, we take $K \sim \mathrm{Uniform}\{1,2,\ldots,30\}$; so in particular, $P(K>30)=0$. For the mixture weights, we take $(\phi_1, \ldots, \phi_k) \mid K=k \sim \mathrm{Dirichlet(1, \ldots, 1)}$. For the component parameters, we use independent conditionally conjugate priors on the mean and variance; specifically, given $K = k$, we take
$$ m_j \sim \cN(m_0, \sigma_0^2), \quad s_j^2 \sim \mathrm{InverseGamma}(2,b),$$
independently for $j = 1,\ldots,k$, where $m_0 = (\min_i x_i + \max_i x_i)/2$ and $\sigma_0^2 =  (\max_i x_i - \min_i x_i)^2$, and we further place a $b \sim \mathrm{Gamma}\big(0.2, \, 10/\sigma_0^2 \big)$ hyperprior on $b$.

We run an MCMC algorithm implemented in the Julia package BayesianMixtures.jl (\url{https://github.com/jwmi/BayesianMixtures.jl}) targeting the MFM posterior for $200{,}000$ iterations, discard the first $100{,}000$ as burn-in, and use the remaining $100{,}000$ draws for posterior inference. 
The MCMC algorithm samples from the joint posterior over assignments of observations to components, component parameters, and the hyperparameter $b$. 
We obtain posterior draws of $k$ using the technique of \citet[Equation 3.7]{miller2018mixture}, specifically, we sample from $k|t$ for each MCMC sample of $t$, where $t$ denotes the number of occupied clusters in a posterior draw.



\subsection{Sparse MVN example details}\label{supp:mvn_details}

In this section, we provide additional details on the sparse multivariate normal (MVN) model example in \cref{sec:examples:mvn}.

\paragraph{Kullback--Leibler (KL) divergence calculation.}\label{supp:mvn_kl}
The KL divergence between two multivariate normal distributions $\cN(\mu_1, \Sigma_1)$ and $\cN(\mu_2, \Sigma_2)$ on $\bbR^k$ is
\[
    D\big(\cN(\mu_1, \Sigma_1) \,\|\, \cN(\mu_2, \Sigma_2)\big) = \frac{1}{2} \bigg( \mathrm{tr}(\Sigma_2^{-1}\Sigma_1) + (\mu_1 - \mu_2)^\texttt{T} \Sigma_2^{-1} (\mu_1 - \mu_2) - k + \log  \frac{|\det \Sigma_2|}{|\det \Sigma_1|} \bigg).
\]
In our example, $\Sigma_1 = \Sigma_2 = I$, so the KL divergence simplifies to $D(\cN(\mu_1, \Sigma_1) \,\|\, \cN(\mu_2, \Sigma_2)) = \frac{1}{2} \|\mu_1 - \mu_2\|^2$.
Thus, for a candidate MVN model $k$ whose mean is constrained to a linear subspace $\Theta_k$, the minimal KL divergence from the true model $\cN(\theta_0, I)$ is
\[
    \min_{\theta_k\in\Theta_k} D\big(\cN(\theta_0, I) \,\|\, \cN(\theta_k, I)\big) = \min_{\theta_k\in\Theta_k} \frac{1}{2} \|\theta_0 - \theta_k \|^2 = \frac{1}{2} \|\theta_0 - \theta^*_k \|^2,
\]
where $\theta^*_k := \argmin_{\theta \in \Theta_k} \|\theta_0 - \theta\|^2$, that is, $\theta^*_k$ is the Euclidean orthogonal projection of $\theta_0$ onto the parameter subspace for model $k$.



\paragraph{Coarsened posterior (c-posterior).}\label{supp:mvn_coarsening}
In \cref{sec:examples:mvn:comparison}, we compare with the c-posterior of \citet{miller2019robust}; here we provide details on this method.
For robustness to misspecification, the c-posterior is obtained by conditioning on the event that the empirical distribution generated by the model is ``close'' to that of the observed data. When closeness is measured by KL divergence (also known as relative entropy), the c-posterior can be approximated by a power posterior, obtained by raising the likelihood to a fractional power $\zeta_n$ \citep{miller2019robust}. Concretely, if $\pi_k(\theta)$ is the prior under model $k$, then the power posterior for model $k$ is
\[
    \pi_{\alpha, k} (\theta ; x_{1:n}) \propto \pi_k(\theta) \prod_{i=1}^n f_k(x_i; \theta)^{\zeta_n}, 
\]
where $\zeta_n := \alpha / (\alpha + n) \in (0,1]$ and $\alpha > 0$ is a setting that control the precision of the power posterior. In particular, the power posterior recovers the standard posterior as a special case in the limit as $\alpha \to \infty$ ($\zeta_n \to 1$).   The marginal power likelihood is defined as
\[
    m_{\alpha, k}(x_{1:n}) := \int \pi_k(\theta) \prod_{i=1}^n f_k(x_i; \theta)^{\zeta_n} d\theta,
\]
and can be used to construct a posterior on models $k$ that provides robust model inference.
For exponential families, the power likelihood has the attractive property that it preserves conjugacy.  Thus, when using conjugate priors with exponential families, the power posterior and power marginal likelihood take closed form expressions.

We specialize the c-posterior framework to the sparse MVN mean model in \cref{sec:examples:mvn}. Recall that we observe $x_1, \ldots x_n \overset{\mathrm{iid}}{\sim} \cN(\theta_0, I)$ where $\theta_0\in\bbR^6$ is the true mean, which is treated as unknown. Consider the $K=7$ models in \cref{tab:sparse_mvn_models}, with a uniform prior over models, $\pi(k)=1/K$. For each model $k = 1,\ldots,K$, let $J_k \subseteq \{1, \ldots, 6\}$ be the set of free coordinates and define $d_k = |J_k|$; coordinates in $J_k^c$ are fixed at $0$. Denote the sample mean by $\bar{x} = (\bar{x}_1, \ldots, \bar{x}_6)^\texttt{T}$. 
Since the covariance matrix is $I$, the power likelihood for each model is
\begin{align}
\label{eq:power-lik1}
     \prod_{i=1}^n f_k(x_i; \theta)^{\zeta_n} \propto \prod_{i=1}^n \exp \Big( - \frac{\zeta_n}{2} \|x_i - \theta\|^2 \Big) \propto \exp \Big( - \frac{\zeta_nn}{2} \| \bar x -  \theta\|^2 \Big).
\end{align}
Denote $\theta_{J_k} = (\theta_j : j\in J_k)$ and likewise for $\theta_{J_k^c}$, as well as $\bar{x}_{J_k} = (\bar{x}_j : j \in J_k)$.
Under model $k$, we have $\theta_{J_k^c} = 0$,   and hence, $\|\bar x - \theta \|^2 = \|\bar x_{J_k} - \theta_{J_k} \|^2 + \|\bar x_{J_k^c} \|^2$. Therefore,
\begin{align}
\label{eq:power-lik2}
    \prod_{i=1}^n f_k(x_i; \theta)^{\zeta_n} \propto \exp\Big(-\frac{\zeta_n n}{2} \| \bar x_{J_k} - \theta_{J_k} \|^2 \Big) \exp\Big(-\frac{\zeta_n n}{2} \| \bar x_{J_k^c} \|^2\Big).
\end{align}

For each model $k$, we place a Gaussian prior on the free coordinates and fix the rest to zero:
\[
    \theta_{J_k} \sim \cN\big( \theta_{0J_k}, \kappa_0^{-1} I_{d_k}\big), \qquad \theta_{J_k^c} = 0,
\]
where $\theta_{0J_k} \in \mathbb{R}^{d_k}$ is a fixed prior mean and $\kappa_0 >0$ is a fixed prior precision. Then, by conjugacy, the power posterior on $\theta_{J_k}$ under model $k$ is 
\[
    \theta_{J_k} \;\big|\; x_{1:n}, \alpha \sim \cN \bigg( \frac{\kappa_0 \theta_{0J_k} + \zeta_nn\bar{x}_{J_k}}{\kappa_0 + \zeta_nn}, \, (\kappa_0 + \zeta_nn)^{-1} I_{d_k} \bigg).
\]

Observe that the constant of proportionality does not depend on $k$ in \cref{eq:power-lik1}, and hence also in \cref{eq:power-lik2}.  Thus, writing the first factor of \cref{eq:power-lik2} in the form of a Gaussian density, we have
\begin{align}
\prod_{i=1}^n f_k(x_i; \theta)^{\zeta_n} = \frac{C}{(\zeta_n n)^{d_k/2}}\mathcal{N}(\bar x_{J k}\mid \theta_{J k}, (\zeta_n n)^{-1} I_{d_k}) \exp\Big(-\frac{\zeta_n n}{2} \| \bar x_{J_k^c} \|^2\Big)
\end{align}
where $C$ does not depend on $\theta$ or $k$.
Then, the marginal power likelihood is
\begin{align*}
m_{\alpha,k}(x_{1:n}) &= \int \pi(\theta_{J_k})\prod_{i=1}^n f_k(x_i; \theta)^{\zeta_n} d \theta_{J_k} \\
&= \frac{C\exp(-\frac{\zeta_n n}{2} \| \bar x_{J_k^c} \|^2)}{(\zeta_n n)^{d_k/2}} \int \mathcal{N}(\bar x_{J_k}\mid \theta_{J k}, (\zeta_n n)^{-1} I_{d_k})\, \mathcal{N}(\theta_{J_k} \mid \theta_{0 J_k}, \kappa_0^{-1} I_{d_k}) d\theta_{J_k} \\
&= \frac{C\exp(-\frac{\zeta_n n}{2} \| \bar x_{J_k^c} \|^2)}{(\zeta_n n)^{d_k/2}}\mathcal{N}(\bar x_{J_k} \mid \theta_{0 J_k}, ((\zeta_n n)^{-1} + \kappa_0^{-1})I_{d_k}).
\end{align*}

Therefore, the log marginal power likelihood is
\begin{align}
\label{eq:log-power-marlik}
    \log m_{\alpha, k} (x_{1:n}) = c + \frac{d_k}{2} \log\Big(\frac{\kappa_0}{\kappa_0 + \zeta_n n}\Big) - \frac{\kappa_0\zeta_nn}{2(\kappa_0 + \zeta_nn)} \|\bar x_{J_k} - \theta_{0J_k}\|^2 - \frac{\zeta_nn}{2}\|\bar x_{J_k^c}\|^2
\end{align}
where $c$ is a constant that does not depend on $k$.
Since we assume a uniform prior over models, $\pi(k) = 1/K$, the coarsened posterior on models is 
\begin{align}
\label{eq:cpost-k}
\pi_{\alpha}(k; x_{1:n}) = \frac{m_{\alpha, k} (x_{1:n})}{\sum_{j=1}^K m_{\alpha, j} (x_{1:n})}.
\end{align}

\cref{eq:log-power-marlik,eq:cpost-k} yield a  closed-form expression that enables exact computation of the c-posterior probabilities for the $K$ models, which we use in \cref{sec:examples:mvn:comparison}.

\subsection{Thermal performance curve example details}\label{supp:tpc}

This section provides additional details on the example from \cref{sec:examples:tpc}.

\paragraph*{Data source.} The \emph{Prorocentrum minimum} growth-rate data are from \citet{kontopoulos2024dat}, available at \url{https://doi.org/10.6084/m9.figshare.24106161.v3}.

\paragraph{MLEs for noise model and flexible model.} 
The MLE for the noise model is $\hat{\varphi} = \frac{1}{n}\sum_{i=1}^n y_i$ and $\hat{\sigma}_{\mathrm{noise}}^2 = \frac{1}{n} \sum_{i=1}^n (y_i-\hat{\varphi})^2$.
The MLE for the flexible model is 
$\hat{\beta}_t = \frac{1}{n_t}\sum_{i=1}^n y_i\mathds{1}(t_i = t)$ where $n_t = \sum_{i=1}^n \mathds{1}(t_i = t)$, and $\hat{\sigma}_{\mathrm{flex}}^2 = \frac{1}{n} \sum_{i=1}^n (y_i-\hat{\beta}_{t_i})^2$.

\paragraph{Functional forms of the TPC candidates.} 
For completeness, we state the functional forms of the TPC candidate models here. 
All formulas are taken from the supplementary material of \citet{kontopoulos2024no}. Temperature parameters are real-valued, and in nonlinear least squares fitting we impose the constraints $t_{pk} \in [0,150]$ and $t_{\mathrm{min}} \le t_{pk} \le t_{\mathrm{max}}$.
\begin{enumerate}
    \item Eubank (3 parameters)
    \[
        g_1(t; \, a,b, t_{pk}) = \frac{a}{(t - t_{pk})^2 + b}, \qquad a>0, \; b>0.
    \]

    \item Gaussian (3 parameters)
    \[
        g_2(t; \, a, b_{pk}, t_{pk}) = b_{pk} \exp \Big( -\frac{1}{2 a^2} (t - t_{pk})^2 \Big), \qquad a>0, \; b_{pk}>0.
    \]

    \item Mitchell--Angilletta (3 parameters)
    \[
        g_3(t; \, a, b, t_{pk}) = \frac{a}{2b}  \bigg( 1 + \cos \Big( \frac{t - t_{pk}}{b} \cdot \pi \Big) \bigg), \qquad a>0, \; b>0.
    \]

    \item Bri\`ere I (3 parameters)
    \[
        g_4(t; \, a, t_{\mathrm{min}}, t_{\mathrm{max}}) = a\, t\, (t-t_{\mathrm{min}}) \sqrt{t_{\mathrm{max}} - t}, \qquad a>0, \; t_{\mathrm{min}} < t < t_{\mathrm{max}}.
    \]

    \item Weibull (4 parameters)
    \begin{align*}
        g_5(t; \, b,c, b_{pk}, t_{pk}) &= b_{pk}  \Big(\frac{c-1}{c}\Big)^{(1-c)/c}  \bigg( \frac{t - t_{pk}}{b} + \Big( \frac{c-1}{c} \Big)^{1/c} \bigg)^{c-1}  \\
        & \quad \times \exp \bigg( - \left(\frac{t - t_{pk}}{b} + \Big(\frac{c-1}{c} \Big)^{1/c} \right)^c + \frac{c-1}{c} \bigg),
    \end{align*}
    $b>0, \; c>0, \; b_{pk}>0$.
    
    \item Modified Gaussian (4 parameters)
    \[
        g_6(t; \, a, b, b_{pk}, t_{pk}) = b_{pk}  \exp \Big( -\frac{1}{2} \Big(\frac{|t - t_{pk}|}{a} \Big)^b \Big), \qquad a>0, \; b>0, \; b_{pk} >0.
    \]

    \item Extended Bri\`ere (5 parameters)
    \[
        g_7(t; \, a, b, c, t_{\mathrm{min}}, t_{\mathrm{max}}) = a \, t \, (t-t_{\mathrm{min}})^b \, (t_{\mathrm{max}} - t)^c, 
    \]
    $a>0, \; b>0, \; c>0, \; t_{\mathrm{min}} < t < t_{\mathrm{max}}$.
    
    \item Fifth-order polynomial (Poly-5; 6 parameters)
    \[
        g_8(t; \, a, b, c, d, f, g) = a + b t + c t^2 + d t^3 + f t^4 + g t^5.
    \]

    \item (Extended) Sharpe--Schoolfield (7 parameters)
    \begin{align*}
        & g_9(T; \, a, b_0, t_{L50}, t_{H50}, \Delta H_L, \Delta H_A^{\ddagger}, \Delta H_{H}) \\ &\qquad = \frac{\displaystyle b_0 \frac{T}{T_{\mathrm{ref}}} \exp \bigg( \frac{\Delta H_A^{\ddagger} }{R} \Big( \frac{1}{T_{\mathrm{ref}}} - \frac{1}{T} \Big) \bigg) }{\displaystyle a + \exp \bigg( \frac{\Delta H_L}{R} \Big(\frac{1}{T_{L50}} - \frac{1}{T} \Big) \bigg) + \exp \bigg( \frac{\Delta H_H}{R} \Big(\frac{1}{T_{H50}} - \frac{1}{T} \Big) \bigg) }, 
    \end{align*}
    $a>0, \; b_0 >0, \; \Delta H_L >0, \; \Delta H_A^{\ddagger} >0, \; \Delta H_{H} >0$. Here, $T_{\mathrm{ref}} = 273.5$ (temperatures in Kelvin) and $R = 1.987$.
\end{enumerate}

\subsection{Population structure admixture example details}\label{supp:admix_details}

Here, we provide additional details on the admixture model example in \cref{sec:examples:population-structure}.

\paragraph*{Data source.} The brook trout microsatellite genotype data are from \citet{erdman2022broadscale}, USGS data release available at \url{https://www.sciencebase.gov/catalog/item/611d264cd34e40dd9c01284e}.

\paragraph{STRUCTURE output: Estimated Ln Prob of Data}
STRUCTURE performs posterior inference for the admixture model using MCMC, and reports an ``Estimated Ln Prob of Data'' (ELP), which is intended as a rough approximation to the log marginal likelihood $\log p(x \mid k)$. For LaD, we use this quantity as a fit diagnostic to filter out failed MCMC runs; specifically, we run the STRUCTURE algorithm 20 times with random restarts, and keep the run with the highest ELP. This quantity is also used for implementing Evanno's method, in which the ELP is used to compute $L(k)$.

As described in \citet{pritchard2000inference}, the ELP is constructed from the Bayesian deviance
\[
    \mathrm{dev}(\phi, q, s) := -2 \log p(x \mid \phi, q, s)
\]
evaluated on the MCMC output. 
Specifically, \citet{pritchard2000inference} use a normal approximation to show that
\[
    -2 \log p(x \mid k) \approx \E \big(\mathrm{dev}(\phi, q, s) \mid x,k \big) + \frac{1}{4} \V \big(\mathrm{dev}(\phi, q, s) \mid x,k \big).
\]
In practice, $\E \big(\mathrm{dev}(\phi, q, s) \mid x,k \big)$ and $\V \big(\mathrm{dev}(\phi, q, s) \mid x,k \big)$ are approximated by Monte Carlo averages over the post–burn-in MCMC draws. The resulting approximation of $\log p(x \mid k)$ is what STRUCTURE returns as ``Estimated Ln Prob of Data'' (ELP).

\paragraph{Per-individual negative log-likelihoods.}

For the LaD technique, we need to compute the per-individual negative log-likelihood values $\ell_k(x_i; \hat\phi,\hat\psi) = -\log p(x_i\mid \hat\phi,\hat\psi)$, integrating out $q_i$ and $s_{i l a}$ for all $l,a$, given the point estimate $(\hat\phi$, $\hat\psi)$ for model $k$.  We compute these using Monte Carlo approximations, as follows.
Fix a given model $k$, and let $\hat\phi$ and $\hat\psi$ denote the posterior means obtained from MCMC. First, observe that 
\[
    p(x_{i l a}=v \mid \phi, q_i) = \sum_{j=1}^k p(s_{i l a} =j \mid q_i) \, p(x_{i l a}=v \mid s_{i l a}=j,\, \phi) = \sum_{j=1}^k q_{ij} \phi_{jl}(v).
\]
We write $\phi_{jl}(v)$ instead of $\phi_{jlv}$ for visual clarity of the notation in what follows.
Thus, for the whole genotype of individual $i$,
\[
    p(x_i \mid \phi, \psi) = \int_{\Delta^{k-1}} \bigg( \prod_{l=1}^L \prod_{a=1}^2 \sum_{j=1}^k q_{ij} \phi_{jl}(x_{i l a}) \bigg) \mathrm{Dirichlet}(q_i\mid \psi) \, dq_i, 
\]
where the integral is over the simplex $\Delta^{k-1} = \{q \in \mathbb{R}_+^k: \sum_{j=1}^k q_j=1\}$ and $\mathrm{Dirichlet}(q \mid \psi) \propto \prod_{j=1}^k q_j^{\psi_j-1}$. 
For each individual $i$, draw $q_i^{(b)} \sim \mathrm{Dirichlet}(\hat\psi)$ for $b=1,\ldots, B$. Then, for each sample $b$, compute 
\[
    \log p(x_i \mid \hat\phi, q_i^{(b)}) = \sum_{l=1}^L \sum_{a=1}^2 \log \bigg( \sum_{j=1}^k q_{ij}^{(b)} \hat\phi_{jl} (x_{i l a}) \bigg).
\]
We would like to form a Monte Carlo approximation $\tilde{p}(x_i \mid \hat\phi,\hat\psi) = \frac{1}{B} \sum_{b=1}^B p(x_i \mid \hat\phi, q_i^{(b)})$, but we need to do so in a numerically stable way.  To this end, we use the log-sum-exp trick; that is, we compute $C := \max_b  \log  p(x_i \mid \hat\phi, q_i^{(b)})$ and then 
\[
    \log \tilde{p}(x_i \mid \hat\phi,\hat\psi) = \log \bigg(\frac{1}{B} \sum_{b=1}^B \exp\Big( \log  p(x_i \mid \hat\phi, q_i^{(b)}) - C\Big) \bigg) + C.
\]
This yields Monte Carlo approximations of the per-individual negative log-likelihood values via $\tilde\ell_k(x_i; \hat\phi,\hat\psi) = -\log \tilde{p}(x_i \mid \hat\phi,\hat\psi)$.
The additional computational burden of this procedure is not significant relative to the time required for posterior sampling with the MCMC algorithm.

\section{Alternative LaD model with diagonal covariance}\label{supp:cov} 

Our proposed LaD model (\cref{sec:method:bayesinf}) places a NIW prior on $(\mu, \Sigma)$, which uses a full covariance matrix for $\Sigma$.
In \cref{sec:examples:mvn:comparison}, we compare with an alternative LaD model that employs a diagonal covariance matrix.
We provide the details of this diagonal covariance approach here.

Assume $\Sigma = \diag(\sigma^2_1, \ldots, \sigma^2_K)$, and model each coordinate $k$ independently with a normal-inverse-gamma prior.  Specifically, for each $k$, consider modeling
\[
    Z_{ik} \mid \mu_k, \sigma^2_k \overset{\mathrm{iid}}{\sim} \cN(\mu_k, \sigma^2_k), \quad    \sigma_k^2 \sim \mathrm{InverseGamma}(a_{0k}, b_{0k}), \quad \mu_k \mid \sigma_k^2 \sim \cN\big(\mu_{0k}, \sigma_k^2/\lambda_{0k}\big),
\]
where $\mathrm{InverseGamma}(a, b)$ denotes an inverse-gamma distribution with shape $a$ and scale $b$. 
Let $\oline{Z}_{.k} = \frac{1}{n} \sum_{i=1}^n Z_{ik}$. Then the posterior is, independently for $k = 1,\ldots,K$,
\[
    \sigma^2_k \mid Z_{1:n} \sim \mathrm{InverseGamma}(a_{nk}, b_{nk}), \quad \mu_k \mid \sigma^2_k, Z_{1:n} \sim \cN \big(\mu_{nk}, \sigma^2_k/\lambda_{nk} \big),
\]
where
\[
    \lambda_{nk} = \lambda_{0k} +n, \quad \mu_{nk} = \frac{\lambda_{0k}\mu_{0k} + n \oline{Z}_{.k}}{\lambda_{nk}}, \quad a_{nk} = a_{0k} + \frac{n}{2},
\]
\[
    b_{nk} = b_{0k} + \frac{1}{2} \sum_{i=1}^n (Z_{ik} - \oline{Z}_{.k})^2 + \frac{1}{2} \frac{\lambda_{0k}n}{\lambda_{nk}} (\oline{Z}_{.k} - \mu_{0k})^2.
\]
We set the hyperparameters to match the NIW prior $\Sigma \sim \mathrm{InverseWishart} (\Psi_0, \nu_0)$ and $\mu \mid \Sigma \sim \cN(\mu_0, \Sigma/\lambda_0)$, by choosing
\[
    a_{0k} = \frac{\nu_0 - K + 1}{2}, \quad b_{0k} = \frac{\Psi_{0, kk}}{2}, \quad \mu_{0k} = (\mu_0)_k, \quad \lambda_{0k} = \lambda_0,
\]
for $k = 1,\ldots,K$.
With this choice, the NIW marginal for $\Sigma_{kk}$ match $\mathrm{IG}(a_{nk}, b_{nk})$ and yields $\mu_k \mid \sigma^2_k \sim \cN(\mu_{0k}, \sigma^2_k / \lambda_{0k})$ under independence.


\putbib 
\end{bibunit}

\end{document}